\def\dOi{11(3:17)2015}
\theoremstyle{plain}
\newenvironment{definition}{\begin{defi} \rm }{\end{defi}}
\newenvironment{proposition}{\begin{prop} \rm }{\end{prop}}
\newenvironment{lemma}{\begin{lem} \rm }{\end{lem}}
\newenvironment{example}{\begin{exa} \rm }{\end{exa}}
{\rm}
\newcommand{\mb}[1]{\mathbb{#1}}
\newcommand{\mc}[1]{\mathcal{#1}}
\newcommand{\nil}{\mathbf{0}}
\newcommand{\mv}[1]{\mathrel{\stackrel{#1}{\rightarrow}}}
\newcommand{\var}[1][]{\ensuremath{\mathit{var}_{#1}}}
\newcommand{\depth}{\ensuremath{\mathit{depth}}}
\newcommand{\myoverline}[1]{\mb{S}(#1)}
\begin{document}

\title[On the Axiomatizability of Impossible Futures]
{On the Axiomatizability of Impossible Futures}

\author[T.~Chen]{Taolue Chen\rsuper a}   
\address{{\lsuper a}Middlesex University London, Department of Computer Science,
The Burroughs, London NW4 4BT, United Kingdom} 
\email{t.chen@mdx.ac.uk}  
\thanks{}   

\author[W.~Fokkink]{Wan Fokkink\rsuper b} 
\address{{\lsuper b}VU University Amsterdam, Department of Computer Science,
De Boelelaan 1081a, 1081 HV Amsterdam, The Netherlands}
\email{w.j.fokkink@vu.nl}  
\thanks{}    

\author[R.~van Glabbeek]{Rob van Glabbeek\rsuper c}
\address{{\lsuper c}NICTA, Locked Bag 6016, Sydney, NSW 1466, Australia,
\and The University of New South Wales, School of Computer Science
and Engineering, Sydney, NSW 2052, Australia}
\email{rvg@cs.stanford.edu}
\thanks{{\lsuper c}NICTA is funded by the Australian Government through the
    Department of Communications and the Australian Research Council
    through the ICT Centre of Excellence Program.}


\keywords{Concurrency Theory, Equational Logic, Impossible Futures,
Process Algebra}


%

\begin{abstract}
A general method is established to derive a ground-complete axiomatization for a weak semantics from
such an axiomatization for its concrete counterpart, in the context of the process algebra BCCS\@.
This transformation moreover preserves $\omega$-completeness.
It is applicable to semantics at least as coarse as impossible futures semantics.
As an application, ground- and $\omega$-complete axiomatizations are derived for weak failures,
completed trace and trace semantics. We then present a finite, sound, ground-complete axiomatization
for the concrete impossible futures \emph{preorder},
which implies a finite, sound, ground-complete axiomatization for
the \emph{weak} impossible futures preorder. In contrast,
we prove that no finite, sound axiomatization for BCCS modulo concrete and weak
impossible futures \emph{equivalence} is ground-complete.
If the alphabet of actions is infinite, then the aforementioned ground-complete
axiomatizations are shown to be $\omega$-complete. If the
alphabet is finite, we prove that the inequational theories of BCCS
modulo the concrete and weak impossible futures preorder lack such a finite basis.
\end{abstract}

\maketitle


\section{Introduction}
Labeled transition systems constitute a fundamental model of
concurrent computation. Processes are captured by explicitly describing
their states and the transitions from state to state
together with the actions that produce these transitions. A wide range
of notions of behavioral semantics have been proposed, with the aim
to identify those states that afford the same observations.
Notably, van Glabbeek \cite{Gla01} presented the ``\emph{linear
time -- branching time spectrum} I" of behavioral semantics for
finitely branching, \emph{concrete},\footnote{\emph{Concrete}
  processes do not feature the hidden action $\tau$. \emph{Abstract}
  processes, with $\tau$, come with \emph{strong} semantics,
  treating $\tau$ just as any other action, or \emph{weak} ones,
  allowing a degree of abstraction from $\tau$ steps.}
sequential processes. These
semantics are based on simulation notions or on decorated traces.
Fig.~\ref{fig:spectrum} depicts this spectrum,\footnote{Impossible futures semantics was
missing in the original spectrum I \cite{Gla01}, because it was studied seriously only from 2001
on, the year that \cite{Gla01} appeared.}
where an arrow from one semantics to another means that
the target of the arrow is coarser, i.e.\ less discriminating, than
the source. In \cite{Gla93}, 155 weak semantics, which take into
account the hidden action $\tau$, are surveyed. They constitute
the ``\emph{linear time -- branching time spectrum} II" for
finitely branching, \emph{abstract}, sequential processes.

\begin{figure}
\begin{center}
\input{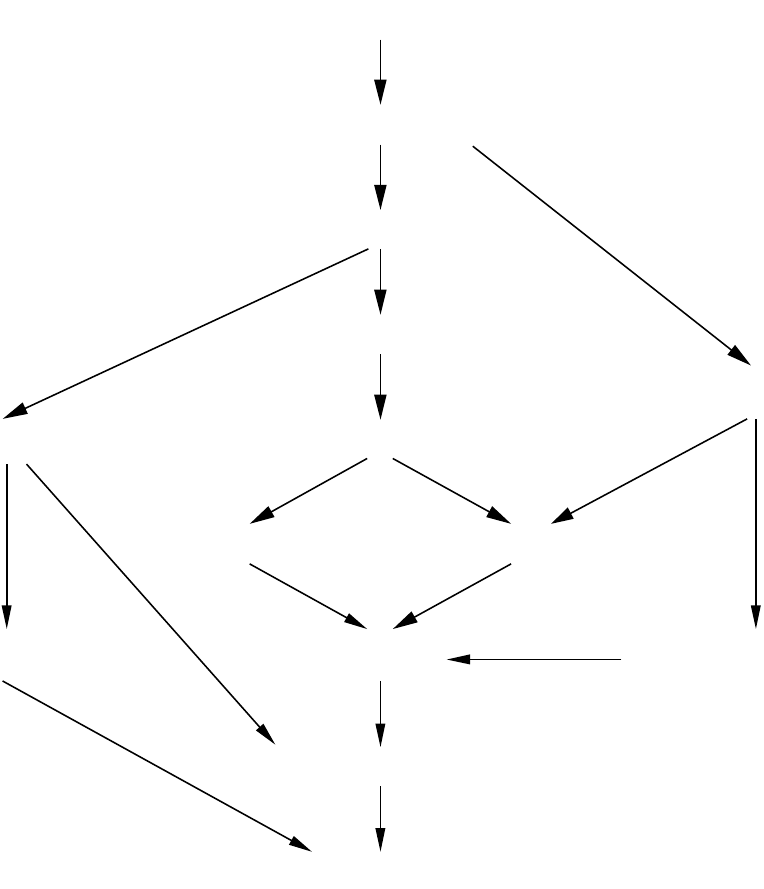_t}
\end{center}
\caption{Linear time-branching time spectrum} \label{fig:spectrum}
\end{figure}

In this paper, we mainly study impossible futures semantics \cite{Vog92,VM01}, which
is a natural variant of possible futures semantics \cite{RB81}.
It is also related to fair testing semantics \cite{RV07}.
Weak impossible futures equivalence is the
coarsest congruence with respect to choice and parallel
composition that contains weak bisimilarity with explicit
divergence, respects deadlock/livelock traces, and assigns
unique solutions to recursive equations \cite{GV06}.

The process algebra BCCSP plays a fundamental role in the study of
concrete semantics. It contains only basic process
algebraic operators from CCS and CSP, but is sufficiently powerful
to express all finite synchronization trees (without
$\tau$-transitions). Van Glabbeek \cite{Gla01} associated with
most behavioral equivalences in his spectrum a \emph{sound}
axiomatization, to equate closed BCCSP terms that are behaviorally
equivalent. These axiomatizations were shown to be
\emph{ground-complete}, meaning that all behaviorally equivalent closed BCCSP
terms can be equated. The process algebra BCCS (see, e.g., \cite{Gla97}) extends BCCSP with
$\tau$, playing the same role as BCCSP in the research of weak semantics.

An \emph{$\omega$-complete} axiomatization enjoys the property that if
all closed instances of an
equation can be derived from it, then the equation itself can
be derived from it. In universal algebra, such an axiomatization
is referred to as a \emph{basis} for the equational theory of the
algebra it axiomatizes. Groote \cite{Gro90} developed a technique
of ``inverted substitutions'' to prove that an axiomatization is
$\omega$-complete, and proved for some equivalences in the
``linear time -- branching time spectrum I" that their equational
theory in BCCSP has a finite basis.

In \cite{AFIL05,CFLN08}, for each preorder and equivalence in
the ``linear time -- branching time spectrum I"
it was determined whether a finite, sound, ground-complete axiomatization
exists. And if so, whether a finite basis exists for the (in)equational theory.
However, for concrete impossible futures semantics the
(in)equational theory remained unexplored.

With regard to the axiomatizability of weak semantics, relatively little
is known compared to concrete semantics.
For some semantics in the ``linear time --
branching time spectrum II" \cite{Gla93}, a sound and
ground-complete axiomatization has been given, in the
setting of BCCS (see, e.g., \cite{Gla97}). Moreover, a finite basis
has been given for weak, delay, $\eta$- and
branching bisimulation semantics \cite{Mi89a,vG93a}.
The inequational theory of BCCS modulo the weak impossible futures preorder
was studied in \cite{VM01}, which offers a finite, sound, ground-complete axiomatization.
Voorhoeve and Mauw also proved that their axiomatization is $\omega$-complete.
It is worth noting that an infinite alphabet of actions is assumed implicitly
\cite[p.\ 7]{VM01}, because a different action is required for each variable.

The current paper studies the axiomatizability of BCCSP and BCCS for semantics
at least as coarse as impossible futures semantics.
In summary, we obtain the following results.

\begin{enumerate}
  \item A link is established between the axiomatizability of concrete and weak semantics.
For any semantics at least as coarse as impossible futures
semantics, an algorithm is provided to turn a ground-complete axiomatization of the
\emph{concrete} version into a ground-complete axiomatization of the corresponding
\emph{weak} version. Moreover, if the former axiomatization is $\omega$-complete, then so is the latter.

\medskip

\item As an application of this algorithm, we derive finite, sound, ground- and
$\omega$-complete axiomatizations for the weak \emph{trace}, \emph{completed trace}
and \emph{failures} preorders and equivalences.
Failures semantics plays a prominent role in the process algebra
CSP \cite{BHR84}. For convergent processes, it coincides with
testing semantics \cite{DH84,RV07}, and thus is the coarsest
congruence for the CCS parallel composition that respects deadlock
behavior. A ground-complete axiomatization for weak
failures equivalence was already given in \cite{Gla97}.

\medskip

\item We provide a finite, sound, ground-complete axiomatization for
BCCSP modulo the concrete impossible futures \emph{preorder} $\precsim_{\rm IF}$.\footnote{In case of an infinite
alphabet of actions, occurrences of action names in axioms should
be interpreted as variables, as else most of the axiomatizations would be infinite.}
(By contrast, no such axiomatization exists for the \emph{possible} futures
preorder \cite{AFGI04}.) Using (1), a finite, sound, ground-complete
axiomatization for the \emph{weak} impossible futures preorder $\precsim_{\rm WIF}$ is obtained.

\medskip

\item We prove that BCCS modulo weak impossible futures \emph{equivalence} $\simeq_{\rm WIF}$
does not have a finite, sound, ground-complete axiomatization. Likewise, we prove that
BCCSP modulo $\simeq_{\rm IF}$ does not have a finite, sound, ground-complete axiomatization.
The infinite families of equations that we use to prove these negative results
are also sound modulo (weak resp.\ concrete) 2-nested simulation equivalence.
Therefore these negative results apply to all BCCS- and BCCSP-congruences that are at least
as fine as impossible futures equivalence and at least as
coarse as 2-nested simulation equivalence. This infers some
results from \cite{AFGI04}, where among others concrete 2-nested simulation
and possible futures equivalence were considered.

\medskip

\item We investigate $\omega$-completeness for impossible futures semantics.
First, we prove that if the alphabet of actions is \emph{infinite}, then the aforementioned ground-complete
axiomatization for BCCS modulo $\precsim_{\rm IF}$ is $\omega$-complete.
Here we apply the technique of inverted substitutions from \cite{Gro90}. Only,
that technique was originally developed for equivalences. Therefore, as an aside, we
adapt this technique in such a way that it applies to preorders.
By (1), this result carries over to $\precsim_{\rm WIF}$.
Second, we prove that in case of a \emph{finite} alphabet of actions, the inequational theories
of BCCS modulo $\precsim_{\rm WIF}$ and of BCCSP modulo $\precsim_{\rm IF}$ do not have a finite basis.
\end{enumerate}

\medskip

\noindent To achieve the negative results, we employ what in
\cite[Sect.~2.3]{AFIL05} is called the proof-theoretic technique.
That is, to prove the nonexistence of a finite axiomatization for
an equivalence $\equiv$ (resp.\ preorder $\sqsubseteq$), it
suffices to provide an infinite family of (in)equations
$t_n \approx u_n$ (resp.\ $t_n\preccurlyeq u_n$) ($n=1,2,3,\dots$)
that are all sound modulo $\equiv$ (resp.\ $\sqsubseteq$), and to
associate with every finite set of sound (in)equations $E$ a
property $P_E$ that holds for all (in)equations derivable from $E$, but does
not hold for at least one of the equations $t_n\approx u_n$ (resp.\ inequations
$t_n\preccurlyeq u_n$). This implies that $t_n\approx u_n$
(resp.\ $t_n\preccurlyeq u_n$) is not derivable from $E$. It follows
that every finite set of sound (in)equations is necessarily incomplete,
so $\equiv$ (resp.\ $\sqsubseteq$) lacks a finite axiomatization.
On top of this, a saturation principle is introduced, to transform
a single summand into a large collection of semi-saturated summands.

Impossible futures semantics is the first example that, in
the context of BCCSP/BCCS, affords a ground-complete axiomatization
modulo the preorder, while missing a ground-complete axiomatization
modulo the equivalence. This surprising
fact suggests that if one wants to show $p \simeq_{\rm IF} q$,
in general one has to resort to deriving $p
\precsim_{\rm IF} q$ and $q \precsim_{\rm IF} p$ separately,
instead of proving it directly. In \cite{AFI07,FG07} an algorithm
is presented which produces from an axiomatization for BCCSP
modulo a preorder, an axiomatization for BCCSP modulo the
associated equivalence. If the original axiomatization for the
preorder is ground-complete or $\omega$-complete, then so is the
resulting axiomatization for the equivalence. In \cite{CFG08}, we
have shown that the same algorithm applies equally well to weak
semantics. However, these algorithms apply only to semantics that
are at least as coarse as ready simulation semantics. Since
impossible futures semantics is incomparable to ready simulation
semantics, it falls outside the scope of \cite{AFI07,FG07,CFG08}.
Interestingly, our results yield that no such algorithm can exist for
impossible futures semantics.

The paper is structured as follows.
Sect.~\ref{sec2} presents basic definitions regarding
the studied semantics, the process algebra's BCCSP and BCCS, and (in)equational logic.
Sect.~\ref{sec:link} describes a transformation of an axiomatization
for a concrete to an axiomatization for a corresponding weak semantics.
In Sect.~\ref{sec:apps} this transformation is applied to
failures, completed trace and trace semantics.
Sect.~\ref{sec:ground} provides finite, sound, ground-complete
axiomatizations for $\precsim_{\rm IF}$ and $\precsim_{\rm WIF}$;
it also presents the aforementioned negative result for $\simeq_{\rm
IF}$ and $\simeq_{\rm WIF}$. Sect.~\ref{sec:omega} is devoted to
the proofs of the positive and negative results regarding
$\omega$-completeness. Sect.~\ref{sec:con} concludes the paper.

This paper combines and extends two previous papers in conference proceedings \cite{CF08} and
\cite{CFG09}. In particular, \cite{CF08} dealt with the concrete impossible futures semantics and \cite{CFG09} extended
it to weak impossible futures and weak failures semantics. Here, new results are presented in Sect.~\ref{sec:link},
which yield a much simplified proof for the results regarding weak failures semantics
given in \cite{CFG09}, and a unified treatment of concrete and weak impossible futures semantics.

\section{Preliminaries} \label{sec2}

\subsection{Labeled transition systems}

Let $A$ be a nonempty, countable set of \emph{concrete} (a.k.a.\ \emph{observable}, \emph{external},
\emph{visible}) actions, which is ranged over by $a,b,c$. The distinguished symbol $\tau$
denotes a \emph{hidden} (a.k.a. \emph{unobservable}, \emph{internal}, \emph{invisible}) action. We assume that
$\tau\notin A$ and write $A_\tau$ for $A\cup\{\tau\}$, which is ranged over by $\alpha$.

\begin{definition}
A \emph{labeled transition system} (LTS) consists of a set of \emph{states} $S$, with typical element $s$, and
a \emph{transition relation} ${\rightarrow}\subseteq S\times A_\tau \times S$.
\end{definition}
We introduce some notation:
$s\mv{\alpha}s'$ means $(s,\alpha,s')$ is an element of $\rightarrow$;
by $s\mv{\alpha}$ we denote that $s\mv{\alpha}s'$ for some $s'$,
and $s\not\mv{\alpha}$ is the negation of this property;
for $\alpha_1\cdots \alpha_k$ a sequence of labels,
$s\mv{\alpha_1\cdots \alpha_k}s'$ means there exist states
$s_0,\dots,s_k$ such that $s=s_0\mv{\alpha_1}\cdots\mv{\alpha_k}s_k=s'$;
the empty sequence is denoted by $\varepsilon$;
we define $\mc{I}(s)=\{\alpha\in A_\tau\mid s\mv{\alpha}\}$;
we write $\Rightarrow$ for the transitive-reflexive closure of $\mv{\tau}$, i.e.,
$\Rightarrow\,\stackrel{\mathrm{def}}{=}(\mv{\tau})^*$;
and $s\Rightarrow\mv{\alpha}$ means there are two states
$s',s''$ with $s\Rightarrow s' \mv{\alpha} s''$.

\subsection{Decorated trace semantics}

\begin{definition}\label{def:traces}
~
\begin{itemize}
\item A sequence $a_1\cdots a_k \in A^*$ with $k\geq 0$ is a
  \emph{trace} of a state $s$ if there is a state $s'$ such that
  $s \mv{a_1\cdots a_k} s'$. It is a \emph{completed trace} of $s$ if moreover
  $\mc{I}(s')=\emptyset$. We write $\mc{T}(s)$ (resp.\ $\mc{CT}(s)$) for the set of
  traces (resp.\ completed traces) of state $s$.
We write $s_1\precsim_{\rm T}s_2$ (resp.\ $s_1\precsim_{\rm CT}s_2$)
if $\mc{T}(s_1)\subseteq \mc{T}(s_2)$ (resp.\ $\mc{CT}(s_1)\subseteq \mc{CT}(s_2)$).\footnote{In \cite{Gla01},
$s_1\precsim_{\rm CT} s_2$ is defined to hold iff $\mc{CT}(s_1)\subseteq \mc{CT}(s_2)$
   \emph{and} $\mc{T}(s_1)\subseteq \mc{T}(s_2)$. Here we can skip the latter condition,
   as we will work with finite transition systems, where $\mc{CT}(s_1)\subseteq \mc{CT}(s_2)$
   implies $\mc{T}(s_1)\subseteq \mc{T}(s_2)$.}

\item A sequence $a_1\cdots a_k \in A^*$ with $k\geq 0$ is a \emph{weak trace} of
  a state $s$ if there is a state $s'$ such that $s\Rightarrow \mv{a_1} \Rightarrow
  \cdots \Rightarrow \mv{a_k} \Rightarrow s'$.
  It is a \emph{weak completed trace} of $s$ if moreover $\mc{I}(s')=\emptyset$.
  We write $\mc{WT}(s)$ (resp.\ $\mc{WCT}(s)$) for the set of
  weak traces (resp.\ weak completed traces) of $s$.
We write $s_1\precsim_{\rm WT}s_2$ if $\mc{WT}(s_1)\subseteq \mc{WT}(s_2)$.
We write $s_1\precsim_{\rm WCT} s_2$ if $\mc{WCT}(s_1)\subseteq \mc{WCT}(s_2)$
and $s_1\mv{\tau}$ implies that $s_2\mv{\tau}$.
  \end{itemize}

\end{definition}

\noindent
The extra requirement that $s_1\mv{\tau}$ implies $s_2\mv{\tau}$, in the definition of $\precsim_{\rm WCT}$,
is needed to make it a precongruence for the process algebra BCCS($A$) (see Remark~\ref{rem:precongruence}).

\begin{definition}\label{def:failures}
~
   \begin{itemize}
\item A pair $(a_1\cdots a_k,B)$ with $k\geq 0$, $a_1\cdots a_k \in A^*$ and $B \subseteq A$
  is a \emph{failure pair} of a state $s$ if there is a state $s'$ such that $s\mv{a_1 \cdots a_k} s'$
  and $\mc{I}(s') \subseteq A\setminus B$. We write $s_1\precsim_{\rm F}s_2$ if the failure pairs of $s_1$ are
  also failure pairs of $s_2$.

  \item A pair $(a_1\cdots a_k,B)$ with $k\geq 0$, $a_1\cdots a_k \in A^*$ and $B \subseteq A$
  is a \emph{weak failure pair} of a state $s$ if there is a state $s'$ such that
  $s\Rightarrow \mv{a_1} \Rightarrow \cdots \Rightarrow \mv{a_k} \Rightarrow s'$
  and $\mc{I}(s') \subseteq A\setminus B$.
  We write $s_1\precsim_{\rm WF} s_2$ if the weak failure pairs of $s_1$ are
  also weak failure pairs of $s_2$ and $s_1\mv{\tau}$ implies that $s_2\mv{\tau}$.
\end{itemize}
\end{definition}

\begin{definition}\label{def:impossible-futures}
~
\begin{itemize}
\item
A pair $(a_1\cdots a_k, B)$ with $k\geq 0$, $a_1\cdots a_k \in A^*$ and $B\subseteq A^*$
is an \emph{impossible future} of a state $s$ if $s\mv{a_1\cdots a_k} s'$ for some state $s'$ with
$\mc{T}(s')\cap B=\emptyset$. We write $s_1\precsim_{\rm IF} s_2$ if the impossible futures of
$s_1$ are included in those of $s_2$.

\item
A pair $(a_1\cdots a_k,B)$ with $k\geq 0$, $a_1\cdots a_k \in A^*$ and $B \subseteq A^*$
  is a \emph{weak impossible future} of a state $s$ if there is a
  trace $s\Rightarrow \mv{a_1} \Rightarrow \cdots \Rightarrow \mv{a_k} \Rightarrow s'$
  with $\mc{WT}(s') \cap B = \emptyset$.
We write $s_1\precsim_{\rm WIF} s_2$ if the weak impossible
futures of $s_1$ are also weak impossible futures of $s_2$,
$\mc{WT}(s_1)=\mc{WT}(s_2)$, and
$s_1\mv{\tau}$ implies that $s_2\mv{\tau}$.
\end{itemize}
\end{definition}

\noindent
The extra requirement that $\mc{WT}(s_1)=\mc{WT}(s_2)$, in the definition of $\precsim_{\rm WIF}$,
is again needed to make it a precongruence for BCCS($A$) (see Remark~\ref{rem:precongruence}).

Given a preorder $\precsim_R$, the associated equivalence is denoted with $\simeq_R$,
where $s_1\simeq_R s_2$ if both $s_1\precsim_R s_2$ and $s_2\precsim_R s_1$.

\subsection{Process algebras BCCS and BCCSP}

BCCS($A$) is a basic process algebra for expressing finite
process behaviors. Its signature consists of the
constant $\nil$, the binary operator $\_+\_$, and unary prefix
operators $\alpha\_$, where $\alpha$ ranges over $A_\tau$.
The process algebra BCCSP($A$) is obtained by excluding the prefix operator $\tau\_$.
In the context of process algebra, $A$ is called the \emph{alphabet}.
Again it is required that $A\neq\emptyset$.

Intuitively, closed BCCS($A$) terms, which are ranged over
by $p,q,r$, represent finite process behaviors, where $\nil$ does
not exhibit any behavior, $p+q$ offers a choice between the
behaviors of $p$ and $q$, and $\alpha p$ executes action $\alpha$ to
transform into $p$.  This intuition is captured by the transition rules below.
They give rise to $A_\tau$-labeled transitions between closed BCCS($A$) terms.
We assume a countably infinite set $V$ of variables;
$x,y,z$ denote elements of $V$, ranging over BCCS($A$) terms.
\[
  \frac{~}{\alpha x\mv{\alpha}x}
\qquad
  \frac{x\mv{\alpha}x'}{x+y\mv \alpha x'}
\qquad
  \frac{y\mv{\alpha}y'}{x+y\mv \alpha y'}
\]

Open BCCS terms, denoted by $t,u,v,w$, may contain variables from $V$.
We write $\var(t)$ for the set of variables occurring in $t$. It is
technically convenient to extend the operational semantics to open
terms. There are no additional rules for variables, which
effectively means that they do not exhibit any behavior.

An occurrence of an action or variable in a term is said to be \emph{initial}
if it is not in the context of a prefix operator.

The \emph{depth} of a term $t$ is the length of the \emph{longest} trace of $t$.
It is defined inductively as follows: $\depth(\nil)=\depth(x)=0$;
$\depth(\alpha t)=1+\depth(t)$; and $\depth(t+u)=\max\{\depth(t), \depth(u)\}$.
The \emph{weak depth} $\depth_w(t)$ does not count $\tau$-transitions,
meaning that it is defined similar to the depth, except $\depth_w(\tau t) = \depth_w(t)$.

A (closed) substitution, denoted by $\rho,\sigma$, maps variables
in $V$ to (closed) terms. Clearly, $t\mv{\alpha}t'$ implies that
$\sigma(t)\mv{\alpha}\sigma(t')$ for all substitutions $\sigma$.
For open terms $t$ and $u$, and a preorder $\sqsubseteq$ (or
equivalence $\equiv$) on closed terms, we define $t\sqsubseteq u$
(or $t\equiv u$) if $\rho(t)\sqsubseteq\rho(u)$ (resp.\
$\rho(t)\equiv\rho(u)$) for all closed substitutions $\rho$.

Summation $\sum \{t_1,\dots,t_n\}$ or
$\sum_{i\in\{1,\ldots,n\}}t_i$ denotes $t_1+\cdots+t_n$, where
summation over the empty set denotes $\nil$. As binding
convention, $\_+\_$ and summation bind weaker than $\alpha\_$. For
every term $t$ there exists a finite set $\{\alpha_it_i\mid i\in
I\}$ of terms and a finite set $Y$ of variables such that
$t=\sum_{i\in I} \alpha_i t_i + \sum_{y\in Y}y$. The $\alpha_i
t_i$ for $i\in I$ and the $y\in Y$ are called the \emph{summands} of $t$.
It is easy to see that $t\mv{\alpha}t'$ iff $\alpha t'$ is a summand of $t$.
The term $\alpha^nt$ is obtained from $t$ by prefixing it $n$
times with $\alpha$, i.e., $\alpha^0t=t$ and
$\alpha^{n+1}t=\alpha(\alpha^nt)$.

A preorder (or equivalence) $R$ is a \emph{precongruence} (resp.\ \emph{congruence})
for BCCS($A$) if $p_1\,R\,q_1$ and $p_2\,R\,q_2$ implies $p_1+p_2\,R\,q_1+q_2$
and $\alpha p_1\,R\,\alpha q_1$ for all $\alpha\in A_\tau$.
If a preorder is a precongruence, then clearly the associated equivalence is a congruence.
The preorders defined in Def.~\ref{def:traces}, \ref{def:failures} and
\ref{def:impossible-futures} are all precongruences for BCCS($A$) \cite{Gla01,Gla93,VM01}.

\begin{rem}
   \label{rem:precongruence}
   The requirement that $s_1\mv{\tau}$ implies $s_2\mv{\tau}$ is used to make $\precsim_{\rm WCT}$,
   $\precsim_{\rm WF}$ and $\precsim_{\rm WIF}$ a precongruence for BCCS($A$).
   Without this requirement we would e.g.\ have $\tau\nil \precsim_{\rm WIF} \nil$.
   However, $\tau\nil+a\nil \not\precsim_{\rm WCT} \nil + a\nil$, because
   $\varepsilon$ is a completed trace of the first term but not of the second.
   For $\precsim_{\rm WF}$ this requirement is in fact \emph{needed},
   in the sense that the version of $\precsim_{\rm WF}$ with this
   requirement can be obtained as the congruence-closure for the
   $+$-operator of the version from \cite{Gla93} without this requirement.
   A similar observation can be made for $\precsim_{\rm WIF}$.
   For $\precsim_{\rm WCT}$ the version with the requirement,
   presented here, appears to be strictly coarser than the
   BCCS-congruence closure of the version from \cite{Gla93} without this requirement.
   As it is out of the scope of this paper to characterize this
   congruence closure, which may be most deserving of the name
   \emph{weak completed trace preorder}, here we simply employ
   $\precsim_{\rm WCT}$ as defined above.

   The requirement that $\mc{WT}(s_1)=\mc{WT}(s_2)$ is needed to make $\precsim_{\rm WIF}$
   a precongruence for BCCS($A$). Without this requirement we would e.g.\ have $\tau a\nil
   \precsim_{\rm WIF} \tau a\nil + b\nil$. In particular, $(\varepsilon,\{b\})$
   is an impossible future not only of the first but also of the second term, because $\tau a\nil + b\nil\mv{\tau}a\nil$.
   However, $c(\tau a\nil) \not\precsim_{\rm WIF} c(\tau a\nil + b\nil)$, because
   $(\varepsilon,\{cb\})$ is an impossible future of the first but not of the second term.
\end{rem}

\subsection{Axiomatization}
An \emph{axiomatization} is a collection of equations $t \approx
u$ or of inequations $t \preccurlyeq u$. The (in)equations in an
axiomatization $E$ are referred to as \emph{axioms}.  If $E$ is an
equational axiomatization, we write $E\vdash t\approx u$ if the
equation $t\approx u$ is derivable from substitution instances of the axioms in $E$
using the rules of equational logic (reflexivity, symmetry, transitivity,
and closure under contexts), i.e.,
\[
\frac{~}{t \approx t}\qquad  \frac{t \approx u}{u \approx t}  \qquad
\frac{t \approx u \hspace*{3mm} u \approx v}{t \approx v} \qquad
\frac{t\approx u}{\alpha t\approx \alpha u} \qquad
\frac{t_1 \approx u_1 \hspace*{3mm} t_2 \approx u_2}{t_1+t_2 \approx u_1+u_2}
\]
For the derivation of an inequation $t\preccurlyeq u$ from an
inequational axiomatization $E$, denoted by $E\vdash t\preccurlyeq
u$, the rule for symmetry is omitted. We will also allow equations
$t\approx u$ in inequational axiomatizations, as an abbreviation
of two separate equations $t\preccurlyeq u$ and $u\preccurlyeq t$.

An axiomatization $E$ is \emph{sound} modulo a preorder
$\sqsubseteq$ (or equivalence $\equiv$) if for any terms $t,u$,
from $E\vdash t\preccurlyeq u$ (or $E\vdash t\approx u$) it
follows that $\rho(t)\sqsubseteq\rho(u)$ (or
$\rho(t)\equiv\rho(u)$) for all closed substitutions $\rho$. $E$
is \emph{ground-complete} for $\sqsubseteq$ (or $\equiv$) if for
any closed terms $p,q$, $p\sqsubseteq q$ (or $p\equiv q$) implies
$E\vdash p\preccurlyeq q$ (or $E\vdash p\approx q$). And $E$ is
\emph{$\omega$-complete} if for any terms $t,u$ with
$E\vdash\rho(t)\preccurlyeq\rho(u)$ (or
$E\vdash\rho(t)\approx\rho(u)$) for all closed substitutions
$\rho$, we have $E\vdash t\preccurlyeq u$ (or $E\vdash t\approx
u$). The equational theory of a process algebra
modulo a preorder $\sqsubseteq$ (or equivalence
$\equiv$) is said to be \emph{finitely based} if there exists a
finite, $\omega$-complete axiomatization that is sound and
ground-complete for the process algebra modulo
$\sqsubseteq$ (or $\equiv$).

The core axioms A1-4 below are sound for BCCS($A$) modulo
every semantics in the spectrum depicted in
Fig.~\ref{fig:spectrum}. We assume that A1-4 are included in every
axiomatization, and write $t=u$ if $\mbox{A1-4}\vdash t\approx u$.
\[
\begin{array}{lrcl}
{\rm A}1~~&x+y &~\approx~& y+x\\
{\rm A}2~~&(x+y)+z &~\approx~& x+(y+z)\\
{\rm A}3~~&x+x &~\approx~& x\\
{\rm A}4~~&x+\nil &~\approx~& x\\
\end{array}
\]

\section{Axiomatizability: From Concrete to Weak Semantics} \label{sec:link}

We present a general method to derive a ground-complete axiomatization for BCCS($A$) modulo a weak semantics
from a ground-complete axiomatization for BCCSP($A$) modulo its concrete counterpart. Moreover, if the original
axiomatization is $\omega$-complete, then so is the resulting axiomatization.

\subsection{Generating an axiomatization for a corresponding weak semantics}

\begin{definition}\label{def:corresponding-weak-equivalence}
Given an equivalence $\equiv_c$ (resp.\ preorder $\sqsubseteq_c$) which is a (pre)congruence for BCCSP($A$).
A \emph{corresponding weak} equivalence $\equiv_w$ (resp.\ preorder $\sqsubseteq_w$) is a (pre)congruence for BCCS($A$)
that coincides with $\equiv_c$ (resp.\ $\sqsubseteq_c$) over closed BCCSP($A$) terms.
\end{definition}

$\precsim_{\rm WT}$, $\precsim_{\rm WCT}$, $\precsim_{\rm WF}$ and $\precsim_{\rm WIF}$ are
corresponding weak preorders of $\precsim_{\rm T}$, $\precsim_{\rm CT}$, $\precsim_{\rm F}$
and $\precsim_{\rm IF}$, respectively. Likewise for the four associated equivalences.

Consider an axiomatization that is ground-complete for BCCSP($A$) modulo a concrete (pre)congruence relation.
We present an algorithm to generate a ground-complete axiomatization for BCCS($A$) modulo a corresponding weak semantics.
The algorithm prescribes the presence of the following two axioms. Actually,
an instance of WIF1 is supposed to be present for each $\alpha\in A_\tau$.
\[
\begin{array}{lrcl}
 {\rm WIF}1~~   &  \alpha(\tau x+\tau y) &\approx& \alpha x+\alpha y      \\
 {\rm WIF}2~~    & \tau x+y &\approx& \tau x+\tau(x+y)
\end{array}
\]
WIF1-2 make it possible to eliminate all non-initial occurrences of $\tau$ within a term (see Prop.~\ref{prop:nf}).
These two axioms---and hence our algorithm---are sound only for semantics at least as coarse as weak impossible futures semantics.
In particular, they are sound for weak failures, completed trace and trace semantics
(cf.\ Fig.~\ref{fig:spectrum}).

In case of a weak corresponding preorder $\sqsubseteq_w$, the
algorithm may moreover prescribe the presence of two axioms concerning
the introduction or elimination of initial occurrences of $\tau$'s, which
are needed to make the weak preorder under consideration a precongruence
(cf.\ Remark~\ref{rem:precongruence}).
\[
\begin{array}{lrcl}
 {\rm W}1~~   &  x &\preccurlyeq& \tau x    \\
 {\rm W}2~~    & \tau x &\preccurlyeq& x
\end{array}
\]
W1 must be present if $p\sqsubseteq_w q$ for some closed BCCS($A$) terms $p$ and $q$ with $p\not\mv{\tau}$ and $q\mv{\tau}$.
Likewise, W2 must be present if $p\sqsubseteq_w q$ for some closed BCCS($A$) terms $p$ and $q$ with $p\mv{\tau}$ and $q\not\mv{\tau}$.
W1 is sound for $\precsim_{\rm WIF}$, so also for $\precsim_{\rm WF}$, $\precsim_{\rm WCT}$ and $\precsim_{\rm WT}$
(cf.\ Fig.~\ref{fig:spectrum}).
And W2 is sound for $\precsim_{\rm WT}$, while for the other three weak preorders $p\sqsubseteq_w q$ and $p\mv{\tau}$ imply $q\mv{\tau}$.

In case of a weak corresponding equivalence $\equiv_w$, the algorithm
prescribes the presence of the axiom
\[
\begin{array}{lrcl}
 {\rm WE}~~   &  x &\approx& \tau x    \\
\end{array}
\]
if $p\equiv_w q$ for some closed BCCS($A$) terms $p$ and $q$ with $p\not\mv{\tau}$ and $q\mv{\tau}$.
WE is sound for $\equiv_{\rm WT}$, while $\equiv_{\rm WIF}$, $\equiv_{\rm WF}$ and $\equiv_{\rm WCT}$ do not require the presence of WE.

Furthermore, the algorithm uses an operator called ${\it init\mbox{-}\tau}$ that maps BCCSP($A$) terms to BCCS($A$) terms
by renaming initial actions into $\tau$. It is defined inductively by:
\[
\begin{array}{lcl}
{\it init\mbox{-}\tau}(\nil) &=& \nil\\
{\it init\mbox{-}\tau}(t+u) &=& {\it init\mbox{-}\tau}(t)+{\it init\mbox{-}\tau}(u)\\
{\it init\mbox{-}\tau}(at) &=& \tau t\\
{\it init\mbox{-}\tau}(x) &=& x
\end{array}
\]
This operator lifts to (in)equations and axiomatizations as expected.

Now we are ready to formulate how an axiomatization $E$ for BCCSP($A$) modulo a concrete semantics
is transformed into an axiomatization $\mc{A}(E)$ for BCCS($A$) modulo a corresponding weak semantics.
First we treat the case of preorders.

\begin{definition}\label{def:algorithm-weak-preorder}
Let $E$ be an axiomatization for BCCSP($A$) modulo a preorder $\sqsubseteq_c$.
The axiomatization $\mc{A}(E)$ for BCCS($A$) modulo a corresponding weak preorder $\sqsubseteq_w$ consists of
the following inequations:

\begin{enumerate}
  \item \label{alg-req1} $E$.

  \item \label{alg-req2} ${\it init\mbox{-}\tau}(E)$.

  \item \label{alg-req3} WIF1-2.

  \item \label{alg-req4} If $p\sqsubseteq_w q$ for some closed BCCS($A$) terms $p$ and $q$ with
  $p\not\mv{\tau}$ and $q\mv{\tau}$, then W1 is included.

  \item \label{alg-req5} If $p\sqsubseteq_w q$ for some closed BCCS($A$) terms $p$ and $q$ with
  $p\mv{\tau}$ and $q\not\mv{\tau}$, then W2 is included.
\end{enumerate}
\end{definition}

\noindent It is essential for the correctness of this approach that axioms do
not mix initial and non-initial occurrences of variables.
An example of such an (unsafe) inequation is $x\preccurlyeq ax$.

\begin{definition}
A term is said to be \emph{safe} if no variable
has both an initial and a non-initial occurrence in it.
An (in)equation $t\preccurlyeq u$ or $t\approx u$ is safe if $t+u$ is safe.
An axiomatization is safe if all its axioms are so.
\end{definition}

\begin{thm} \label{thm:correct}
   Let $\sqsubseteq_c$ be a precongruence for BCCSP($A$) and $\sqsubseteq_w$ a corresponding weak preorder.
   Let $E$ be a ground-complete axiomatization for BCCSP($A$) modulo $\sqsubseteq_c$, which is safe and contains A1-4.
   Then $\mc{A}(E)$ is ground-complete for BCCS($A$) modulo $\sqsubseteq_w$.
   If $E$ is moreover $\omega$-complete for BCCSP($A$), then $\mc{A}(E)$ is $\omega$-complete for BCCS($A$).
\end{thm}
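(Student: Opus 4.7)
The plan is to reduce ground- and $\omega$-completeness for BCCS($A$) modulo $\sqsubseteq_w$ to the corresponding properties for BCCSP($A$) modulo $\sqsubseteq_c$. Two auxiliary tools drive the reduction: a normal form that pushes every $\tau$ to the top level of a term, and a \emph{lifting lemma} that translates $E$-derivations through the renaming ${\it init\mbox{-}\tau}$.

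First I would establish a normal-form lemma (the forthcoming Prop.~\ref{prop:nf}): using only A1-4 and WIF1-2, every closed BCCS($A$) term $p$ equates to a term
\[
  p^{\star} \;=\; \sum_{j\in J}\tau P_j \;+\; \sum_{i\in I} a_i P_i,
\]
with $a_i\in A$ and every $P_j, P_i$ itself a closed BCCSP($A$) term in the same form (recursively). WIF1 eliminates a $\tau$ sitting directly under a non-$\tau$ prefix, while WIF2 lets the remaining summands inside a prefix be merged with a $\tau$-summand so that WIF1 becomes applicable one level up. In parallel I would prove the lifting lemma: $E\vdash t\preccurlyeq u$ implies ${\it init\mbox{-}\tau}(E)\vdash{\it init\mbox{-}\tau}(t)\preccurlyeq{\it init\mbox{-}\tau}(u)$. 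Induction on the derivation reduces this to an axiom instance $\sigma(t_0)\preccurlyeq\sigma(u_0)$ with $t_0\preccurlyeq u_0\in E$; here safety of $E$ lets me partition the variables of $t_0+u_0$ into initial ones $V_0$ and non-initial ones $V_1$, and define $\sigma'$ by $\sigma'(x)={\it init\mbox{-}\tau}(\sigma(x))$ for $x\in V_0$ and $\sigma'(y)=\sigma(y)$ for $y\in V_1$. The identity $\sigma'({\it init\mbox{-}\tau}(t_0))={\it init\mbox{-}\tau}(\sigma(t_0))$ then supplies the needed axiom instance from ${\it init\mbox{-}\tau}(E)$. The unsafe example $x\preccurlyeq ax$ in the excerpt shows what breaks without safety: $\sigma'$ would have to treat the single variable $x$ both as initial and as non-initial.

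For ground-completeness, take closed $p\sqsubseteq_w q$ and reduce both to normal forms $p^{\star},q^{\star}$. If neither has an initial $\tau$, they are closed BCCSP terms; the coincidence of $\sqsubseteq_w$ and $\sqsubseteq_c$ on BCCSP($A$) and ground-completeness of $E$ give $E\vdash p^{\star}\preccurlyeq q^{\star}$. If exactly one has an initial $\tau$, the defining clauses of a corresponding weak preorder force W1 or W2 (Def.~\ref{def:algorithm-weak-preorder}(\ref{alg-req4})--(\ref{alg-req5})) into $\mc{A}(E)$, and a single application reduces this case to one where both sides have initial $\tau$'s. In the remaining mixed case I would transform $p^{\star}\sqsubseteq_w q^{\star}$ into a BCCSP inequality $\bar p\sqsubseteq_c\bar q$ by replacing each $\tau$-summand by an $a_{\star}$-summand for a chosen $a_{\star}\in A$; the precongruence of $\sqsubseteq_w$ together with its coincidence with $\sqsubseteq_c$ on BCCSP should give $\bar p\sqsubseteq_c\bar q$, ground-completeness of $E$ yields $E\vdash\bar p\preccurlyeq\bar q$, and the lifting lemma turns this into ${\it init\mbox{-}\tau}(E)\vdash{\it init\mbox{-}\tau}(\bar p)\preccurlyeq{\it init\mbox{-}\tau}(\bar q)$, which combined with $E$-derivations for the non-$\tau$ parts assembles into $\mc{A}(E)\vdash p^{\star}\preccurlyeq q^{\star}$.

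For $\omega$-completeness the argument runs in parallel on open terms: given $t\preccurlyeq u$ whose closed instances are all $\mc{A}(E)$-derivable, I would normalise $t,u$ via WIF1-2, flatten initial $\tau$'s to a fixed concrete action to get open BCCSP terms $\bar t,\bar u$, verify via the closed-case argument and soundness that every closed substitution instance of $\bar t\preccurlyeq\bar u$ is $E$-derivable, invoke $\omega$-completeness of $E$ to obtain $E\vdash\bar t\preccurlyeq\bar u$, and lift back through ${\it init\mbox{-}\tau}$ to get $\mc{A}(E)\vdash t\preccurlyeq u$. The main obstacle throughout is this flattening step: I must argue that replacing initial $\tau$'s by a concrete $a_{\star}$ does not introduce spurious discriminations beyond those already recorded in $\sqsubseteq_w$. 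This is expected to hold precisely because $\sqsubseteq_w$ is coarser than weak impossible futures semantics and therefore does not separate any branching structure sitting under an initial $\tau$ that the concrete reading of $a_{\star}$ would separate; nonetheless, verifying it uniformly for an arbitrary corresponding weak preorder, together with the safety-sensitive lifting through ${\it init\mbox{-}\tau}$, is the technical crux of the proof.
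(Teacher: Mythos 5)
Your overall architecture matches the paper's: a normal form driven by WIF1--2, a lifting lemma through ${\it init\mbox{-}\tau}$ that exploits safety exactly as in Lem.~\ref{lem:init-tau} and Prop.~\ref{prop:key}, and a case split on initial $\tau$'s with W1/W2 handling the asymmetric cases. The lifting part of your proposal is essentially the paper's argument and is fine. However, there is a genuine gap at the step you yourself flag as the ``technical crux'', and it traces back to your normal form being too weak. You normalise a term with an initial $\tau$ to a \emph{mixed} sum $\sum_{j}\tau P_j+\sum_i a_iP_i$, whereas Prop.~\ref{prop:nf}(2) uses WIF2 once more to absorb every non-$\tau$ summand into a $\tau$-summand, yielding a \emph{pure} sum $\sum_{i\in I}\tau t_i$ with BCCSP bodies. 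That extra absorption is not cosmetic: it is what makes the passage from the weak inequation to a concrete BCCSP inequation provable.

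Concretely, your proposed flattening --- replace each top-level $\tau$-summand of the mixed form by an $a_{\star}$-summand and claim $\bar p\sqsubseteq_c\bar q$ --- has no justification from the hypotheses. The only link between $\sqsubseteq_c$ and $\sqsubseteq_w$ is coincidence on closed BCCSP terms, and $\bar p$ is not related to $p^{\star}$ by $\sqsubseteq_w$ in either direction (already their weak trace sets differ), so there is no chain $\bar p\sqsubseteq_w p^{\star}\sqsubseteq_w q^{\star}\sqsubseteq_w\bar q$ to appeal to. The paper closes exactly this hole, but only for the pure form: from $\sum_{i}\tau t_i\sqsubseteq_w\sum_{j}\tau u_j$ it prefixes both sides with an arbitrary $a\in A$ (precongruence of $\sqsubseteq_w$) and uses soundness of WIF1 to get $\sum_i at_i\equiv_{\rm WIF}a(\sum_i\tau t_i)\sqsubseteq_w a(\sum_j\tau u_j)\equiv_{\rm WIF}\sum_j au_j$, landing on BCCSP terms where $\sqsubseteq_w$ and $\sqsubseteq_c$ coincide; ground-/$\omega$-completeness of $E$ then gives $E\vdash\sum_i at_i\preccurlyeq\sum_j au_j$, and Prop.~\ref{prop:key} converts this into an $\mc{A}(E)$-derivation of $\sum_i\tau t_i\preccurlyeq\sum_j\tau u_j$. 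This trick is unavailable for your mixed form, because $a(\sum_j\tau P_j+\sum_ia_iP_i)$ is not WIF-equivalent to $\sum_j aP_j+\sum_i aa_iP_i$ (one would need D2, which produces a different shape). So your proof is repairable --- strengthen the normal form using WIF2 as in Prop.~\ref{prop:nf}(2) and replace the flattening by the $a(\cdot)$-prefixing argument --- but as written the central reduction to $\sqsubseteq_c$ is missing.
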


\noindent
Note that this theorem does not address the soundness of $\mc{A}(E)$ for BCCS($A$) modulo $\sqsubseteq_w$.
This is left to the user as a separate proof obligation.

The basic ideas behind the method above are as follows.
With WIF1-2, each BCCS($A$) term can be equated to either a BCCSP($A$) term or
a term $\sum_{i\in I}\tau t_i$ where the $t_i$ are BCCSP($A$) terms (see Prop.~\ref{prop:nf}).
And with the axioms in ${\it init\mbox{-}\tau}(E)$,
a derivation of $\sum_{i\in I}at_i\preccurlyeq\sum_{j\in J}au_j$ from $E$ can be converted
into a derivation of $\sum_{i\in I}\tau t_i\preccurlyeq\sum_{j\in J}\tau u_j$ from $\mc{A}(E)$.
These constitute key steps in the proof of Thm.~\ref{thm:correct}.

The algorithm to generate an axiomatization for BCCS($A$) modulo a weak equivalence from an axiomatization
for BCCSP($A$) modulo the corresponding concrete equivalence can be adapted accordingly from the algorithm
for preorders.

\begin{definition}\label{def:algorithm-weak-equivalence}
Let $E$ be an axiomatization for BCCSP($A$) modulo an equivalence $\equiv_c$.
The axiomatization $\mc{A}(E)$ for BCCS($A$) modulo a corresponding weak equivalence $\equiv_w$
consists of the following equations:
\pagebreak[3]

\begin{enumerate}
  \item $E$.

  \item ${\it init\mbox{-}\tau}(E)$.

  \item WIF1-2.

  \item If $p\equiv_w q$ for some closed BCCS($A$) terms $p$ and $q$ with $p\mv{\tau}$ and $q\not\mv{\tau}$,
  then WE is included.
\end{enumerate}
\end{definition}

\begin{thm} \label{thm:correct2}
   Let $\equiv_c$ be a congruence for BCCSP($A$) and $\equiv_w$ a corresponding weak equivalence.
   Let $E$ be a ground-complete axiomatization for BCCSP($A$) modulo $\equiv_c$, which is safe and contains A1-4.
   Then $\mc{A}(E)$ is ground-complete for BCCS($A$) modulo $\equiv_w$.
   If $E$ is moreover $\omega$-complete for BCCSP($A$), then $\mc{A}(E)$ is $\omega$-complete for BCCS($A$).
\end{thm}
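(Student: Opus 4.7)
My plan is to reduce Thm.~\ref{thm:correct2} to Thm.~\ref{thm:correct} by regarding the equivalence $\equiv_c$ (resp.\ $\equiv_w$) as its own, symmetric precongruence $\sqsubseteq_c := \equiv_c$ (resp.\ $\sqsubseteq_w := \equiv_w$). Given an equational axiomatization $F$, let $F^\pm$ denote the inequational axiomatization obtained by replacing each axiom $t\approx u \in F$ by the two inequations $t\preccurlyeq u$ and $u\preccurlyeq t$. The first step is a key lemma: by a straightforward induction on derivations in both directions, $F \vdash t \approx u$ if and only if $F^\pm \vdash t \preccurlyeq u$. Consequently, $E^\pm$ is safe, sound, ground-complete, and (where applicable) $\omega$-complete for BCCSP($A$) modulo $\sqsubseteq_c$ whenever $E$ has the corresponding property modulo $\equiv_c$.

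The second step is to verify that the preorder algorithm of Def.~\ref{def:algorithm-weak-preorder} applied to $E^\pm$ produces exactly $(\mc{A}(E))^\pm$, where $\mc{A}(E)$ denotes the axiomatization generated by the equivalence algorithm of Def.~\ref{def:algorithm-weak-equivalence}. Clauses (1)--(3) match under the translation, using that ${\it init\mbox{-}\tau}$ commutes with the equation-to-inequations encoding and that WIF1, WIF2 abbreviate their two-inequation encodings in the preorder setting. For the side conditions: since $\sqsubseteq_w$ is symmetric, swapping $p$ and $q$ shows that W1 is included in the preorder axiomatization if and only if W2 is, and this occurs if and only if WE is included in $\mc{A}(E)$; moreover, WE$^\pm$ is precisely $\{{\rm W1},{\rm W2}\}$.

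With $\mc{A}(E^\pm) = (\mc{A}(E))^\pm$ established, applying Thm.~\ref{thm:correct} to $E^\pm$ yields that $(\mc{A}(E))^\pm$ is ground-complete (resp.\ $\omega$-complete) for BCCS($A$) modulo $\sqsubseteq_w = \equiv_w$. The key lemma, applied this time to $\mc{A}(E)$ itself, then transfers both properties back to the equational presentation: given closed $p,q$ with $p \equiv_w q$, Thm.~\ref{thm:correct} delivers $(\mc{A}(E))^\pm \vdash p \preccurlyeq q$, whence $\mc{A}(E) \vdash p \approx q$; the $\omega$-completeness claim follows analogously for open terms. I expect the main obstacle to lie in step two, specifically in the case-by-case matching of the two generated axiomatizations and in verifying the equivalence of the side conditions triggering W1, W2, and WE, but no deep argument beyond the symmetry of $\equiv_w$ and the key lemma should be required.
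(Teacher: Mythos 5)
Your reduction is correct, but it is not the route the paper takes: the paper gives no separate proof of Thm.~\ref{thm:correct2}, asserting only that it ``can be proved following the same lines'' as Thm.~\ref{thm:correct}, i.e., by redoing the development (Prop.~\ref{prop:nf}, Prop.~\ref{prop:key} and the case analysis on initial $\tau$'s) with $\approx$ in place of $\preccurlyeq$ and the symmetry rule added. You instead obtain the equivalence theorem as a formal corollary of the preorder theorem via the encoding $F \mapsto F^\pm$ and the derivability lemma $F \vdash t \approx u$ iff $F^\pm \vdash t \preccurlyeq u$; note that the left-to-right direction of that lemma needs the strengthened induction hypothesis that \emph{both} $t \preccurlyeq u$ and $u \preccurlyeq t$ are derivable from $F^\pm$, in order to absorb the symmetry rule. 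The matching $\mc{A}(E^\pm) = (\mc{A}(E))^\pm$ does go through exactly as you say: clauses (1)--(3) of Defs.~\ref{def:algorithm-weak-preorder} and~\ref{def:algorithm-weak-equivalence} correspond under the encoding (${\it init\mbox{-}\tau}$ commutes with it, and WIF1-2 are equations in both), and since $\equiv_w$ is symmetric the triggering conditions for W1, for W2 and for WE all coincide, with $\{{\rm W1},{\rm W2}\} = {\rm WE}^\pm$. What your route buys is that nothing has to be re-proved and the phrase ``following the same lines'' is replaced by a checkable reduction; what the paper's implicit route buys is independence from the precise agreement of the two generating definitions under the encoding, which is exactly the point you rightly single out as the one place requiring verification. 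Both arguments inherit the same tacit assumption underlying the proof of Thm.~\ref{thm:correct}, namely that A1-4+WIF1-2 are sound modulo the weak semantics under consideration.
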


\subsection{Correctness of the transformations}

We establish the correctness of the algorithms for preorders and equivalences.
We will only prove Thm.~\ref{thm:correct} for preorders, as Thm.~\ref{thm:correct2} for equivalences can be proved following the same lines.
For a start, we show that the following equations can be derived from A1-4+WIF1-2:
\[
 \begin{tabular}{l@{\qquad}rcl}
 D1 & $\tau(\tau x+y)$  & $\approx$ & $\tau x+ y$\vspace{1mm}\\
 D2 & $\alpha({\displaystyle\sum_{i\in I}\tau x_i +y)}$ & $\approx$ & ${\displaystyle\sum_{i\in I} \alpha x_i+\alpha(\sum_{i\in I}x_i+y)}$
\end{tabular}
\]

\begin{lemma} \label{lemma5}
  \mbox{\rm D1-2} are derivable from \mbox{\rm A1-4+WIF1-2}.
\end{lemma}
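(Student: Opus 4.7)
The plan is to prove D1 directly from WIF1-2 (with A1-4 implicit throughout), and then to derive D2 by induction on the size of the index set $I$, using D1 or a degenerate instance of D2 as the base case.

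For D1, the idea is to rewrite the inner sum $\tau x + y$ via WIF2, collapse a stacked $\tau$ via WIF1 specialized to $\alpha=\tau$, and then undo WIF2. Concretely, I would chain
\[
\tau(\tau x + y)
\;\stackrel{\mathrm{WIF2}}{\approx}\;
\tau\bigl(\tau x + \tau(x+y)\bigr)
\;\stackrel{\mathrm{WIF1}}{\approx}\;
\tau x + \tau(x+y)
\;\stackrel{\mathrm{WIF2}}{\approx}\;
\tau x + y,
\]
where the first step applies WIF2 under the context $\tau(\_)$, the second step is the instance of WIF1 with $\alpha:=\tau$ and with $y$ replaced by $x+y$, and the third step is WIF2 read from right to left.

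For D2, I would argue by induction on $|I|$. The case $I=\emptyset$ is immediate from A4 (both sides reduce to $\alpha y$). The case $|I|=1$, i.e.\ $\alpha(\tau x+y)\approx \alpha x + \alpha(x+y)$, follows by applying WIF2 inside the prefix to obtain $\alpha(\tau x+y)\approx \alpha(\tau x + \tau(x+y))$ and then WIF1 to split the sum. For the induction step, pick $k\in I$, set $I'=I\setminus\{k\}$, and write
\[
\alpha\Bigl(\sum_{i\in I}\tau x_i + y\Bigr)
\;=\;
\alpha\Bigl(\sum_{i\in I'}\tau x_i + (\tau x_k + y)\Bigr).
\]
Apply the induction hypothesis with $y':=\tau x_k+y$ to pull out the $\alpha x_i$ for $i\in I'$; then apply the base case once more to the remaining prefixed term $\alpha(\sum_{i\in I'}x_i + \tau x_k + y)$ to pull out $\alpha x_k$ and collapse the rest into $\alpha(\sum_{i\in I}x_i+y)$. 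Grouping and using A1-3 yields the desired right-hand side.

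There is no serious obstacle here: A1-4 are used silently to rearrange and absorb summands, and the only conceptually nontrivial move is the ``stacked-$\tau$ collapse'' WIF1$|_{\alpha=\tau}$ in the middle of the D1 derivation, which is precisely what makes WIF2 reversible inside a $\tau$-context. Once D1 is in hand, the D2 argument is a routine induction and the $|I|=1$ base case is essentially the same three-step pattern applied under a prefix $\alpha$. Care is only needed in the inductive step to ensure the two applications of the induction hypothesis and the base case are on the correct $y'$ and $y''$ and that the resulting sums are reassembled using A1-3 into $\sum_{i\in I}\alpha x_i + \alpha(\sum_{i\in I}x_i+y)$.
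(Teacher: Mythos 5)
Your proposal is correct and follows essentially the same route as the paper: the identical three-step WIF2/WIF1/WIF2 chain for D1, and induction on $|I|$ for D2. The only cosmetic difference is the order in the inductive step (you apply the induction hypothesis to $I\setminus\{k\}$ first and then peel off $\tau x_k$ via the singleton case, whereas the paper peels off $\tau x_{i_0}$ directly with WIF2 and WIF1 and then invokes the induction hypothesis), which changes nothing of substance.
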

\begin{proof}

For D1,
  \[
   \begin{array}{rllr}
           \tau(\tau x+y)  & \approx &  \tau(\tau x+\tau(x+y)) & ({\rm WIF2})\\
                      & \approx &  \tau x+\tau(x+y) & ({\rm WIF1})\\
                      & \approx & \tau x+y & ({\rm WIF2})\\
  \end{array}
  \]

\medskip

\noindent
For D2, we apply induction on $|I|$. The base case $I\mathbin=\emptyset$ is trivial.
For $|I|\mathbin\geq 1$, pick an $i_0 \mathbin\in I$.\\[2ex]
\mbox{}\hfill
$  \begin{array}[b]{rllr}
    {\displaystyle  \alpha(\sum_{i\in I}\tau x_i +y)} & = & \alpha(\tau x_{i_0} + {\displaystyle \sum_{i\in I\setminus\{i_0\}} }\tau x_i +    y) &\vspace{1mm}\\
                                       & \approx & \alpha (\tau x_{i_0}+\tau (x_{i_0} + {\displaystyle\sum_{i\in I\setminus\{i_0\}}\tau x_i +y) }) & ({\rm WIF2})\vspace{1mm}\\
                                       & \approx & \alpha x_{i_0} + \alpha(x_{i_0} + {\displaystyle\sum_{i\in I\setminus\{i_0\}}\tau x_i +y)} & ({\rm WIF1})\vspace{1mm}\\
                                       & \approx & \alpha x_{i_0} + {\displaystyle \sum_{i\in I\setminus\{i_0\}} \alpha x_i + \alpha(x_{i_0} + \sum_{i\in I\setminus\{i_0\}}  x_i +y)} & ({\rm induction})\vspace{1mm}\\
                                       & = & {\displaystyle \sum_{i\in I} \alpha x_i + \alpha(\sum_{i\in I}x_i+y)}&
  \end{array}$
\end{proof}

\noindent The following proposition on the elimination of $\tau$'s from BCCS($A$) terms will play a key role
in the proof of Thm.~\ref{thm:correct}.

\begin{proposition}\label{prop:nf}
Let $t$ be a BCCS($A$) term.
\begin{enumerate}
  \item If $t\not\mv{\tau}$, then A1-4+WIF1-2 $\vdash t\approx t'$ for some BCCSP($A$) term $t'$.
  \item If $t\mv{\tau}$, then A1-4+WIF1-2 $\vdash t\approx \sum_{i\in I}\tau t_i$
  where $I\neq\emptyset$ and the $t_i$ are BCCSP($A$) terms.
\end{enumerate}
\end{proposition}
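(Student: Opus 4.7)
The plan is to prove both parts simultaneously by structural induction on $t$, treating ``either part~(1) holds with some BCCSP witness, or part~(2) holds with some $\tau$-sum witness'' as a single induction hypothesis, since which form applies is determined by whether $t\mv{\tau}$.

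The base cases $t=\nil$ and $t=x$ are immediate: each is already a BCCSP term and has no initial $\tau$-transition, so part~(1) holds with $t'=t$. For a prefix $t=\alpha s$, I would apply the IH to $s$. If $\alpha\in A$ then $t\not\mv{\tau}$: when $s\approx s'$ is BCCSP the term $\alpha s'$ is BCCSP, and when $s\approx\sum_{i\in I}\tau s_i$ the axiom D2 (instantiated with $y:=\nil$) rewrites $\alpha\sum_i\tau s_i$ to the BCCSP term $\sum_i \alpha s_i + \alpha\sum_i s_i$. If $\alpha=\tau$ then $t\mv{\tau}$: the term $\tau s'$ already has the required shape, and $\tau\sum_{i\in I}\tau s_i$ reduces to $\sum_i\tau s_i$ by a single application of D1 (singling out any $i_0\in I$ as the first summand).

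For $t=t_1+t_2$, the IH gives rewritings of both summands. If both are BCCSP, so is their sum and $t\not\mv{\tau}$; if both are $\tau$-sums of BCCSP terms, their concatenation is again of the required form. The only delicate subcase is the mixed one, say $t_1\approx t_1'$ BCCSP and $t_2\approx\sum_{i\in I}\tau s_i$ with $I\neq\emptyset$, because the bare BCCSP summand $t_1'$ must still be absorbed into a $\tau$-prefixed term. I would handle this by establishing the auxiliary absorption claim, proved by a separate induction on $|I|$: for every nonempty finite $I$, BCCSP terms $q_i$ ($i\in I$), and BCCSP term $r$,
\[
  \mbox{\rm A1-4+WIF1-2} \;\vdash\; \sum_{i\in I}\tau q_i + r \;\approx\; \sum_{i\in I}\tau q_i + \tau\Bigl(\sum_{i\in I} q_i + r\Bigr).
\]
The base $|I|=1$ is exactly WIF2. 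For $|I|\geq 2$, pick $i_0\in I$; applying WIF2 with $x:=q_{i_0}$ and $y:=\sum_{i\neq i_0}\tau q_i + r$ yields $\tau q_{i_0}+\tau\bigl(q_{i_0}+\sum_{i\neq i_0}\tau q_i + r\bigr)$. The inner argument matches the IH with index set $I\setminus\{i_0\}$ and ``$r$'' replaced by the BCCSP term $q_{i_0}+r$, so the IH rewrites it to $\sum_{i\neq i_0}\tau q_i + \tau(\sum_{i\in I}q_i + r)$. Since this inner term now has at least one $\tau$-prefixed summand, a single application of D1 collapses the newly formed outer $\tau$, leaving $\sum_{i\in I}\tau q_i + \tau(\sum_{i\in I}q_i + r)$, as required. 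Applying the claim to the mixed sum subcase delivers $t\approx\sum_{i\in I}\tau s_i + \tau(\sum_i s_i + t_1')$, a sum of $\tau$-prefixed BCCSP terms, finishing the induction.

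The main obstacle is precisely the absorption step: WIF2 alone only relocates the bare summand $r$ into the argument of a fresh $\tau$ without removing it from the outer sum, and D1 on its own cannot discard a bare summand either. The combinatorial heart of the argument is to chain WIF2 with the inner induction hypothesis so that the inner context becomes entirely $\tau$-prefixed, at which point D1 can peel off the redundant outer $\tau$; everything else in the proof of Proposition~\ref{prop:nf} is routine case analysis.
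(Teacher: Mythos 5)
Your proof is correct, but it is organized differently from the paper's. The paper proves part~(1) by induction on the \emph{weak depth} of $t$, using D2 as the engine that pushes $\tau$'s outward under an action prefix, and then obtains part~(2) from a single application of WIF2 (folding one $t_{i_0}$ together with \emph{all} the non-$\tau$-prefixed summands into one new $\tau$-summand) followed by an appeal to part~(1). You instead use structural induction on $t$, which forces you to confront the ``mixed'' sum case $t_1'+\sum_{i\in I}\tau s_i$ head-on, and you resolve it with a generalized absorption lemma proved by an inner induction on $|I|$. Both routes are sound; yours is arguably more elementary in its induction measure, while the paper's avoids the auxiliary lemma entirely by letting the mutual recursion between (1) and (2) do that work. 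One point worth correcting, though: your diagnosis of the ``main obstacle'' rests on a misreading of WIF2. The right-hand side of $\tau x+y\approx\tau x+\tau(x+y)$ does \emph{not} retain $y$ as a bare summand, so a single instance of WIF2 --- taking $x:=q_{i_0}$ and $y:=r$ inside the context $\sum_{i\neq i_0}\tau q_i+[\cdot]$ --- already turns $\sum_{i\in I}\tau q_i+r$ into $\sum_{i\in I}\tau q_i+\tau(q_{i_0}+r)$, which is a sum of $\tau$-prefixed BCCSP terms and finishes the mixed case immediately. Your absorption lemma and its proof (WIF2, inner IH, then D1 to collapse the extra $\tau$) are correct as written, but they are an unnecessary detour rather than the combinatorial heart of the argument.
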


\begin{proof}
It is easy to see that A1-4 and WIF1-2 equate only terms of equal weak depth.
For convenience, ``$\mbox{A1-4+WIF1-2} \vdash$'' is omitted here.
\begin{enumerate}
  \item We apply induction on the weak depth of $t$.
  Since $t\not\mv{\tau}$, $t=\sum_{i\in I}a_i t_i + \sum_{j\in J}x_j$. For each $i\in I$,
  \[  t_i \approx\sum_{k\in K_i} \tau t_k' + \sum_{\ell\in L_i} b_\ell t_\ell' + \sum_{m\in M_i}y_m\]
  Moreover, by means of D1 we can guarantee that for each $k\in K_i$, $t_k'\not\mv{\tau}$.
  So, by D2,
  \[ a_it_i \approx \sum_{k\in K_i} a_i t_k' + a_i(\sum_{k\in K_i} t_k'+ \sum_{\ell\in L_i} b_\ell t_\ell' + \sum_{m\in M_i}y_m)\]
  For each $k\in K_i$, by induction, $t_k'\approx t_k''$ where $t_k''$ is a BCCSP($A$) term.
  Likewise, by induction, $\sum_{k\in K_i}  t_k'+ \sum_{\ell\in L_i} b_\ell t_\ell' + \sum_{m\in M_i}y_m \approx t'''_i$ where $t'''_i$ is a BCCSP($A$) term.
  Hence,
  \[  t = \sum_{i\in I} a_it_i + \sum_{j\in J}x_j \approx \sum_{i\in I} (\sum_{k\in K_i} a_i t_k'' + a_it'''_i) + \sum_{j\in J}x_j \]
  And this last term is in BCCSP($A$).

  \medskip

  \item Since $t\mv{\tau}$,
  $t\approx \sum_{i\in I}\tau t_i+\sum_{j\in J} a_j t_j+\sum_{k\in K}x_k$
  where $I\neq\emptyset$. Moreover, by means of D1 we can guarantee that for each $i\in I$, $t_i\not\mv{\tau}$.
  Pick an $i_0\in I$. By WIF2,
    \[   t \approx   \sum_{i\in I}\tau t_i + \tau(t_{i_0} + \sum_{j\in J} a_jt_j + \sum_{k\in K} x_k) \]
  And by (1), the terms $t_i$ and $a_jt_j$ can all be equated to BCCSP($A$) terms.
\qedhere
\end{enumerate}
\end{proof}

\noindent We proceed to prove that a derivation of $t\preccurlyeq u$ from $E$ yields a derivation of
${\it init\mbox{-}\tau}(t\preccurlyeq u)$ from ${\it init\mbox{-}\tau}(E)$. First we establish a lemma
as a stepping stone toward this result.

\begin{lemma}\label{lem:init-tau}
Let $\sigma(t)$ be a BCCSP($A$) term, and let $t$ be safe. Then
\[
{\it init\mbox{-}\tau}(\sigma(t))=\sigma'({\it init\mbox{-}\tau}(t))
\]
where $\sigma'(x)={\it init\mbox{-}\tau}(\sigma(x))$ if $x$ has an initial occurrence in $t$ and $\sigma'(x)=\sigma(x)$ otherwise.
\end{lemma}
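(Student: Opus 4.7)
The plan is to proceed by structural induction on $t$. Since $\sigma(t)$ is a BCCSP($A$) term, $t$ itself contains no $\tau$, so in the prefix case we have $t = at'$ with $a\in A$; moreover every $\sigma(x)$ for $x\in\var(t)$ must be a BCCSP($A$) term, because it is a subterm of a BCCSP($A$) term.

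The base cases $t=\nil$ and $t=x$ follow by unfolding the definitions of ${\it init\mbox{-}\tau}$ and $\sigma'$: when $t=x$, the variable $x$ has an initial occurrence in $t$, so $\sigma'(x) = {\it init\mbox{-}\tau}(\sigma(x))$, matching ${\it init\mbox{-}\tau}(\sigma(x))$ on the left. For the prefix case $t = at'$, I compute
\[
{\it init\mbox{-}\tau}(\sigma(at')) = {\it init\mbox{-}\tau}(a\,\sigma(t')) = \tau\,\sigma(t'),
\]
while $\sigma'({\it init\mbox{-}\tau}(at')) = \sigma'(\tau t') = \tau\,\sigma'(t')$. The key observation is that every variable appearing in $t'$ occurs only non-initially in $t$; hence by safety of $t$ no such variable has an initial occurrence in $t$, so $\sigma'(x) = \sigma(x)$ for every $x\in\var(t')$, yielding $\sigma'(t') = \sigma(t')$. (Note that the induction hypothesis is not needed here, since the equality is structural.)

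For the sum case $t = t_1 + t_2$, first observe that safety of $t_1+t_2$ immediately implies safety of each $t_i$. Let $\sigma'_1,\sigma'_2$ be the substitutions supplied by the induction hypothesis for $t_1,t_2$ respectively; I then get
\[
{\it init\mbox{-}\tau}(\sigma(t_1+t_2)) = \sigma'_1({\it init\mbox{-}\tau}(t_1)) + \sigma'_2({\it init\mbox{-}\tau}(t_2)),
\]
and similarly $\sigma'({\it init\mbox{-}\tau}(t_1+t_2)) = \sigma'({\it init\mbox{-}\tau}(t_1)) + \sigma'({\it init\mbox{-}\tau}(t_2))$. Since ${\it init\mbox{-}\tau}$ preserves variables, it suffices to show $\sigma'_i(x) = \sigma'(x)$ for every $x\in\var(t_i)$. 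If $x$ has an initial occurrence in $t_i$, it has one in $t_1+t_2$ too, and both sides equal ${\it init\mbox{-}\tau}(\sigma(x))$. Otherwise $x$ has no initial occurrence in $t_i$, and by safety of $t_1+t_2$ it cannot have an initial occurrence in $t_{3-i}$ either (for then $x$ would have both an initial and a non-initial occurrence in $t_1+t_2$, using the fact that $x\in\var(t_i)$ but all its occurrences in $t_i$ are non-initial); so both sides equal $\sigma(x)$.

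The only subtle step is the case distinction in the sum case, specifically the appeal to safety to exclude the ``mixed'' situation where $x$ is initial in one summand and non-initial in the other; the rest is a routine structural induction.
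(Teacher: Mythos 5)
Your proof is correct and follows essentially the same route as the paper's: a structural induction on $t$ whose only non-trivial point is that safety of $t_1+t_2$ forces the induction-hypothesis substitutions $\sigma'_1,\sigma'_2$ to agree with $\sigma'$ on $\var(t_1)$ and $\var(t_2)$, and whose prefix case needs no induction because $\mathit{init}\mbox{-}\tau$ leaves the subterm under the prefix untouched. (A minor remark: in the prefix case safety is not actually needed, since no variable of $at'$ has an initial occurrence at all, so $\sigma'$ and $\sigma$ coincide on $\var(t')$ outright.)
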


\begin{proof}
By induction on the structure of $t$.
\begin{itemize}
\item
$t=\nil$: We have ${\it init\mbox{-}\tau}(\sigma(\nil))=\nil=\sigma'({\it init\mbox{-}\tau}(\nil))$.

\item
$t=x$: We have ${\it init\mbox{-}\tau}(\sigma(x))=\sigma'(x)=\sigma'({\it init\mbox{-}\tau}(x))$.

\item
$t=t_1+t_2$: For $n=1,2$ we define $\sigma_n(x)={\it init\mbox{-}\tau}(\sigma(x))$ if $x$ has an initial occurrence in $t_n$ and $\sigma_n(x)=\sigma(x)$ otherwise.
Since $t_1+t_2$ is safe, $\sigma'$ and $\sigma_n$ coincide over $\var(t_n)$ for both $n=1$ and $n=2$. Hence,
\[
\begin{array}{rcll}
{\it init\mbox{-}\tau}(\sigma(t_1+t_2)) &=& {\it init\mbox{-}\tau}(\sigma(t_1))+{\it init\mbox{-}\tau}(\sigma(t_2))\vspace{2mm}\\
&=& \sigma_1({\it init\mbox{-}\tau}(t_1))+\sigma_2({\it init\mbox{-}\tau}(t_2)) & \mbox{(by induction)}\vspace{2mm}\\
&=& \sigma'({\it init\mbox{-}\tau}(t_1))+\sigma'({\it init\mbox{-}\tau}(t_2))\vspace{2mm}\\
&=& \sigma'({\it init\mbox{-}\tau}(t_1+t_2))
\end{array}
\]

\item
$t=at'$: Since $\sigma$ and $\sigma'$ coincide over $\var(t')$,\\[1.5ex]
\mbox{}\hfill
$
{\it init\mbox{-}\tau}(\sigma(at'))={\it init\mbox{-}\tau}(a\sigma(t'))=\tau\sigma(t')=\tau\sigma'(t')=\sigma'(\tau t')=\sigma'({\it init\mbox{-}\tau}(at'))
$
\qedhere
\end{itemize}
\end{proof}

\begin{proposition}\label{prop:key}
Let $E$ be a safe axiomatization for BCCSP($A$) and suppose that $E\vdash t\preccurlyeq u$. Then
\[
{\it init\mbox{-}\tau}(E)\vdash{\it init\mbox{-}\tau}(t\preccurlyeq u)
\]
\end{proposition}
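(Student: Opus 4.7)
The plan is induction on the length of the derivation $E\vdash t\preccurlyeq u$, with case analysis on the last rule of inequational logic applied. The base case is when $t\preccurlyeq u$ is a substitution instance $\sigma(t_0)\preccurlyeq\sigma(u_0)$ of an axiom $t_0\preccurlyeq u_0\in E$; this is the main place where safety is exploited. I would define a single substitution $\sigma^*$ by $\sigma^*(x)={\it init\mbox{-}\tau}(\sigma(x))$ if $x$ has an initial occurrence in $t_0+u_0$, and $\sigma^*(x)=\sigma(x)$ otherwise. Since the axiom is safe, no variable occurs initially in one of $t_0,u_0$ and non-initially in the other, so $\sigma^*$ coincides on $\var(t_0)$ (resp.\ on $\var(u_0)$) with the substitution that Lemma~\ref{lem:init-tau} would prescribe for $t_0$ (resp.\ for $u_0$). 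Applying that lemma twice gives ${\it init\mbox{-}\tau}(\sigma(t_0))=\sigma^*({\it init\mbox{-}\tau}(t_0))$ and ${\it init\mbox{-}\tau}(\sigma(u_0))=\sigma^*({\it init\mbox{-}\tau}(u_0))$, so ${\it init\mbox{-}\tau}(t)\preccurlyeq{\it init\mbox{-}\tau}(u)$ is obtained as the substitution instance under $\sigma^*$ of the axiom ${\it init\mbox{-}\tau}(t_0)\preccurlyeq{\it init\mbox{-}\tau}(u_0)\in{\it init\mbox{-}\tau}(E)$.

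The reflexivity and transitivity rules pass through by a direct appeal to the induction hypothesis. The $+$-congruence rule is also immediate, since ${\it init\mbox{-}\tau}(t_1+t_2)={\it init\mbox{-}\tau}(t_1)+{\it init\mbox{-}\tau}(t_2)$: combine the inductive conclusions on the two premises via $+$-congruence in the ${\it init\mbox{-}\tau}(E)$-logic.

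The $\alpha$-congruence rule is where I expect the main technical obstacle. A derivation ending in $\alpha t'\preccurlyeq\alpha u'$ from a sub-derivation of $t'\preccurlyeq u'$ calls for ${\it init\mbox{-}\tau}(E)\vdash\tau t'\preccurlyeq\tau u'$; however, the literal induction hypothesis only delivers ${\it init\mbox{-}\tau}(t')\preccurlyeq{\it init\mbox{-}\tau}(u')$, which is generally weaker, because ${\it init\mbox{-}\tau}$ has rewritten the initial actions of $t'$ and $u'$, whereas in the target inequation those actions sit inside the outer $\tau$ and must be left untouched. To bridge this gap I would strengthen the inductive statement so that it simultaneously yields, for every BCCSP prefix context $C=\alpha[\cdot]$, a derivation of ${\it init\mbox{-}\tau}(C[r])\preccurlyeq{\it init\mbox{-}\tau}(C[s])$ in ${\it init\mbox{-}\tau}(E)$ whenever $E\vdash r\preccurlyeq s$. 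The $\alpha$-congruence case of the original proposition then becomes a direct corollary; for axiom instances nested under such a $C$, the translation uses the transformed axiom together with an adapted substitution and a single $\tau$-congruence step at the root, while safety keeps the substitutional bookkeeping consistent, and the other logical rules propagate through the outer context exactly as in the main induction.
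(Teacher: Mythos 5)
Your handling of the axiom-instance case is exactly the paper's argument: your $\sigma^*$ is the paper's $\sigma'$, safety is used precisely to ensure that this one substitution agrees with the substitutions Lemma~\ref{lem:init-tau} prescribes for the two sides separately, and two applications of that lemma around the transformed axiom close the case. You have also correctly put your finger on the one step the paper dismisses as ``straightforward'': for a derivation ending in the prefix rule, the goal is ${\it init\mbox{-}\tau}(E)\vdash\tau t'\preccurlyeq\tau u'$, whereas the induction hypothesis only yields ${\it init\mbox{-}\tau}(E)\vdash{\it init\mbox{-}\tau}(t')\preccurlyeq{\it init\mbox{-}\tau}(u')$, a different statement. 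This is a genuine issue with the statement as given: take distinct $b,c\in A$ and the safe axiomatization $E=\mbox{A1-4}+\{ab\nil\preccurlyeq ac\nil\}$. Then $E\vdash a(ab\nil)\preccurlyeq a(ac\nil)$, but $\tau(ab\nil)\preccurlyeq\tau(ac\nil)$ is not derivable from ${\it init\mbox{-}\tau}(E)=\mbox{A1-4}+\{\tau b\nil\preccurlyeq\tau c\nil\}$, since A1-4 preserve strong bisimilarity and no subterm of any term bisimilar to $\tau(ab\nil)$ is bisimilar to $\tau b\nil$, so the transformed axiom never becomes applicable.

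The gap in your proposal is that your repair does not close this case. Unfolded, your strengthened statement for $C=\alpha[\cdot]$ asserts ${\it init\mbox{-}\tau}(E)\vdash\tau r\preccurlyeq\tau s$ whenever $E\vdash r\preccurlyeq s$; its own axiom-instance case demands $\tau\sigma(v)\preccurlyeq\tau\sigma(w)$, and the ``single $\tau$-congruence step at the root'' you invoke would need $\sigma(v)\preccurlyeq\sigma(w)$ --- the \emph{untranslated} instance --- to be derivable from ${\it init\mbox{-}\tau}(E)$. The transformed axiom only supplies $\sigma'({\it init\mbox{-}\tau}(v))\preccurlyeq\sigma'({\it init\mbox{-}\tau}(w))$, in which the initial prefixes of $v$ and $w$ have already been renamed, and the counterexample above defeats the strengthened claim just as it defeats the original. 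The workable fix is to weaken the conclusion to $E\cup{\it init\mbox{-}\tau}(E)\vdash{\it init\mbox{-}\tau}(t)\preccurlyeq{\it init\mbox{-}\tau}(u)$: the prefix case then really is immediate, because the subderivation of $t'\preccurlyeq u'$ from $E$ is available verbatim and one application of $\tau$-prefix congruence gives $\tau t'\preccurlyeq\tau u'$. This costs nothing downstream, since in the proof of Theorem~\ref{thm:correct} the proposition is only ever used to conclude derivability from $\mc{A}(E)$, which contains both $E$ and ${\it init\mbox{-}\tau}(E)$.
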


\begin{proof}
By induction on the derivation of $t\preccurlyeq u$ from $E$. The cases of reflexivity, transitivity and closure under context are straightforward.
We focus on the case of a substitution instance of an axiom, meaning that there are an axiom $v\preccurlyeq w$ in $E$ and a
substitution $\sigma$ such that $\sigma(v)=t$ and $\sigma(w)=u$. Since $v$ and $w$ are safe, Lem.~\ref{lem:init-tau} can be applied to both $v$ and $w$.
We define $\sigma'(x)={\it init\mbox{-}\tau}(\sigma(x))$ if $x$ has an initial occurrence in $v+w$ and $\sigma'(x)=\sigma(x)$ otherwise.
Since $v\preccurlyeq w$ is safe, $\sigma'$ can be used in the application of Lem.~\ref{lem:init-tau} to both $v$ and $w$.
By two applications of Lem.~\ref{lem:init-tau} and one of the axiom ${\it init\mbox{-}\tau}(v\preccurlyeq w)$ in ${\it init\mbox{-}\tau}(E)$
we derive\\[1.5ex]
\mbox{}\hfill
$\displaystyle
{\it init\mbox{-}\tau}(t) = {\it init\mbox{-}\tau}(\sigma(v))
= \sigma'({\it init\mbox{-}\tau}(v))
\preccurlyeq \sigma'({\it init\mbox{-}\tau}(w))
= {\it init\mbox{-}\tau}(\sigma(w))
= {\it init\mbox{-}\tau}(u)
$
\end{proof}

Now we are ready to prove Thm.~\ref{thm:correct} for preorders. As said before, the case of equivalences can be proved following the same lines.

\begin{proof}
Let $E$ be a ground-complete axiomatization for BCCSP($A$) modulo $\sqsubseteq_c$.
Suppose that $t\sqsubseteq_w u$ where either the BCCS($A$) terms $t$ and $u$ are closed or $E$ is $\omega$-complete for BCCSP($A$).
To show that $\mc{A}(E)\vdash t\preccurlyeq u$, we distinguish four cases.
Note, with regard to Prop.~\ref{prop:nf}, that A1-4 and WIF1-2 equate closed terms only with closed terms.

\begin{enumerate}
  \item $t\not\mv{\tau}$ and $u\not\mv{\tau}$.
  By Prop.~\ref{prop:nf}(1), from $\mbox{A1-4+WIF1-2}$ we can derive $t\approx t'$ and $u\approx u'$ where
  $t'$ and $u'$ are BCCSP($A$) terms (and closed if $t$ and $u$ are closed).
  Since $t\sqsubseteq_w u$ and $\mbox{A1-4+WIF1-2}$ are sound for BCCS($A$) modulo $\equiv_w$, we have $t'\sqsubseteq_w u'$.
  Since $t'$ and $u'$ are BCCSP($A$) terms and $\sqsubseteq_w$ coincides with $\sqsubseteq_c$ over closed BCCSP($A$) terms, $t'\sqsubseteq_c u'$.
  Since $E$ is ground-complete for BCCSP($A$) modulo $\sqsubseteq_c$, and $\omega$-complete if $t$ or $u$ is not closed,
  it follows that $E\vdash t'\preccurlyeq u'$. Hence $\mc{A}(E)\vdash t\preccurlyeq u$.

  \item $t\mv{\tau}$ and $u\mv{\tau}$.
  By Prop.~\ref{prop:nf}(2), from $\mbox{A1-4+WIF1-2}$ we can derive $t\approx \sum_{i\in I} \tau t_i$
  and $u\approx \sum_{j\in J} \tau u_j$ where $I,J\neq\emptyset$ and the $t_i$ and $u_j$ are BCCSP($A$) terms (and closed if $t$ and $u$ are closed).
  Since $t\sqsubseteq_w u$ and $\mbox{A1-4+WIF1-2}$ are sound for BCCS($A$) modulo $\equiv_w$,
  we have $\sum_{i\in I} \tau t_i \sqsubseteq_w \sum_{j\in J} \tau u_j$. Pick an $a\in A$.
  We have $\sum_{i\in I} at_i \equiv_{\rm WIF} a(\sum_{i\in I} \tau t_i) \sqsubseteq_w a(\sum_{j\in J} \tau u_j) \equiv_{\rm WIF} \sum_{j\in J} au_j$.
So $\sum_{i\in I} at_i \sqsubseteq_w \sum_{j\in J} au_j$.
Since the $t_i$ and $u_j$ are BCCSP($A$) terms and $\sqsubseteq_w$ coincides with $\sqsubseteq_c$ over closed BCCSP($A$) terms,
$\sum_{i\in I} at_i \sqsubseteq_c \sum_{j\in J} au_j$.
Since $E$ is ground-complete for BCCSP($A$) modulo $\sqsubseteq_c$, and $\omega$-complete if $t$ or $u$ is not closed,
it follows that $E\vdash \sum_{i\in I} at_i\preccurlyeq \sum_{j\in J} au_j$. So, since $E$ is safe, by Prop.~\ref{prop:key},
$\mc{A}(E)\vdash \sum_{i\in I} \tau t_i={\it init\mbox{-}\tau}(\sum_{i\in I} at_i)\preccurlyeq{\it init\mbox{-}\tau}(\sum_{j\in J} au_j)=\sum_{j\in J} \tau u_j$.
Hence $\mc{A}(E)\vdash t\preccurlyeq u$.

  \item $t\not\mv{\tau}$ and $u\mv{\tau}$. By requirement (\ref{alg-req4}) of Def.~\ref{def:algorithm-weak-preorder},
  W1 is included in $\mc{A}(E)$.
  We have $\mc{A}(E)\vdash t\preccurlyeq \tau t \preccurlyeq \tau u\approx u$. The first step follows from W1,
  the second from case (2), and the third from D1 together with $u\mv{\tau}$.

  \item $t\mv{\tau}$ and $u\not\mv{\tau}$. By requirement (\ref{alg-req5}) of Def.~\ref{def:algorithm-weak-preorder},
  W2 is included in $\mc{A}(E)$. We have $\mc{A}(E)\vdash t\approx \tau t \preccurlyeq \tau u\preccurlyeq u$.
  The first step follows from D1 together with $t\mv{\tau}$, the second from case (2), and the third from W2.
\qedhere
\end{enumerate}
\end{proof}

\begin{rem}\label{rem:alternative}
There is an alternative approach for the method introduced in this section that avoids the use of the ${\it init\mbox{-}\tau}$
operator. That is, clause (\ref{alg-req2}) in the construction of $\mc{A}(E)$ in Def.~\ref{def:algorithm-weak-preorder} is omitted.
Moreover, the axiomatization $E$ does not have to be safe. 
And while we chose to ignore preservation of soundness, as the
${\it init\mbox{-}\tau}$ operator would give rise to some technical
complications, in the alternative approach, soundness of $E$ modulo
the strong semantics yields soundness of $\mc{A}(E)$ modulo the
weak semantics in a straightforward fashion, provided we assume the
strong preorder (or equivalence) is included in the corresponding weak one.
However, the price to pay is that this alternative method only works
for ground-completeness (so not for $\omega$-completeness), and
assumes the so-called Fresh Atom Principle (see
e.g.\ \cite{Gla05}). We briefly sketch the idea behind this
alternative approach.

The axiomatization $E$ is required to be sound and ground-complete for BCCSP($A$). The crucial case (2) in the proof of
Thm.~\ref{thm:correct} can now be tackled without ${\it init\mbox{-}\tau}$. As before we arrive at
$\sum_{i\in I} \tau t_i\sqsubseteq_w \sum_{j\in J} \tau u_j$, but as we are dealing with ground-completeness
only, the $t_i$ and $u_j$ are closed terms. Let $a$ be a fresh action which is not in the alphabet $A$.
The last mentioned relation together with the soundness of WIF1 yields $\sum_{i\in I} at_i\sqsubseteq_w \sum_{j\in J} au_j$.
This implies $\sum_{i\in I} at_i\sqsubseteq_c \sum_{j\in J} au_j$.
Since $a$ is fresh, renaming it into $\tau$ yields
$\sum_{i\in I} \tau t_i \sqsubseteq_c \sum_{j\in J} \tau u_j$, where $\tau$ is interpreted as a concrete action.
So by ground-completeness,
$E\vdash \sum_{i\in I} \tau t_i \preccurlyeq \sum_{j\in J} \tau u_j$, which implies
$\mc{A}(E)\vdash t \preccurlyeq u$.

To see why this reasoning does not extend to $\omega$-completeness, assume that the $t_i$ and $u_j$
are open terms. Then the inequation $\sum_{i\in I} \tau t_i\sqsubseteq_w \sum_{j\in J} \tau u_j$
really means that $\sum_{i\in I} \tau\sigma(t_i)\sqsubseteq_w \sum_{j\in J} \tau\sigma(u_j)$ for any closed substitution
$\sigma$ in BCCS($A$). From that, one may not conclude that this equation also holds for any closed substitution
$\sigma$ in BCCS($A \cup\{a\}$), and the latter is needed to infer $\sum_{i\in I} at_i\sqsubseteq_c \sum_{j\in J} au_j$.
\end{rem}

\section{Application to Failures, Completed Traces and Traces}
\label{sec:apps}

In this section, the algorithm from the previous section is applied to produce axiomatizations for BCCS($A$)
modulo the weak failures, completed trace and trace preorders and equivalences from axiomatizations for BCCSP($A$)
modulo their concrete counterparts.


\subsection{Failures semantics}

According to \cite{CFLN08}, A1-4 together with the axiom
  \[{\rm F1}\qquad a(x+y) \preccurlyeq  ax+a(y+z)\]
constitute a sound and ground-complete axiomatization for BCCSP($A$)
modulo $\precsim_{\rm F}$. If $|A|=\infty$ then this axiomatization is $\omega$-complete,
while if $|A|<\infty$ then a finite basis for the inequational theory of BCCSP($A$) modulo $\precsim_{\rm F}$ is obtained
by adding the following axiom:
\[\mathrm{F2}\qquad \sum_{a\in A} ax_a \preccurlyeq \sum_{a\in A}ax_a+y.\]

Our algorithm from the previous section produces a ground-complete axiomatization
for BCCS($A$) modulo $\precsim_{\rm WF}$, which consists of A1-4, WIF1-2 and W1 together with
\[
  \begin{array}{r@{\qquad}rcl}
    {\rm F}1'  & \alpha(x+y) &\preccurlyeq&  \alpha x+\alpha(y+z)      \\
  \end{array}
\]
It is not hard to see that this axiomatization is sound modulo $\precsim_{\rm WF}$.
F1 is extended to F1$'$ (allowing initial $\tau$'s) in light of clause (\ref{alg-req2}),
WIF1-2 are included in light of clause (\ref{alg-req3}),
and W1 in light of clause (\ref{alg-req4}) of Def.~\ref{def:algorithm-weak-preorder}.

The axiomatization can be simplified: the following axioms together with A1-4 and WIF1 suffice.
\[
  \begin{array}{l@{\qquad}rcl}
    {\rm WIF}2'   &   \tau(x+y)  & \preccurlyeq  & \tau x+y     \\
    {\rm W}1'   &    x  &\preccurlyeq & \tau x+y                    \\
  \end{array}
\]
On the one hand, WIF2$'$ and W1$'$ are clearly sound modulo $\precsim_{\rm WF}$.
On the other hand, W1 follows directly from W1$'$ (taking $y=\nil$) and A4.
Furthermore, the two directions of WIF2 can be derived as follows:
by WIF2$'$ and A3, $\tau(x+y)+\tau x\preccurlyeq \tau x+y$;
and by WIF2$'$ (with $y=x$) and W1, $\tau x+y\preccurlyeq\tau x+x+y\preccurlyeq\tau x+\tau(x+y)$.
Finally, F1$'$ can be derived as follows: by W1$'$, $x+y\preccurlyeq \tau x+y+z
\preccurlyeq \tau x+\tau(y+z)$; so by WIF1,
$\alpha(x+y)\preccurlyeq \alpha(\tau x+\tau(y+z))\approx \alpha x+\alpha (y+z)$.

According to Thm.~\ref{thm:correct}, if $|A|=\infty$, then this axiomatization is $\omega$-complete.
And if $1<|A|<\infty$, then F2 and ${\it init\mbox{-}\tau}(\mbox{F2})$ have to be added to make the axiomatization $\omega$-complete.
But the latter inequation, $\sum_{a\in A} \tau x_a \preccurlyeq \sum_{a\in A}\tau x_a+y$, can be derived using W1$'$ and WIF1.

\begin{thm}\label{thm:WF}
  \emph{A1-4+WIF1+WIF2$'$+W1$'$} is sound and ground-complete for BCCS($A$) modulo $\precsim_{\rm WF}$. If $|A|=\infty$,
  then it is also $\omega$-complete.
  If $|A|<\infty$, then the axiomatization becomes $\omega$-complete by adding the (sound) axiom \emph{F2}.
\end{thm}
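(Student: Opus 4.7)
The plan is to derive Theorem~\ref{thm:WF} by invoking Theorem~\ref{thm:correct} on the axiomatization $E=\{\mbox{A1-4},\mbox{F1}\}$ for BCCSP($A$) modulo $\precsim_{\rm F}$ recalled at the start of this subsection, and then showing that the resulting generated axiomatization $\mc{A}(E)$ collapses to the simpler A1-4+WIF1+WIF2$'$+W1$'$ via the short derivations already displayed just above the theorem statement. First I would verify soundness of WIF2$'$ and W1$'$ modulo $\precsim_{\rm WF}$ by unfolding Definition~\ref{def:failures}: any weak failure pair of $\tau(p+q)$ is witnessed by an $s'$ reachable from $p+q$ by $\Rightarrow$, and such an $s'$ is equally reachable from $\tau p+q$ via $\tau p+q\mv{\tau} p\Rightarrow s'$ or via $q\Rightarrow s'$, while the trivial $\Rightarrow$-witness at $\tau p+q$ itself is excluded because $\tau\in\mc{I}(\tau p+q)$ and $\tau\notin A$; every weak failure pair of $p$ is also a weak failure pair of $\tau p+q$ via $\tau p+q\mv{\tau} p$; and the $\tau$-emission side-condition is trivial in both cases.

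The hypotheses of Theorem~\ref{thm:correct} are then easy to check: $\precsim_{\rm F}$ is a precongruence for BCCSP($A$), $\precsim_{\rm WF}$ agrees with $\precsim_{\rm F}$ on closed BCCSP($A$) terms so is a corresponding weak preorder, $E$ contains A1-4 and is ground-complete by \cite{CFLN08} (and $\omega$-complete when $|A|=\infty$), and F1 is safe since $x,y,z$ all occur non-initially. Requirement~(\ref{alg-req4}) of Definition~\ref{def:algorithm-weak-preorder} is triggered by $\nil\precsim_{\rm WF}\tau\nil$, so W1 is included; requirement~(\ref{alg-req5}) is not triggered, since $\precsim_{\rm WF}$ forbids $p\mv{\tau}$ with $q\not\mv{\tau}$, so W2 is omitted. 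Hence $\mc{A}(E)=\mbox{A1-4}+\mbox{F1}'+\mbox{WIF1}+\mbox{WIF2}+\mbox{W1}$, where F1$'$ bundles F1 with ${\it init\mbox{-}\tau}(\mbox{F1})$ via $\alpha\in A_\tau$, and Theorem~\ref{thm:correct} yields ground-completeness of this axiomatization for BCCS($A$) modulo $\precsim_{\rm WF}$ plus $\omega$-completeness when $|A|=\infty$. I would then show that every axiom of $\mc{A}(E)$ is derivable from A1-4+WIF1+WIF2$'$+W1$'$; these derivations are precisely the ones laid out just before the theorem statement (W1 from W1$'$ with $y:=\nil$ and A4; WIF2 from WIF2$'$ together with A3 and with W1; F1$'$ from iterated W1$'$ followed by WIF1 inside an $\alpha$-prefix).

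For the finite-alphabet case I would rerun the same plan with $E=\{\mbox{A1-4},\mbox{F1},\mbox{F2}\}$, which is still safe (in F2, $y$ occurs only initially and each $x_a$ only non-initially), so $\mc{A}(E)$ is $\omega$-complete by Theorem~\ref{thm:correct}. The only axiom of $\mc{A}(E)$ beyond the previous case is ${\it init\mbox{-}\tau}(\mbox{F2})$, namely $\sum_a\tau x_a\preccurlyeq\sum_a\tau x_a+y$; this is derivable from W1$'$ instantiated with $x:=\sum_a\tau x_a$ to absorb $y$ into an outer $\tau$, followed by WIF1 to collapse that outer $\tau$ back into the sum, so only F2 itself needs to be adjoined to the simpler basis. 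The main obstacle is the $|A|=1$ boundary case of this last derivation, where WIF1 cannot immediately collapse $\tau(\tau x_{a_1})$ because the inner sum is a singleton; the workaround is to instantiate both $x$ and $y$ in WIF1 with $x_{a_1}$ and then apply A3 to recover $\tau\tau x_{a_1}\approx\tau x_{a_1}$. Everything else is routine bookkeeping around the ${\it init\mbox{-}\tau}$ operator and the safety hypothesis of Theorem~\ref{thm:correct}.
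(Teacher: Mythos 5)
Your proposal is correct and follows essentially the same route as the paper: instantiate Thm.~\ref{thm:correct} with the concrete axiomatization A1-4+F1 (plus F2 for finite $A$) from \cite{CFLN08}, observe that W1 but not W2 is triggered, and then collapse $\mc{A}(E)$ to A1-4+WIF1+WIF2$'$+W1$'$ via the same derivations of W1, WIF2 and F1$'$ (and of ${\it init\mbox{-}\tau}(\mathrm{F2})$ from W1$'$ and WIF1). Your explicit soundness check for WIF2$'$ and W1$'$ and your handling of the $|A|=1$ instance of WIF1 (yielding $\tau\tau x\approx\tau x$ via A3) are details the paper leaves implicit, and both are correct.
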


According to \cite{Gla01}, A1-4 together with the axioms
 \[
    \begin{array}{l@{\qquad}r@{~\approx~}l}
    {\rm FE}1~   &    ax + a(y+z) & ax + a(x+y) + a(y+z)    \\
    {\rm FE}2^*~  &    a(bx+u)+a(by+v) & a(bx+by+u)+a(by+v)     \\
    \end{array}
\]
constitute a sound and ground-complete axiomatization for BCCSP($A$)
modulo  $\simeq_{\rm F}$. As remarked in \cite{FN05}, in the presence
of FE1, axiom FE2$^*$ can be simplified to
 \[
    \begin{array}{l@{\qquad}r@{~\approx~}l}
    {\rm FE}2~  &    a(x+by)+a(x+by+bz) & a(x+bx+bz).    \\
    \end{array}
\]
Moreover, in \cite{FN05} it was proved that if $|A|=\infty$ then this axiomatization is $\omega$-complete,
while if $|A|<\infty$ then a finite basis for the inequational theory of BCCSP($A$) modulo $\precsim_{\rm F}$ is obtained
by adding the following axiom:
\[\mathrm{FE3}\qquad a(x+\sum_{b\in A} bz_b) + a(x+y+\sum_{b\in A} bz_b)
                    \approx a(x+y+\sum_{b\in A} bz_b).\]

Our algorithm from the previous section produces a ground-complete axiomatization
for BCCS($A$) modulo $\simeq_{\rm WF}$, which consists of A1-4 and WIF1-2 together with
\[
  \begin{array}{r@{\qquad}r@{~\approx~}l}
    {\rm FE}1'~   &    \alpha x + \alpha(y+z) & \alpha x + \alpha(x+y) + \alpha(y+z)    \\
    {\rm FE}2'~  &    \alpha(x+by)+\alpha(x+by+bz) & \alpha(x+bx+bz).    \\
  \end{array}
\]
It is not hard to see that this axiomatization is sound modulo $\simeq_{\rm WF}$.
FE1-2 are extended (allowing initial $\tau$'s) in light of clause
(2) of Def.~\ref{def:algorithm-weak-equivalence}
and WIF1-2 are included in light of clause (3).

Also this axiomatization can be simplified: the following axiom together with A1-4 and WIF1-2 suffices.
\[
  \begin{array}{r@{\qquad}r@{~\approx~}l}
    {\rm WFE}~   &    ax+ \tau(ay+z) & \tau(ax+ay+z)  \\
  \end{array}
\]
On the one hand, WFE is a direct consequence of WIF2 and FE2$'$:
$ax+ \tau(ay+z) = \tau(ay+z) + \tau(ax+ay+z) = \tau(ax+ay+z)$.
On the other hand, the instances of FE1$'$ and FE2$'$ with
$\alpha=\tau$ can be derived with two applications of WIF2 and with
WIF2 and WFE, respectively:

$\tau x + \tau(x+y) + \tau(y+z) \approx \tau x + y + \tau(y+z)
\approx \tau x + \tau(y+z) + \tau(y+y+z) \approx \tau x + \tau(y+z)$;

$\tau(x+by)+\tau(x+by+bz) \approx \tau(x+by)+bz \approx \tau(x+bx+bz)$.

\noindent
The general instances of FE1$'$ and FE2$'$ now follow with WIF1.

In \cite{Gla97} it was already remarked that {A1-4+WIF1-2+WFE} is
a sound and ground-complete axiomatization for BCCS($A$) modulo $\simeq_{\rm WF}$. 

According to Thm.~\ref{thm:correct2}, if $|A|=\infty$, then this axiomatization is $\omega$-complete.
And if $1<|A|<\infty$, then FE3 and ${\it init\mbox{-}\tau}(\mbox{FE3})$ have to be added to make the axiomatization $\omega$-complete.
But from the latter inequation,
\[\mathrm{FE3'}\qquad \tau(x+\sum_{b\in A} bz_b) + \tau(x+y+\sum_{b\in A} bz_b)
                    \approx \tau(x+y+\sum_{b\in A} bz_b)\]
FE3 can be derived using WIF1.

\begin{thm}\label{thm:WFE}
  \emph{A1-4+WIF1-2+WFE} is sound and ground-complete for BCCS($A$) modulo $\simeq_{\rm WF}$. If $|A|=\infty$,
  then it is also $\omega$-complete.
  If $|A|<\infty$, then the axiomatization becomes $\omega$-complete by adding the (sound) axiom \emph{FE3$'$}.
\end{thm}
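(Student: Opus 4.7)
The overall plan is to apply Theorem~\ref{thm:correct2} to a known ground-complete axiomatization of BCCSP($A$) modulo $\simeq_{\rm F}$, and then to show that the resulting axiomatization $\mc{A}(E)$ is interderivable with the simpler candidate A1-4+WIF1-2+WFE. The paragraphs preceding the theorem already sketch most of the interderivation; my job is to turn those sketches into a complete argument and to handle $\omega$-completeness on top. For soundness I would verify each axiom individually modulo $\simeq_{\rm WF}$: A1-4 are standard, WIF1-2 are sound for every semantics at least as coarse as $\simeq_{\rm WIF}$ as noted in Section~\ref{sec:link}, and WFE (and later FE3$'$) can be checked by a direct inspection of weak failure pairs of both sides, verifying additionally that the $\tau$-precongruence condition from Definition~\ref{def:failures} holds.

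For ground-completeness I would take $E$ to be A1-4+FE1+FE2, which is a sound, safe, and ground-complete axiomatization of BCCSP($A$) modulo $\simeq_{\rm F}$ by \cite{Gla01,FN05}; safety holds since every variable in FE1 and FE2 occurs only under an $a$-prefix. Theorem~\ref{thm:correct2} then delivers ground-completeness of $\mc{A}(E)$ for BCCS($A$) modulo $\simeq_{\rm WF}$. Since $\mc{A}(E)$ combines $E$ with ${\it init\mbox{-}\tau}(E)$ and WIF1-2, it is exactly A1-4+WIF1-2+FE1$'$+FE2$'$ once FE1$'$ and FE2$'$ are read as schemas with $\alpha$ ranging over $A_\tau$ (the $a$-instances coming from $E$ and the $\tau$-instances from ${\it init\mbox{-}\tau}(E)$). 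Ground-completeness of A1-4+WIF1-2+WFE then follows once I establish interderivability: the direction $\mbox{WFE}$ from WIF2 and FE2$'$ is the one-line derivation given before the theorem, while in the other direction the $\tau$-instances of FE1$'$ and FE2$'$ follow from WIF2 and WFE as sketched, and the general-$\alpha$ instances are obtained by using WIF1 (or the derived identity D2 from Lemma~\ref{lemma5}) to rewrite $\alpha$-prefixes of sums in terms of sums of $\alpha$-prefixed $\tau$-terms.

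For $\omega$-completeness the same strategy applies with a stronger input axiomatization. When $|A|=\infty$, $E$ above is already $\omega$-complete for BCCSP($A$) modulo $\simeq_{\rm F}$ by \cite{FN05}; Theorem~\ref{thm:correct2} upgrades ground-completeness of $\mc{A}(E)$ to $\omega$-completeness, which transfers to A1-4+WIF1-2+WFE by the interderivability just established. When $|A|<\infty$, I instead take $E' = E \cup \{\mbox{FE3}\}$, which is a finite $\omega$-complete basis for BCCSP($A$) modulo $\simeq_{\rm F}$ by \cite{FN05} and is still safe. Theorem~\ref{thm:correct2} then yields $\omega$-completeness of $\mc{A}(E')$, which contains FE3 together with ${\it init\mbox{-}\tau}(\mbox{FE3}) = \mbox{FE3}'$, and the paper text already observes that FE3 itself follows from FE3$'$ by a single application of WIF1. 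Hence A1-4+WIF1-2+WFE+FE3$'$ is $\omega$-complete.

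The principal technical point I expect to spend the most care on is the derivation of the general-$\alpha$ instances of FE1$'$ and FE2$'$ from A1-4+WIF1-2+WFE. After the $\tau$-instances are in hand, one has to push an outer $\alpha$-prefix through a sum using WIF1 in one direction, apply the $\tau$-instance inside, and then recombine using WIF1 in the other direction; this routine but fiddly bookkeeping with D1-2 is where the bulk of the low-level work lies.
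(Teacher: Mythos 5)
Your proposal is correct and follows essentially the same route as the paper: instantiate Theorem~\ref{thm:correct2} with the safe axiomatization A1-4+FE1+FE2 (plus FE3 when $|A|<\infty$) for $\simeq_{\rm F}$, observe that $\mc{A}(E)$ is A1-4+WIF1-2+FE1$'$+FE2$'$ (plus FE3 and FE3$'$), and then reduce to A1-4+WIF1-2+WFE via the interderivations sketched before the theorem, checking soundness separately. The only point the paper leaves equally implicit is the D1/WIF1 bookkeeping needed to lift the $\tau$-instances of FE1$'$ and FE2$'$ to general $\alpha$, which you correctly identify as the main low-level obligation.
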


In \cite{AFI07,FG07} an algorithm is presented which takes as
input a sound and ground-complete inequational axiomatization
for BCCSP modulo a preorder $\sqsubseteq$ which \emph{includes the
ready simulation preorder} and is \emph{initials
preserving},\footnote{Initials preserving means that $p\sqsubseteq q$ implies
$\mc{I}(p)\subseteq \mc{I}(q)$.} and generates as output an equational
axiomatization which is sound and ground-complete for
BCCSP modulo the corresponding equivalence---its kernel: $\sqsubseteq
\cap \sqsubseteq^{-1}$. Moreover, if the
original axiomatization is $\omega$-complete, then so is the
resulting axiomatization. 
Using this algorithm, the above-mentioned axiomatization of
$\simeq_{\rm F}$ could have been obtained from the one of $\precsim_{\rm F}$.

In \cite{CFG08} we lifted this result to weak semantics, which makes
the aforementioned algorithm applicable to all 87 preorders surveyed
in \cite{Gla93} that are at least as coarse as the ready simulation preorder.
In \cite{CFG09} we obtain an alternative proof of Thm.~\ref{thm:WFE}
by applying the algorithm of \cite{CFG08} to the axiomatizations of Thm.~\ref{thm:WF}.

\subsection{Completed trace semantics}
\label{sec:completed-traces}

According to \cite{Gla01}, A1-4 together with the axiom
\[
    \begin{array}{l@{\qquad}rcl}
    {\rm CTE}~   &   a(bw + cx + y + z) & \approx  &   a(bw + y) + a(cx + z)\\
    \end{array}
\]
constitute a sound and ground-complete axiomatization for BCCSP($A$)
modulo $\simeq_{\rm CT}$. After adding the axiom FE1, the
axiomatization becomes $\omega$-complete \cite{Gro90}.

By applying our algorithm, we obtain a ground-complete axiomatization for BCCS($A$) modulo
$\simeq_{\rm WCT}$, which consists of A1-4, WIF1-2 and
\[
    \begin{array}{l@{\qquad}rcl}
    {\rm CTE}'~   &   \alpha(bw + cx + y + z) & \approx  &   \alpha(bw + y) + \alpha(cx + z)\\
    \end{array}
\]
It is not hard to see that this axiomatization is sound modulo $\simeq_{\rm WCT}$.
CTE is extended (allowing initial $\tau$'s) in light of clause (\ref{alg-req2})
of Def.~\ref{def:algorithm-weak-preorder}.
This axiomatization is also $\omega$-complete, since FE1$'$ can be derived, as shown above.
Moreover, CTE$'$ follows from CTE and WIF1.

\begin{thm}
  \emph{A1-4+WIF1-2+CTE} is sound, ground-complete as well as $\omega$-complete for BCCS($A$) modulo $\simeq_{\rm WCT}$.
\end{thm}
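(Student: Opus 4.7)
\noindent\emph{Proof plan.}
The plan is to prove soundness, ground-completeness and $\omega$-completeness separately; the latter two will share most of the argument. For \emph{soundness} I would verify each axiom individually. Axioms A1--4 are sound for bisimilarity and hence for $\simeq_{\rm WCT}$; axioms WIF1 and WIF2 were already observed (just after their introduction in Sect.~\ref{sec:link}) to be sound for every semantics at least as coarse as weak impossible futures, and so for $\simeq_{\rm WCT}$; for CTE I would verify directly that, under any closed substitution $\rho$, the weak completed traces of $\rho(a(bw{+}cx{+}y{+}z))$ and of $\rho(a(bw{+}y){+}a(cx{+}z))$ coincide---both equal $\{a\sigma \mid \sigma\in\mc{WCT}(\rho(bw))\cup\mc{WCT}(\rho(cx))\cup\mc{WCT}(\rho(y))\cup\mc{WCT}(\rho(z))\}$---and that neither term has an initial $\tau$-transition, so the $\tau$-preservation requirement in Def.~\ref{def:traces} is trivially met.

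For \emph{ground- and $\omega$-completeness} I would invoke Thm.~\ref{thm:correct2}, with $E=\mbox{A1-4+CTE}$ (sound and ground-complete for BCCSP($A$) modulo $\simeq_{\rm CT}$ by \cite{Gla01}) for ground-completeness, and with $E=\mbox{A1-4+CTE+FE1}$ ($\omega$-complete by \cite{Gro90}) for $\omega$-completeness. In each case $E$ is safe, since every variable in CTE and FE1 occurs only under an action prefix, and $E$ contains A1-4. A crucial observation is that clause~(4) of Def.~\ref{def:algorithm-weak-equivalence} is vacuous for $\simeq_{\rm WCT}$: the built-in condition ``$s_1\mv{\tau}$ implies $s_2\mv{\tau}$'' in Def.~\ref{def:traces} forbids any closed pair $p\simeq_{\rm WCT}q$ with $p\mv{\tau}$ and $q\not\mv{\tau}$, so WE will not be added. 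Consequently $\mc{A}(E)=E\cup{\it init\mbox{-}\tau}(E)\cup\{\mbox{WIF1, WIF2}\}$ will be, by Thm.~\ref{thm:correct2}, ground-complete (resp.\ $\omega$-complete) for BCCS($A$) modulo $\simeq_{\rm WCT}$.

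It will then remain to show that every axiom of $\mc{A}(E)$ is derivable from A1-4+WIF1-2+CTE, so that the two axiomatizations prove the same equations and completeness transfers. The non-trivial axioms are FE1 (present only in the $\omega$-complete choice of $E$), ${\it init\mbox{-}\tau}(\mbox{CTE})$ (i.e.\ CTE$'$ at $\alpha=\tau$), and ${\it init\mbox{-}\tau}(\mbox{FE1})$ (i.e.\ FE1$'$ at $\alpha=\tau$). FE1$'$($\tau$) is obtained from A1-4+WIF1-2 by the two-step WIF2 calculation already exhibited in the $\simeq_{\rm WF}$ discussion preceding Thm.~\ref{thm:WFE}. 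Given FE1$'$($\tau$), FE1 is derived by applying $a$-prefix congruence to FE1$'$($\tau$), using WIF1 at $\alpha=\tau$ (reversed) to recombine the two adjacent $\tau$-summands on the right into $\tau(\tau x+\tau(x+y))$, and then distributing the outer $a$-prefix by two uses of WIF1 at $\alpha=a$. A small derived identity used here and in what follows is $\alpha\tau u\approx\alpha u$, obtained by instantiating WIF1 with $x=y=u$ and applying A3.

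The main technical obstacle is the derivation of CTE$'$($\tau$): $\tau(bw+cx+y+z)\approx\tau(bw+y)+\tau(cx+z)$, from CTE and WIF1 (augmented with A1-4 and WIF2). The plan is to establish two absorption laws whose combination yields CTE$'$($\tau$). The first, $\tau(bw+y)+\tau(cx+z) \approx \tau(bw+y)+\tau(cx+z)+\tau(bw+cx+y+z)$, is obtained by rewriting the left-hand side as $\tau(\tau(bw+y)+\tau(cx+z))$ via WIF1 at $\alpha=\tau$ (reversed), and then applying D2 from Lem.~\ref{lemma5} at $\alpha=\tau$. The converse, $\tau(bw+cx+y+z)\approx\tau(bw+cx+y+z)+\tau(bw+y)+\tau(cx+z)$, is the delicate step: the plan is to lift the $a$-level identity $a(bw+cx+y+z)\approx a(\tau(bw+y)+\tau(cx+z))$, which follows from CTE and WIF1 at $\alpha=a$, to the $\tau$-level, using WIF1 at $\alpha=\tau$ together with $\alpha\tau u\approx\alpha u$. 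Once CTE$'$($\tau$), FE1 and FE1$'$($\tau$) are all derived, A1-4+WIF1-2+CTE proves every axiom of $\mc{A}(E)$, so it inherits both ground- and $\omega$-completeness from Thm.~\ref{thm:correct2}.
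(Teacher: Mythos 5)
Your overall route is the paper's own: soundness axiom by axiom, then Thm.~\ref{thm:correct2} applied to $E=\mbox{A1-4+CTE}$ (resp.\ $E=\mbox{A1-4+CTE+FE1}$), followed by a derivation of the surplus axioms of $\mc{A}(E)$ from A1-4+WIF1-2+CTE. The safety check, the observation that clause~(4) of Def.~\ref{def:algorithm-weak-equivalence} is vacuous for $\simeq_{\rm WCT}$, and the derivations of FE1$'$ at $\alpha=\tau$ (via WIF2) and of FE1 from it (via WIF1) are all fine and agree with the paper. You have also correctly isolated the one step the paper passes over with the bare assertion ``CTE$'$ follows from CTE and WIF1'', namely the $\tau$-instance ${\it init\mbox{-}\tau}(\mbox{CTE})$: $\tau(bw+cx+y+z)\approx\tau(bw+y)+\tau(cx+z)$. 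Your first absorption law (the right-hand side absorbs the left, via WIF1 at $\alpha=\tau$ and D2) is correct.

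The second absorption law is where the proof breaks. ``Lifting the $a$-level identity $a(bw+cx+y+z)\approx a(\tau(bw+y)+\tau(cx+z))$ to the $\tau$-level'' is not a move equational logic affords: no inference rule, and no instance of WIF1 or of $\alpha\tau u\approx\alpha u$, ever replaces a visible prefix by $\tau$ or strips a prefix, so knowing $at\approx au$ gives no purchase on $\tau t$ versus $\tau u$. Worse, the target equation is not derivable from A1-4+WIF1-2+CTE at all. For distinct $b,c\in A$, consider the property $\Pi(p)$: ``there is a $p'$ with $p\Rightarrow\mv{\tau}p'$ and $c\notin\mc{WT}(p')$''. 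Every closed instance of every axiom preserves $\Pi$ in both directions: for CTE both sides begin with a visible prefix and hence have no $\Rightarrow\mv{\tau}$-derivatives whatsoever, while for WIF1 and WIF2 the only $\Rightarrow\mv{\tau}$-derivative present on one side but not the other has a weak-trace set that is a union of weak-trace sets of derivatives present on both sides. Moreover $\Pi$ is preserved by all inference rules, using for the prefix rule that every axiom also preserves weak traces. Yet $\Pi(\tau b\nil+\tau c\nil)$ holds (witness $b\nil$) while $\Pi(\tau(b\nil+c\nil))$ fails, so the sound closed instance $\tau(b\nil+c\nil)\approx\tau b\nil+\tau c\nil$ of ${\it init\mbox{-}\tau}(\mbox{CTE})$ is underivable. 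The gap is therefore not patchable within A1-4+WIF1-2+CTE: you must retain CTE$'$ (or at least its $\tau$-instance) as an axiom rather than derive it, and the same objection applies to the paper's own one-line justification at this point.
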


The axiomatization A1-4 together with
\[
    \begin{array}{l@{\qquad}rcl}
    {\rm CT}1~   &    ax & \preccurlyeq  &  ax+y     \\
    {\rm CT}2~   &   a(bw + cx + y + z) & \preccurlyeq  &   a(bw + y) + a(cx + z)\\
    \end{array}
\]
from \cite{Gla01} is sound and ground-complete for BCCSP($A$) modulo $\precsim_{\rm CT}$.
After adding the axiom F1, the axiomatization becomes $\omega$-complete.
This $\omega$-completeness result follows from the
$\omega$-completeness of the above axiomatization for BCCSP($A$) modulo $\simeq_{\rm CT}$.
Namely, suppose that $t\precsim_{\rm CT}u$. If $t$ does not contain a summand of the form $at'$,
then clearly $t$ and $u$ must consist of exactly the same variable summands, so that $t=u$.
And if $t$ contains a summand $at'$, then by CT1, $t\preccurlyeq t+u$. Since $t+u\simeq_{\rm CT} u$,
derivability of $t+u\approx u$ from A1-4, CT1-2, F1 follows from the aforementioned
$\omega$-completeness result for BCCSP($A$) modulo $\simeq_{\rm CT}$,
using that CTE follows from CT1-2, and FE1 from CT1 and F1.

By applying the algorithm, we obtain a ground-complete axiomatization for BCCS($A$) modulo
$\precsim_{\rm WCT}$, which consists of A1-4, WIF1-2, W1 and
\[
    \begin{array}{l@{\qquad}rcl}
    {\rm CT}1'   &    \alpha x & \preccurlyeq  &  \alpha x+y     \\
    {\rm CT}2'   &   \alpha (bw + cx + y + z) & \preccurlyeq  &   \alpha (bw + y) + \alpha (cx + z)\\
    \end{array}
\]
It is not hard to see that this axiomatization is sound modulo $\precsim_{\rm WCT}$.
CT1-2 are extended (allowing initial $\tau$'s) in light of clause (\ref{alg-req2})
of Def.~\ref{def:algorithm-weak-preorder}.

WIF1+WIF2$'$+W1$'$+CT1$'$ together with A1-4 suffice, because
CT2$'$ can be derived using W1 and WIF1:
$\alpha(bw + cx + y + z) \preccurlyeq
\alpha(\tau(bw +y) +\tau( cx  + z)) \approx
\alpha(bw + y) + \alpha(cx + z)$.

We conclude that A1-4+WIF1+ WIF2$'$+W1$'$+ CT1$'$ is
ground-complete for BCCS($A$) modulo $\precsim_{\rm WCT}$.
It is also $\omega$-complete, since F1$'$ can be derived, as shown before.

\begin{thm}
  \emph{A1-4+WIF1+WIF2$'$+W1$'$+CT1$'\!$} is sound, ground-complete and $\omega$-complete for BCCS($A$) modulo $\precsim_{\rm WCT}$.
\end{thm}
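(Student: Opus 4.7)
My plan is to derive the theorem from Thm.~\ref{thm:correct} applied to known concrete axiomatizations, together with the interderivability derivations already assembled in the preceding discussion. Soundness of A1-4, WIF1, WIF2$'$, W1$'$ and CT1$'$ modulo $\precsim_{\rm WCT}$ is a routine check from Def.~\ref{def:traces}: A1-4 and WIF1 are standard; WIF2$'$ holds because every weak completed trace of $\tau(x+y)$ either comes from the $y$-summand after an initial $\tau$, or from $x+y$ after an initial $\tau$, and both possibilities are also weak completed traces of $\tau x+y$; W1$'$ only adds an initial $\tau$-option followed by unreachable behaviour; and CT1$'$ is sound because each weak completed trace of $\alpha x$ is a weak completed trace of $\alpha x+y$, and both terms have the same initial $\alpha$-transition.

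For ground-completeness, I would invoke Thm.~\ref{thm:correct} with $E = \mbox{A1-4+CT1+CT2}$, which by \cite{Gla01} is a sound and ground-complete axiomatization of BCCSP($A$) modulo $\precsim_{\rm CT}$; it is safe because in each axiom every variable occurs either only initially (the variables of A1--A4 and the $y$ of CT1) or only non-initially (the $x$ of CT1 and all variables of CT2, which sit under an $a$-prefix). The witness $\nil\precsim_{\rm WCT}\tau\nil$ triggers clause (\ref{alg-req4}) of Def.~\ref{def:algorithm-weak-preorder}, whereas clause (\ref{alg-req5}) never triggers since $p\precsim_{\rm WCT} q$ together with $p\mv{\tau}$ entails $q\mv{\tau}$ by definition. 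Hence $\mc{A}(E) = \mbox{A1-4+CT1+CT2+}{\it init\mbox{-}\tau}(\mbox{CT1})\mbox{+}{\it init\mbox{-}\tau}(\mbox{CT2})\mbox{+WIF1-2+W1}$ is ground-complete for BCCS($A$) modulo $\precsim_{\rm WCT}$, and it remains to prove mutual derivability between this system and the proposed simplified one. The nontrivial direction is already laid out just before the theorem: CT1 and ${\it init\mbox{-}\tau}(\mbox{CT1})$ are instances of CT1$'$; CT2 and ${\it init\mbox{-}\tau}(\mbox{CT2})$ are instances of CT2$'$, which is derived via $\alpha(bw+cx+y+z)\preccurlyeq\alpha(\tau(bw+y)+\tau(cx+z))\approx\alpha(bw+y)+\alpha(cx+z)$ using W1 twice and WIF1; WIF2 follows from WIF2$'$ and W1$'$ as shown in the failures subsection; and W1 is the instance of W1$'$ with $y=\nil$ combined with A4.

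For $\omega$-completeness, I would repeat the same reduction with $E' = E\cup\{\mbox{F1}\}$, which, as argued in the paragraph preceding the theorem, is $\omega$-complete for BCCSP($A$) modulo $\precsim_{\rm CT}$; it remains safe, since the variables in F1 all sit under an $a$-prefix. By Thm.~\ref{thm:correct}, $\mc{A}(E')$ is then $\omega$-complete for BCCS($A$) modulo $\precsim_{\rm WCT}$, and the only remaining obligation is to derive F1 together with ${\it init\mbox{-}\tau}(\mbox{F1})$, i.e.\ F1$'$, from $\mbox{A1-4+WIF1+WIF2}'\mbox{+W1}'$. This derivation was supplied in the failures subsection: $x+y\preccurlyeq\tau x+y+z\preccurlyeq\tau x+\tau(y+z)$ by two uses of W1$'$, followed by prefixing with $\alpha$ and applying WIF1 to rewrite $\alpha(\tau x+\tau(y+z))$ as $\alpha x+\alpha(y+z)$.

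I do not expect any genuine technical obstacle; the work is almost entirely bookkeeping, since Thm.~\ref{thm:correct} does the heavy lifting, the required simplification derivations are already exhibited in the paper, and the safety checks for $E$ and $E'$ are immediate.
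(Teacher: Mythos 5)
Your proposal is correct and follows essentially the same route as the paper: apply Thm.~\ref{thm:correct} to the concrete axiomatization A1-4+CT1+CT2 (plus F1 for $\omega$-completeness), check soundness of the simplified system directly, and then reduce $\mc{A}(E)$ to A1-4+WIF1+WIF2$'$+W1$'$+CT1$'$ via the derivations of CT2$'$, F1$'$, WIF2 and W1 already exhibited in the failures subsection. The only difference is that you make the safety checks and the triggering of clause~(\ref{alg-req4}) explicit, which the paper leaves implicit.
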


\subsection{Trace semantics}

According to \cite{Gla01,Gro90}, A1-4 together with the axiom
\[
    \begin{array}{l@{\qquad}rcl}
    {\rm TE}~   &   ax+ay & \approx  &   a(x + y)\\
    \end{array}
\]
constitute a sound and ground-complete axiomatization for BCCSP($A$)
modulo $\simeq_{\rm T}$. If $|A|>1$ it is even $\omega$-complete \cite{Gro90}.

By applying our algorithm, we obtain a ground-complete axiomatization for
BCCS($A$) modulo $\simeq_{\rm CT}$, consisting of A1-4, WIF1-2, WE and 
\[
    \begin{array}{l@{\qquad}rcl}
    {\rm TE}'~   &   \alpha x + \alpha y & \approx  &   \alpha(x + y)\\
    \end{array}
\]
If $|A|>1$ it is even $\omega$-complete.
It is easy to see that it is sound modulo $\simeq_{\rm CT}$.
Clearly, owing to WE, WIF1-2 are redundant.

\begin{thm}
  \emph{A1-4+WE+TE} is sound and ground-complete for BCCS($A$) modulo $\simeq_{\rm CT}$.
  If $|A|>1$, then this axiomatization is also $\omega$-complete.
\end{thm}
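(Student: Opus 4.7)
The plan is to derive the theorem from Theorem~\ref{thm:correct2} applied to the concrete axiomatization $E =$ A1-4+TE. This $E$ is safe, contains A1-4, and (as cited from \cite{Gla01,Gro90}) is sound and ground-complete for BCCSP($A$) modulo $\simeq_{\rm T}$, becoming $\omega$-complete once $|A|>1$. The algorithm of Def.~\ref{def:algorithm-weak-equivalence} produces $\mc{A}(E) =$ A1-4+TE+${\it init\mbox{-}\tau}$(TE)+WIF1-2, further augmented with WE in light of clause~(4) of that definition, triggered because closed BCCS($A$) terms such as $\tau\nil$ and $\nil$ (one performing an initial $\tau$-transition, the other not) are equated by the corresponding weak congruence $\simeq_{\rm CT}$. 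By Theorem~\ref{thm:correct2}, $\mc{A}(E)$ is ground-complete for BCCS($A$) modulo $\simeq_{\rm CT}$, and $\omega$-complete when $|A|>1$.

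It then remains to verify soundness of A1-4+WE+TE modulo $\simeq_{\rm CT}$ and to show that it is equivalent, in derivation power, to $\mc{A}(E)$. Soundness is a routine check: A1-4 and TE extend their concrete soundness to the target semantics (where merging $\alpha$-summands preserves the relevant trace set), and WE is sound since the target equivalence absorbs initial $\tau$-prefixes. Conversely, the inclusion A1-4+WE+TE $\subseteq \mc{A}(E)$ is immediate, so one direction of the equivalence is free.

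The core task is therefore to derive ${\it init\mbox{-}\tau}$(TE), WIF1 and WIF2 from A1-4+WE+TE. For ${\it init\mbox{-}\tau}$(TE), i.e.\ $\tau x + \tau y \approx \tau(x+y)$, three applications of WE reduce both sides to $x+y$. For WIF1, $\alpha(\tau x+\tau y)\approx \alpha x+\alpha y$, apply WE inside the $\alpha$-prefix to obtain $\alpha(x+y)$, then use TE to split into $\alpha x+\alpha y$. For WIF2, $\tau x+y \approx \tau x+\tau(x+y)$, apply WE to each $\tau$-prefixed subterm on both sides and use A3 to absorb the duplicate $x$-summand, reducing each side to $x+y$. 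Once these three derivations are established, A1-4+WE+TE proves every axiom of $\mc{A}(E)$, so it inherits ground-completeness and (for $|A|>1$) $\omega$-completeness from Theorem~\ref{thm:correct2}.

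The anticipated main obstacle is minor and purely book-keeping: one must verify that WE, in combination with A1-4 and TE, truly subsumes WIF1-2 and ${\it init\mbox{-}\tau}$(TE), collapsing the algorithm's larger output to the three-axiom system stated. No technically hard step remains beyond that collapse, since the heavy lifting—the transfer of (ground- and $\omega$-)completeness from BCCSP modulo the concrete semantics to BCCS modulo its weak counterpart—has already been packaged in Theorem~\ref{thm:correct2}.
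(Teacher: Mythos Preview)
Your proposal is correct and follows the same route as the paper: apply Theorem~\ref{thm:correct2} to $E=$ A1-4+TE, then collapse the resulting $\mc{A}(E)$ to A1-4+WE+TE by observing that WE renders WIF1-2 and the $\tau$-instance of TE redundant. The paper states this collapse in one line (``Clearly, owing to WE, WIF1-2 are redundant''), whereas you spell out the derivations; note only that your WIF1 derivation invokes TE for the splitting step, which strictly covers $\alpha\in A$---for $\alpha=\tau$ use WE (or the already-derived ${\it init\mbox{-}\tau}$(TE)) instead.
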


The axiomatization A1-4 together with
\[
    \begin{array}{l@{\qquad}rcl}
    {\rm T}1   &    a(x+  y)  & \preccurlyeq  & ax+ay         \\
    {\rm T}2   &    x & \preccurlyeq  &  x+y     \\
    \end{array}
\]
from \cite{Gla01} is sound and ground-complete for BCCSP($A$) modulo $\precsim_{\rm T}$.
If $|A|>1$, then it is also $\omega$-complete.
This $\omega$-completeness result follows from the $\omega$-completeness of the above axiomatization
for BCCSP($A$) modulo $\simeq_{\rm T}$. Namely, suppose that $t\precsim_{\rm T}u$.
Then by T2, $t\preccurlyeq t+u$. Since $t+u\simeq_{\rm T} u$,
derivability of $t+u\approx u$ from A1-4, T1-2 follows from the aforementioned
$\omega$-completeness result for BCCSP($A$) modulo $\simeq_{\rm T}$,
using that TE follows from T1-2.

By applying the algorithm, we obtain a ground-complete
axiomatization for BCCS($A$) modulo $\precsim_{\rm WT}$, which consists of A1-4, WIF1-2, WE and T2 together with
\[
    \begin{array}{r@{\qquad}rcl}
    {\rm T}1'   &    \alpha(x+y)  & \preccurlyeq  & \alpha x+\alpha y
    \end{array}
\]
It is not hard to see that this axiomatization is sound modulo $\precsim_{\rm WT}$.
T1 is extended in light of clause (\ref{alg-req2}) and WE is introduced in light of
clauses (\ref{alg-req4}) and (\ref{alg-req5}) of Def.~\ref{def:algorithm-weak-preorder}.
Again, owing to WE, WIF1-2 are redundant.

\begin{thm}
  \emph{A1-4+T1$'$+T2+WE} is sound and ground-complete for BCCS($A$) modulo $\precsim_{\rm WT}$.
  If $|A|>1$, then this axiomatization is also $\omega$-complete.
\end{thm}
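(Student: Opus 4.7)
The plan is to derive the result from Theorem~\ref{thm:correct} applied to the Glabbeek/Groote axiomatization $E=\mbox{A1-4+T1+T2}$ for BCCSP($A$) modulo $\precsim_{\rm T}$, which is safe (no variable in any axiom occurs both initially and non-initially) and contains A1-4. Since the definition of $\precsim_{\rm WT}$ lacks the initial-$\tau$ clause that the other three weak preorders carry, both $\nil\precsim_{\rm WT}\tau\nil$ and $\tau\nil\precsim_{\rm WT}\nil$ hold, so clauses~(\ref{alg-req4}) and~(\ref{alg-req5}) of Def.~\ref{def:algorithm-weak-preorder} force both W1 and W2 into $\mc{A}(E)$; their conjunction is precisely WE. As A1-4 and T2 contain no action prefixes, ${\it init\mbox{-}\tau}(E)$ contributes only the instance $\tau(x+y)\preccurlyeq\tau x+\tau y$, which together with T1 yields T1$'$. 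Hence $\mc{A}(E)$ is, up to inter-derivability, $\{\mbox{A1-4}, \mbox{T1}', \mbox{T2}, \mbox{WIF1}, \mbox{WIF2}, \mbox{WE}\}$, and by Theorem~\ref{thm:correct} this set is ground-complete for BCCS($A$) modulo $\precsim_{\rm WT}$ and $\omega$-complete when $|A|>1$ (using that $E$ itself is $\omega$-complete for BCCSP($A$) modulo $\precsim_{\rm T}$ in that case).

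Next I would verify that WIF1 and WIF2 are derivable from A1-4+T1$'$+T2+WE, so that the axiomatization simplifies to the form claimed. For WIF2, WE equates $\tau x+y$ with $x+y$ and $\tau x+\tau(x+y)$ with $x+(x+y)$, and A2-3 show both are equal to $x+y$. For WIF1, WE gives $\alpha(\tau x+\tau y)\approx\alpha(x+y)$; T1$'$ provides $\alpha(x+y)\preccurlyeq\alpha x+\alpha y$, while two uses of T2 under congruence yield $\alpha x\preccurlyeq\alpha(x+y)$ and $\alpha y\preccurlyeq\alpha(x+y)$, whose sum reduces by A3 to the converse inequation. Together with the observations above, this establishes inter-derivability with $\mc{A}(E)$, so ground- and $\omega$-completeness transfer to A1-4+T1$'$+T2+WE.

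For soundness modulo $\precsim_{\rm WT}$, one verifies directly: T1$'$ holds since $\mc{WT}(\alpha(p+q))=\mc{WT}(\alpha p+\alpha q)$ (both equal $\mc{WT}(\alpha p)\cup\mc{WT}(\alpha q)$); T2 holds since $\mc{WT}(p)\subseteq\mc{WT}(p+q)$; and WE holds since $\mc{WT}(\tau p)=\mc{WT}(p)$ (the initial $\tau$ is absorbed by $\Rightarrow$, and $\precsim_{\rm WT}$ imposes no side condition on initial $\tau$'s). No step presents a substantive obstacle; the main work is the elementary equational derivation of WIF1-2 above, which mirrors derivations already carried out for $\precsim_{\rm WF}$ and $\precsim_{\rm WCT}$ in the preceding subsections.
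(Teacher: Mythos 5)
Your proposal is correct and follows essentially the same route as the paper: instantiate the general transformation (Thm.~\ref{thm:correct}) with $E=\mbox{A1-4+T1+T2}$, observe that clauses (\ref{alg-req4}) and (\ref{alg-req5}) force W1 and W2 (i.e.\ WE) and that ${\it init\mbox{-}\tau}$ turns T1 into T1$'$, and then note that WIF1-2 become redundant in the presence of WE. The paper states the redundancy of WIF1-2 and the soundness check without detail, so your explicit derivations merely fill in steps the paper leaves to the reader.
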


If $|A|=1$, then $x\preccurlyeq ax$ needs to be added to make the axiomatization $\omega$-complete
(cf.\ \cite{Gro90}). This is the only corner case where our approach breaks down, due to
the fact that this axiom is not safe: the variable $x$ has both an initial and a non-initial occurrence.

Admittedly, the application of Defs.~\ref{def:algorithm-weak-equivalence} and~\ref{def:algorithm-weak-preorder}
to trace semantics is not so interesting, because the axiom WE allows to eliminate all $\tau$'s from
terms, so that the weak setting trivially reduces to the concrete setting.

\section{Ground-Completeness for Impossible Futures} \label{sec:ground}

This section concerns the axiomatizability of impossible futures semantics.

\subsection{Concrete impossible futures preorder}
We prove that A1-4 together with the axioms
\[
\begin{array}{lrcl}
{\rm IF1}~& a(x+y)  & \preccurlyeq & ax+ay\\
{\rm IF2}~ & ax+a(y+z) & \approx & a(x+y)+ax+a(y+z)\\
\end{array}
\]
constitute a sound and ground-complete axiomatization for BCCSP($A$) modulo $\precsim_{\rm IF}$.

\begin{thm} \label{prop16}
\emph{A1-4+IF1-2} is sound and ground-complete for BCCSP($A$) modulo $\precsim_{\rm IF}$.
\end{thm}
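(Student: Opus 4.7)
The plan is to verify soundness of IF1 and IF2 directly against Definition~\ref{def:impossible-futures}, then to establish ground-completeness by induction on depth.

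For \emph{soundness}: IF1 is sound because $\mc{T}(a(x+y))=\mc{T}(ax+ay)$, so the $(\varepsilon,B)$-IFs coincide; at the $(a,B)$-level, $B\cap\mc{T}(x+y)=\emptyset$ implies $B\cap\mc{T}(x)=\emptyset$, which witnesses the IF of $ax+ay$ via state $x$; for deeper prefixes the reachable states agree, since $a(x+y)\mv{aw}s$ iff $x+y\mv{w}s$ iff $ax\mv{aw}s$ or $ay\mv{aw}s$. IF2 follows from an analogous case analysis establishing both inclusions.

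For \emph{ground-completeness} on closed BCCSP($A$) terms, I first observe that $p\precsim_{\rm IF}q$ forces $\mc{T}(p)=\mc{T}(q)$: the $(\varepsilon, A^*\setminus\mc{T}(p))$-IF of $p$ forces $\mc{T}(q)\subseteq\mc{T}(p)$, and the $(aw,\emptyset)$-IFs force the reverse inclusion. Hence $\depth(p)=\depth(q)$ and the sets of initial actions coincide. I would proceed by induction on $\depth(p)$; the base case $p=q=\nil$ is immediate. For the inductive step, write $p=\sum_i a_ip_i$ and $q=\sum_j b_jq_j$, and for each initial action $a$ set $P_a=\sum_{a_i=a}p_i$ and $Q_a=\sum_{b_j=a}q_j$. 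The plan then runs in four steps: (i) verify $P_a\precsim_{\rm IF}Q_a$, using $\mc{T}(P_a)=\mc{T}(Q_a)$ at the $\varepsilon$-level and the correspondence between $(w,B)$-IFs of $P_a$ and $(aw,B)$-IFs of $p$ for $w\neq\varepsilon$; (ii) invoke the IH to obtain $\vdash P_a\preccurlyeq Q_a$ and hence $\vdash aP_a\preccurlyeq aQ_a$ by congruence; (iii) prove a saturation lemma $\vdash at_1+\cdots+at_k\approx a(t_1+\cdots+t_k)+at_1+\cdots+at_k$ by iterating the $z=\nil$ instance of IF2, which yields $\vdash p\approx p+\sum_a aP_a$; (iv) apply IF1 iteratively to get $\vdash aQ_a\preccurlyeq\sum_{b_j=a}aq_j$. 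Chaining (iii), (ii) and (iv) gives $\vdash p\preccurlyeq p+q$.

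The \textbf{main obstacle} is closing the gap by deriving $\vdash p+q\approx q$. My plan is to strengthen the inductive claim to establish simultaneously that $p\precsim_{\rm IF}q$ implies both $\vdash p\preccurlyeq q$ and $\vdash p+q\approx q$. For the absorption, each summand $a_ip_i$ of $p$ must be exhibited as a derivable summand of $q$: the hypothesis $p\precsim_{\rm IF}q$ provides, via the $(a_i, A^*\setminus\mc{T}(p_i))$-IF, some $q_j$ with $b_j=a_i$ and $\mc{T}(q_j)\subseteq\mc{T}(p_i)$, together with IF-coverage of $p_i$ by the $a_i$-successors of $q$. Using IF2 one inserts intermediate summands of the form $a_i(p_i+y)$, where $y$ is a sub-sum of a suitable $q_j$, and combines these with IH-derived equalities on strictly smaller subterms to derive $q\approx q+a_ip_i$. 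The delicate point is that IF2 only manufactures $a(x+y)$ from pre-existing summands $ax$ and $a(y+z)$, so the decomposition of $p_i$ must be engineered to align with $q$'s $a_i$-successor structure—an alignment provided exactly by the IF-matching hypothesis.
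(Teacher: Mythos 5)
Your soundness sketch and your derivation of $\vdash p\preccurlyeq p+q$ (steps (i)--(iv)) are correct and match the paper's equation~(\ref{eqn3}) almost exactly: the paper likewise observes $\sum_{ap'\Subset p}p'\precsim_{\rm IF}\sum_{aq'\Subset q}q'$, applies the induction hypothesis, and uses IF2 and IF1 to conclude $p\preccurlyeq p+q$. The genuine gap is in your treatment of the remaining half, and it is not a mere technicality. Your plan to derive $q\approx q+a_ip_i$ for \emph{each individual summand} $a_ip_i$ of $p$ by IF2-insertions aligned with $q$'s successors cannot work in general. Take the paper's own Example~\ref{ex1}: $p=a(a\nil+a^2\nil)+a^4\nil$, $q=a(a\nil+a^3\nil)+a^3\nil$. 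To manufacture the summand $a^4\nil=a(a^3\nil)$ via IF2 you would need a decomposition $a^3\nil=x+y$ with $ax$ and $a(y+z)$ already (provably) summands of $q$; the only candidates force $a\nil$ or $a^4\nil$ itself to be a summand first, and $q\approx q+a\nil$ is unsound (it creates the impossible future $(a,\{a\})$). Moreover you cannot instead weaken $a_ip_i$ to a single existing summand of $q$ via the induction hypothesis: for the summand $a(a\nil+a^2\nil)$ of $p$, neither $a\nil+a^3\nil$ nor $a^2\nil$ satisfies $a\nil+a^2\nil\precsim_{\rm IF}\cdot$, which is precisely the phenomenon Example~\ref{ex1} is there to exhibit. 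You acknowledge the ``delicate point'' but do not resolve it, and resolving it is the actual content of the proof.

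What is missing is the paper's saturation machinery. The paper first derives $q\approx\myoverline{q}$ (Def.~\ref{def:hatq}, Lem.~\ref{lem:hatq}), where $\myoverline{q}$ adds, recursively at every level, a merged summand $a(\myoverline{\sum_{aq'\Subset q}q'})$ for each $a\in\mc{I}(q)$. It then constructs, along each completed path $\pi$ of a summand $ap'$ of $p$, ``semi-saturated'' terms $\psi^{\pi}_\ell=q^\pi_\ell+a_{\ell+1}\psi^{\pi}_{\ell+1}$ built from states $q^\pi_\ell$ of $q$ with $\mc{T}(q^\pi_\ell)\subseteq\mc{T}(p_\ell)$ (these exist by $p\precsim_{\rm IF}q$). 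The summand $ap'$ is \emph{not} absorbed into $q$; instead $p'\preccurlyeq\sum_\pi\psi^\pi_1$ is obtained from the induction hypothesis and lifted with IF1 to $ap'\preccurlyeq\sum_\pi a\psi^\pi_1$ (an inequational weakening, not an equation), and each $a\psi^\pi_1$ is then absorbed into $\myoverline{q}$ by a separate downward induction along $\pi$ using IF2 and the merged summands of $\myoverline{q}$ (equation~(\ref{eqn1})). Chaining $p\preccurlyeq p+q\approx p+\myoverline{q}\preccurlyeq\sum a\psi^\pi_1+\myoverline{q}\approx\myoverline{q}\approx q$ completes the proof. Without the saturation $\myoverline{q}$ and the $\psi^\pi_\ell$ construction, the alignment you need for IF2 is simply not available, so your proposal as it stands does not close.
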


It is not hard to see that IF1-2 are sound modulo $\precsim_{\rm IF}$.
For both axioms, the crucial observation for their soundness is that the impossible futures
$(a,B)$ induced by $a(x+y)$ are included in those of $ax$.
To give some intuition for the ground-completeness proof, we first present an example.

\begin{example}\label{ex1}
Let $p=a(a\nil+a^2\nil)+a^4\nil$ and $q=a(a\nil+a^3\nil)+a^3\nil$.
It is not hard to see that $p \precsim_{\rm IF} q$. However,
neither $a(a\nil+a^2\nil) \precsim_{\rm IF} a(a\nil+a^3\nil)$ nor
$a(a\nil+a^2\nil) \precsim_{\rm IF} a^3\nil$ holds. In order to
derive $p \preccurlyeq q$, we therefore first derive $q ~\approx~ p+q$,
and next $p \preccurlyeq p+q$.
\end{example}

In general, to derive a sound closed inequation $p\preccurlyeq q$,
first we derive $q\approx \mb{S}(q)$ (see Lem.~\ref{lem:hatq}),
where $\myoverline{q}$ contains for every $a\in\mc{I}(q)$ a
``saturated'' $a$-summand (see Def.~\ref{def:hatq}). (In
Ex.~\ref{ex1}, this saturated summand would have the form
$a(a\nil+a^2\nil+a^3\nil+a(a\nil+a^2\nil))$.) Then, in the proof
of Thm.~\ref{prop16}, we derive
$\Psi+\myoverline{q}\approx\myoverline{q}$ (equation
(\ref{eqn1})), $p\preccurlyeq \Psi$ (equation (\ref{eqn2})) and
$p\preccurlyeq p+q$ (equation (\ref{eqn3})), where the closed term
$\Psi$ is built from many ``semi-saturated'' summands (like, in
Ex.~\ref{ex1}, $p$). These results together provide the desired
proof (see the last line of the proof of Thm.~\ref{prop16}).

In the remainder of this section, $ap'\Subset p$ denotes that $ap'$ is a summand of $p$.

\begin{definition} \label{def:hatq}
For each closed term $q$, the closed term $\myoverline{q}$ is defined recursively on the depth of $q$ as follows:
\[\myoverline{q} ~=~ q~+~\sum_{a\in\mc{I}(q)} a(\myoverline{\sum_{aq'\Subset q} q'})\]
\end{definition}

\begin{example}
If $q=a(b(c\nil+d\nil)+be\nil)+af\nil$, then $\myoverline{q}=
a(b(c\nil+d\nil)+be\nil)+af\nil +a (b(c\nil+d\nil)+be\nil+f\nil +
b(c\nil+d\nil+e\nil))$.
\end{example}

\begin{lemma} \label{lem:hatq}
  For each closed term $q$, A1-4+IF1-2 $\vdash q \approx  \myoverline{q} $.
\end{lemma}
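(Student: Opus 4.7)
The plan is to argue by induction on $\depth(q)$. The base $\depth(q)=0$ means $q=\nil$ (modulo A1--4), so $\mc{I}(q)=\emptyset$ and trivially $\myoverline{q}=q$. For the inductive step, list, for each $a\in\mc{I}(q)$, the $a$-prefixed summands of $q$ as $aq_1^{(a)},\dots,aq_{k_a}^{(a)}$ and set $Q_a:=q_1^{(a)}+\cdots+q_{k_a}^{(a)}$. Each $q_i^{(a)}$ has depth strictly less than $\depth(q)$, hence so does $Q_a$, so the outer induction hypothesis is applicable to $Q_a$.

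The heart of the argument is the auxiliary claim
\[
  \mbox{A1--4+IF2}\;\vdash\; \sum_{i=1}^k ar_i \;\approx\; \sum_{i=1}^k ar_i + a(\sum_{i=1}^k r_i) \qquad (k\geq 1),
\]
which I would prove by a nested induction on $k$. The base $k=1$ is A3. For the step $k\to k+1$, the inner hypothesis first inserts the summand $a(\sum_{i\leq k}r_i)$; an instance of IF2 with $x:=r_{k+1}$, $y:=\sum_{i\leq k}r_i$, $z:=\nil$ (followed by A4 to absorb the $\nil$) then attaches the desired new summand $a(\sum_{i\leq k+1}r_i)$; and finally the inner hypothesis is invoked in the reverse direction on the subterm $\sum_{i\leq k}ar_i + a(\sum_{i\leq k}r_i)$ to erase the now-stale intermediate summand.

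Applying this auxiliary once per $a\in\mc{I}(q)$ and chaining the results with A1--4 yields $q\approx q + \sum_{a\in\mc{I}(q)} aQ_a$. The outer induction hypothesis then gives $Q_a\approx\myoverline{Q_a}$, whence $aQ_a\approx a\myoverline{Q_a}$ by closure under the $a$-prefix context, and substituting produces $q\approx q + \sum_{a\in\mc{I}(q)} a\myoverline{Q_a}=\myoverline{q}$ as required. Note that IF1 is nowhere needed. The subtle point---and the main obstacle---is the cleanup step in the nested induction: after IF2 appends the freshly saturated summand $a(\sum_{i\leq k+1}r_i)$, the derivation still carries along the stale summand $a(\sum_{i\leq k}r_i)$, which must be evicted via a reverse application of the inner IH; without this two-sided use of the equational hypothesis, the derivation would accumulate an ever-growing chain of intermediate saturated summands rather than settling on the single summand $a(\sum_{i\leq k+1}r_i)$.
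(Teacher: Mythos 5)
Your proof is correct and follows essentially the same route as the paper: an induction on $\depth(q)$ in which IF2 is used to attach, for each $a\in\mc{I}(q)$, the summand $a(\sum_{aq'\Subset q}q')$, after which the induction hypothesis saturates that summand's argument. The paper compresses your auxiliary claim into the single remark ``the first derivation step uses IF2''; your nested induction on $k$, including the reverse use of the inner hypothesis to discard the stale summand $a(\sum_{i\leq k}r_i)$, is a faithful and correctly worked-out expansion of that step.
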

\begin{proof}
  By induction on the depth of $q$. For any $a\in\mc{I}(q)$,
  \[ q~\approx~ q+a(\sum_{aq'\Subset q} q') ~\approx~ q+a(\myoverline{\sum_{aq'\Subset q} q'})\]
  The first derivation step uses IF2, and the second induction. Hence, summing up over all $a\in\mc{I}(q)$,
  \\\mbox{}\hfill
  $\displaystyle
  q~\approx~ q+\sum_{a\in\mc{I}(q)} a(\myoverline{\sum_{aq'\Subset q} q'}) ~=~ \myoverline{q} ~~~~~~ $
\end{proof}

\noindent For any closed term $q$ and $a\in\mc{I}(q)$, the closed term
$q_a$ is obtained by summing over all closed terms $q'$ such
that $q\mv{a}q'$, and then applying the saturation from
Def.~\ref{def:hatq}. The definition and lemma below generalize this idea to
terms $q_{a_1\cdots a_\ell}$ with $a_1\cdots a_\ell\in\mc{T}(q)$.
The auxiliary terms $q_{a_1\cdots a_\ell}$ will only be
used in the derivation of equation (\ref{eqn1}) within the proof
of Thm.~\ref{prop16}.

\begin{definition} \label{def:Q}
Given a closed term $q$ and a trace $a_1\cdots a_\ell$ of $q$.
\[Q_{a_1\cdots a_\ell}~=~\{q_\ell\mid \mbox{there exists a sequence of transitions }q\mv{a_1}q_{1} \cdots \mv{a_\ell} q_\ell\}\]
and
\[q_{a_1\cdots a_\ell}~=~\myoverline{\sum_{q_\ell\in Q_{a_1\cdots a_\ell}}q_\ell}\]
\end{definition}
Note that $q_\varepsilon=\myoverline{q}$. We prove some basic properties for the terms $q_{a_1\cdots a_\ell}$.

\begin{lemma} \label{lem:Q}
Given a closed term $q$, and a completed trace $a_1\cdots a_d$ of
$q$. Then, for each $0\leq \ell<d$,
\begin{itemize}
\item $q_{a_1\cdots a_\ell} \mv{a_{\ell+1}} q_{a_1\cdots
a_{\ell+1}}$; and \vspace{2mm} \item $q_{a_1\cdots a_\ell}
\mv{a_{\ell+1}} q_{\ell+1}$ ~~~ for all $q_{\ell+1}\in
Q_{a_1\cdots a_{\ell+1}}$.
\end{itemize}
\end{lemma}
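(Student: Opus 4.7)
The plan is to unfold the definitions of $Q_{a_1\cdots a_\ell}$ and $q_{a_1\cdots a_\ell}$ and read off both transitions directly. Let $r = \sum_{q_\ell\in Q_{a_1\cdots a_\ell}} q_\ell$, so that $q_{a_1\cdots a_\ell} = \myoverline{r}$, and recall from Def.~\ref{def:hatq} that
\[
\myoverline{r} \;=\; r \;+\; \sum_{a\in \mc{I}(r)} a\bigl(\myoverline{\textstyle\sum_{ar'\Subset r} r'}\bigr).
\]
Since $a_1\cdots a_d$ is a trace of $q$ and $\ell<d$, there exists a sequence $q\mv{a_1}q_1\cdots\mv{a_\ell}q_\ell\mv{a_{\ell+1}}q_{\ell+1}$, so $Q_{a_1\cdots a_{\ell+1}}$ is nonempty and $a_{\ell+1}\in\mc{I}(r)$.

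For the second bullet, I would pick any $q_{\ell+1}\in Q_{a_1\cdots a_{\ell+1}}$. By definition it comes from a sequence $q\mv{a_1}q_1\cdots\mv{a_\ell}q_\ell\mv{a_{\ell+1}}q_{\ell+1}$ with $q_\ell\in Q_{a_1\cdots a_\ell}$. Hence $a_{\ell+1}q_{\ell+1}$ is a summand of $q_\ell$ and therefore of $r$. Since $\myoverline{r} = r + (\cdots)$, it is also a summand of $q_{a_1\cdots a_\ell}$, yielding $q_{a_1\cdots a_\ell}\mv{a_{\ell+1}} q_{\ell+1}$.

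For the first bullet, I would identify the saturation summand produced by Def.~\ref{def:hatq} for the action $a_{\ell+1}\in\mc{I}(r)$, namely $a_{\ell+1}(\myoverline{\sum_{a_{\ell+1} r'\Subset r} r'})$. The key observation is that the set $\{r' \mid a_{\ell+1}r'\Subset r\}$ coincides exactly with $Q_{a_1\cdots a_{\ell+1}}$: an element $r'$ appears iff $a_{\ell+1}r'$ is a summand of some $q_\ell\in Q_{a_1\cdots a_\ell}$, iff there is a sequence $q\mv{a_1}q_1\cdots\mv{a_\ell}q_\ell\mv{a_{\ell+1}}r'$. Therefore this saturation summand is literally $a_{\ell+1}(q_{a_1\cdots a_{\ell+1}})$, so $q_{a_1\cdots a_\ell}\mv{a_{\ell+1}} q_{a_1\cdots a_{\ell+1}}$.

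There is no real obstacle here; the only care point is to match $\{r'\mid a_{\ell+1}r'\Subset r\}$ with $Q_{a_1\cdots a_{\ell+1}}$ and to verify $a_{\ell+1}\in\mc{I}(r)$, both of which are immediate from the assumption that $a_1\cdots a_d$ is a (completed) trace with $\ell<d$. The lemma requires no induction on $\ell$ — each claim is a one-step observation about the outermost saturation of $r$.
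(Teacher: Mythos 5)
Your proof is correct and follows essentially the same route as the paper's: both read the two transitions directly off Def.~\ref{def:hatq}, using that $a_{\ell+1}\in\mc{I}\bigl(\sum_{q_\ell\in Q_{a_1\cdots a_\ell}}q_\ell\bigr)$ and that $\{r'\mid a_{\ell+1}r'\Subset \sum_{q_\ell\in Q_{a_1\cdots a_\ell}}q_\ell\}=Q_{a_1\cdots a_{\ell+1}}$. No induction is used in the paper either, so nothing is missing.
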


\begin{proof}
Clearly, $q_{\ell+1}\in Q_{a_1\cdots a_{\ell+1}}$ iff there exists
some $q_\ell \in Q_{a_1\cdots a_\ell}$ such that
$q_\ell\mv{a_{\ell+1}} q_{\ell+1}$. And since $a_1\cdots
a_{\ell+1}$ is a trace of $q$, $a_{\ell+1}\in\mc{I}(q_\ell)$ for
some $q_\ell \in Q_{a_1\cdots a_\ell}$. So by Def.~\ref{def:hatq},
\[
  q_{a_1\cdots a_\ell}~=~\myoverline{\sum_{q_\ell\in Q_{a_1\cdots a_\ell}} q_\ell} ~\mv{a_{\ell+1}}~
  \myoverline{\sum_{q_{\ell+1} \in Q_{a_1\cdots a_{\ell+1}}} q_{\ell+1}}~=~q_{a_1\cdots a_{\ell+1}}
\]
Moreover, for all $q_{\ell+1}\in Q_{a_1 \cdots a_{\ell+1}}$  we
have $\sum_{q_\ell\in Q_{a_1\cdots a_\ell}} q_\ell \mv{a_{\ell+1}}
q_{\ell+1}$. Hence, by Def.~\ref{def:hatq},
\\[1ex]\mbox{}\hfill
$\displaystyle
[q_{a_1\cdots a_\ell}~=~\myoverline{\sum_{q_\ell\in Q_{a_1\cdots a_\ell}}q_\ell} ~\mv{a_{\ell+1}}~ q_{\ell+1} ~~~~~~ $
\end{proof}

\noindent We now embark on proving the promised ground-completeness result in Thm.~\ref{prop16}.

\begin{proof}
Suppose $p\precsim_{\rm IF} q$. We derive $p\preccurlyeq q$
using induction on the depth of $p$. If $p=\nil$, then clearly
$q=\nil$, and we are done. So assume $p\not=\nil$.

We call a sequence $a_1p_1 \cdots a_k p_k$ a \emph{completed path} of a closed term $p_0$
if $p_0\mv{a_1}p_1 \cdots \mv{a_k}p_k$ with $\mc{I}(p_k)=\emptyset$.
Let $\mc{CP}(p)$ denote the set of completed paths of $p$, ranged over by $\pi$.
Consider any $\pi= a_1p_1\cdots a_d p_d$ in $\mc{CP}(p)$.
Since $p\not=\nil$, we have $d\geq 1$.
We recursively construct closed terms $\psi_\ell^{\pi}$, where $\ell$ counts down
from $d$ to 1. For the base case we define $\psi^{\pi}_d=\nil$. Now
let $1\leq \ell<d$. Since $p \mv{a_1\cdots a_\ell} p_\ell$ and
$p\precsim_{\rm IF} q$, there exists a sequence of transitions $q
\mv{a_1\cdots a_\ell}q_\ell$ such that $\mc{T}(q_\ell)\subseteq
\mc{T}(p_\ell)$. Given the choice of $\pi$ and $\ell$, we pick such a $q_\ell$ and call it $q^\pi_\ell$.
Now define
\[
\psi^{\pi}_\ell  ~=~  q^\pi_\ell + a_{\ell+1} \psi^{\pi}_{\ell+1}
\]
We prove, by induction on $d-\ell$, that for all $1\leq\ell\leq d$,
\[\mc{T}(\psi^{\pi}_\ell)~\subseteq~ \mc{T}(p_\ell)\]
The base case is trivial, since $\mc{T}(\psi^{\pi}_d)=\emptyset$.
Now let $1\leq \ell<d$. By induction, $\mc{T}(
\psi^{\pi}_{\ell+1}) \subseteq \mc{T}(p_{\ell+1})$. Moreover,
$p_\ell\mv{a_{\ell+1}}p_{\ell+1}$, so $\mc{T}(a_{\ell+1}
\psi^{\pi}_{\ell+1}) \subseteq \mc{T}(p_\ell)$. Hence,
$\mc{T}(\psi^{\pi}_\ell) = \mc{T}(q^\pi_\ell) \cup \mc{T}(a_{\ell+1}
\psi^{\pi}_{\ell+1}) \subseteq \mc{T}(p_\ell)$.

\medskip

We now derive three (in)equations that together yield the desired result.
First, we derive by induction on $d-\ell$, for all $1\leq\ell\leq d$,
\[ a_\ell\psi^{\pi}_\ell + q_{a_1\cdots a_{\ell-1}} ~\approx~ q_{a_1\cdots a_{\ell-1}}\]
In the base case, since $\psi^{\pi}_d=\nil\in Q_{a_1\cdots a_d}$
(see Def.~\ref{def:Q}), this is a direct consequence of the second
item in Lem.~\ref{lem:Q}. Now let $1\leq\ell<d$.
\[
\begin{array}{rlr}
 & a_\ell\psi^{\pi}_{\ell} + q_{a_1\cdots a_{\ell-1}} \vspace{2mm} \\
=~& a_\ell(q^\pi_\ell + a_{\ell+1} \psi^{\pi}_{\ell+1}) + q_{a_1\cdots a_{\ell-1}} +~a_\ell q^\pi_\ell + a_\ell q_{a_1\cdots a_\ell} & \mbox{(Lem.~\ref{lem:Q})}\vspace{2mm} \\
\approx~& a_\ell(q^\pi_\ell + a_{\ell+1} \psi^{\pi}_{\ell+1}) +
q_{a_1\cdots a_{\ell-1}}  +~a_\ell q^\pi_\ell + a_\ell(a_{\ell+1} \psi^{\pi}_{\ell+1} + q_{a_1\cdots a_\ell}) & \mbox{(induction)}\vspace{2mm} \\
\approx~& q_{a_1\cdots a_{\ell-1}} + a_\ell q^\pi_\ell  +~a_\ell(a_{\ell+1} \psi^{\pi}_{\ell+1} + q_{a_1\cdots a_\ell}) & \mbox{(IF2)}\vspace{2mm} \\
\approx~& q_{a_1\cdots a_{\ell-1}} + a_\ell q^\pi_\ell   +~a_\ell q_{a_1\cdots a_\ell} & \mbox{(induction)}\vspace{2mm} \\
=~& q_{a_1\cdots a_{\ell-1}} & \mbox{(Lem.~\ref{lem:Q})}
\end{array}
\]
In the end, for $\ell=1$, we have derived $a_1\psi^{\pi}_1 +
q_{\varepsilon} \approx q_{\varepsilon}$. In other words,
\begin{eqnarray}
\label{eqn1}
a_1\psi^{\pi}_1 + \myoverline{q} ~\approx~ \myoverline{q}
\end{eqnarray}

Second, for every $ap'\Subset p$,
\[   p' ~\precsim_{\rm IF}~ \sum_{\pi\in \mc{CP}(a p')}  \psi^{\pi}_1 \]
Namely, consider any (possibly incomplete) path $\pi_0=a_1p_1 \cdots a_hp_h$ of $a p'$.
Extend $\pi_0$ to some $\pi\in\mc{CP}(ap')$.
Clearly, $\psi^{\pi}_{\ell}\mv{a_{\ell+1}}\psi^{\pi}_{\ell+1}$ for all $1\leq
\ell <h$, so $\psi^{\pi}_1 \mv{a_2\cdots a_h} \psi^{\pi}_h$.
Moreover, we proved that $\mc{T}(\psi^{\pi}_h)\subseteq \mc{T}(p_h)$.

So by induction on depth, for every $ap'\Subset p$ we can derive
  \[   p' ~\preccurlyeq~ \sum_{\pi\in \mc{CP}(a p')} \psi^{\pi}_1 \]
And thus, by IF1, we derive
  \[ a p' ~\preccurlyeq~ \sum_{\pi\in \mc{CP}(a p')} a\psi^{\pi}_1 \]
Hence, summing over all summands $ap'$ of $p$, we derive
\begin{eqnarray}
\label{eqn2} p &~\preccurlyeq~& \sum_{a\in\mc{I}(p)}\sum_{ap'\Subset p}\sum_{\pi\in\mc{CP}(ap')} a\psi^{\pi}_1
\end{eqnarray}

Third, since $p\precsim_{\rm IF} q$, clearly, for each $a\in\mc{I}(p)$,
\[
\sum_{ap'\Subset p} p' ~\precsim_{\rm IF}~ \sum_{aq'\Subset q} q'
\]
So by induction on depth, for each $a\in\mc{I}(p)$ we can derive
\[
 \sum_{ap'\Subset p} p' ~\preccurlyeq~ \sum_{aq'\Subset q} q'
\]
So by IF2 and IF1, and since $\mc{I}(p)=\mc{I}(q)$, we derive
\[
   p ~~\approx~~ p + {\displaystyle\sum_{a\in\mc{I}(p)}}
a({\displaystyle\sum_{a p'\Subset p}} p')  \preccurlyeq  p +
{\displaystyle\sum_{a\in\mc{I}(q)}}
a({\displaystyle\sum_{aq'\Subset q}} q') ~~\preccurlyeq~~ p +
{\displaystyle\sum_{a\in \mc{I}(q)}}~
{\displaystyle\sum_{aq'\Subset q}} aq'
\]
That is, we have derived
\begin{eqnarray}
\label{eqn3} p ~\preccurlyeq~ p+q
\end{eqnarray}
Finally, (\ref{eqn3}), Lem.~\ref{lem:hatq}, (\ref{eqn2}) and (\ref{eqn1}) yield the derivation
\\[2ex]\mbox{}\hfill
$\displaystyle
  p ~\preccurlyeq~ p+q ~\approx~ p+\myoverline{q}
~\preccurlyeq~
 \sum_{a\in\mc{I}(p)}~\sum_{ap'\Subset p}~\sum_{\pi\in \mc{CP}(ap')} a\psi^{\pi}_1 + \myoverline{q}
~\approx~ \myoverline{q} ~\approx~ q ~~  $
\end{proof}

\subsection{Weak impossible futures preorder} \label{sec:wifp}

We now apply the link established in Sect.~\ref{sec:link} to
obtain a ground-complete axiomatization for BCCS($A$) modulo $\precsim_{\rm WIF}$.
It consists of A1-4, WIF1-2 and W1 together with
\[
\begin{array}{lrcl}
{\rm IF1}'~& \alpha(x+y)  & \preccurlyeq & \alpha x+\alpha y\\
{\rm IF2}'~ & \alpha x+\alpha(y+z) & \approx & \alpha(x+y)+\alpha x+\alpha(y+z)\\
\end{array}
\]
It is not hard to see that this axiomatization is sound modulo $\precsim_{\rm WIF}$.

Again, this axiomatization can be simplified.
It turns out that IF1$'$ and IF2$'$ are redundant. Namely, by W1 and WIF1,
$\alpha(x+y)\preccurlyeq \alpha(\tau x+\tau y) \approx \alpha x+\alpha y$.
And by WIF1 and WIF2, $\alpha x+\alpha(y+z)\approx\alpha(\tau x+\tau(y+z))
=\alpha(\tau x+\tau(y+z+y)+\tau(y+z))\approx\alpha(\tau x+y+\tau(y+z))
\approx\alpha(\tau(x+y)+\tau x+\tau(y+z))\approx\alpha(\tau(x+y)+\tau(\tau x+\tau(y+z)))
\approx\alpha(x+y)+\alpha(\tau x+\tau(y+z))\approx\alpha(x+y)+\alpha x +\alpha(y+z)$.
Furthermore, WIF2 can be replaced by WIF2$'$.
Namely, by WIF2$'$, $\tau(x+y)+\tau x\preccurlyeq \tau x+y$;
and by WIF2$'$ and W1, $\tau x+y=\tau x+\tau(x+x)+y\preccurlyeq\tau x+x+y \preccurlyeq \tau x+\tau(x+y)$.

The ground-completeness claim in the following corollary is an immediate consequence of
Thm.~\ref{prop16} together with Thm.~\ref{thm:correct}.

\begin{cor}
  \emph{A1-4+WIF1+WIF2$'$+W1} is sound and ground-complete for BCCS($A$) modulo $\precsim_{\rm WIF}$.
\end{cor}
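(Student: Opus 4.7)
The plan is to obtain this corollary as a direct application of Theorem~\ref{thm:correct} together with Theorem~\ref{prop16}, followed by the syntactic simplification spelled out in the surrounding text. First I would check the hypotheses of Theorem~\ref{thm:correct}: by Theorem~\ref{prop16}, A1-4+IF1-2 is sound and ground-complete for BCCSP($A$) modulo $\precsim_{\rm IF}$, and this axiomatization is safe since neither IF1 nor IF2 has a variable with both an initial and a non-initial occurrence. Since $\precsim_{\rm WIF}$ coincides with $\precsim_{\rm IF}$ on closed BCCSP($A$) terms and is a precongruence for BCCS($A$), it is a corresponding weak preorder in the sense of Definition~\ref{def:corresponding-weak-equivalence}. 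Thus Theorem~\ref{thm:correct} applies.

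Next I would determine which of the optional axioms W1 and W2 the algorithm adds. W1 is required by clause~(\ref{alg-req4}) of Definition~\ref{def:algorithm-weak-preorder} because $\nil\not\mv{\tau}$, $\tau\nil\mv{\tau}$, and $\nil\precsim_{\rm WIF}\tau\nil$. On the other hand W2 is \emph{not} required by clause~(\ref{alg-req5}), because by definition $p\precsim_{\rm WIF} q$ and $p\mv{\tau}$ already force $q\mv{\tau}$. So the algorithm produces ground-completeness of $\mc{A}(E) = \mbox{A1-4}+\mbox{IF1-2}+{\it init\mbox{-}\tau}(\mbox{IF1-2})+\mbox{WIF1-2}+\mbox{W1}$ for BCCS($A$) modulo $\precsim_{\rm WIF}$. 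Observing that IF1-2 together with ${\it init\mbox{-}\tau}$(IF1-2) is precisely the set of instances of IF1$'$ and IF2$'$ (for $\alpha\in A_\tau$), this reduces to A1-4+IF1$'$+IF2$'$+WIF1-2+W1.

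I would then invoke the derivations already displayed just above the corollary to eliminate the redundant axioms: IF1$'$ is derivable from W1 and WIF1, IF2$'$ from WIF1-2, and WIF2 is interderivable with WIF2$'$ in the presence of W1. Each of these derivations uses only A1-4, WIF1, WIF2$'$ and W1, so the simplified system remains ground-complete. The remaining obligation, which Theorem~\ref{thm:correct} explicitly leaves to the user, is soundness of A1-4+WIF1+WIF2$'$+W1 modulo $\precsim_{\rm WIF}$: A1-4 are universally sound; WIF1 is sound modulo $\precsim_{\rm WIF}$ by the general observation in Section~\ref{sec:link}; W1 adds only an initial $\tau$-transition without changing weak traces or weak impossible futures; and WIF2$'$ is sound since every weak impossible future of $\tau(x+y)$ is a weak impossible future of $\tau x + y$, while the $\tau$-initiality side condition is trivial because both sides have an initial $\tau$. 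This routine case analysis is the only real work and the main, though mild, obstacle; once checked, the proof is complete.
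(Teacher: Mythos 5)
Your proposal is correct and follows essentially the same route as the paper: ground-completeness is obtained by feeding Thm.~\ref{prop16} into Thm.~\ref{thm:correct}, the resulting system A1-4+IF1$'$+IF2$'$+WIF1-2+W1 is then reduced using the redundancy derivations displayed just before the corollary, and soundness is checked separately as a routine case analysis. Your additional explicit verifications (safety of IF1-2, why clause~(\ref{alg-req4}) but not clause~(\ref{alg-req5}) fires, and the identification of IF1-2 plus ${\it init\mbox{-}\tau}(\mbox{IF1-2})$ with IF1$'$-2$'$) are all accurate and merely spell out what the paper leaves implicit.
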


This result was already proved in \cite{VM01} (for a slightly more complicated axiomatization).
There an intricate ground-completeness proof was given, which relies heavily on the presence of $\tau$.

\subsection{Weak impossible futures equivalence}\label{nonax}

We now prove that there does not exist a finite, sound, ground-complete axiomatization for
BCCS($A$) modulo $\simeq_{\rm WIF}$. The cornerstone for
this negative result is the following infinite family of closed equations. Pick an $a\in A$.
For each $m\geq 0$,
\[ \tau a^{2m}\nil+\tau(a^m\nil+a^{2m}\nil) ~\approx~ \tau(a^m\nil+a^{2m}\nil) \]
is sound modulo $\simeq_{\rm WIF}$. We start with a few lemmas.

\begin{lemma} \label{newvariable}
If $t\precsim_{\rm WIF} u$ and $t\Rightarrow\mv{\tau} t'$,
then there is a term $u'$ with $u\Rightarrow\mv{\tau} u'$ and $\var(u')\subseteq \var(t')$.
\end{lemma}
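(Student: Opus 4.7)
The plan is to argue by contradiction using a single weak impossible future of $\rho(t)$. Put $X=\var(t')$ and $Y=\var(u)\setminus X$. Since $t\Rightarrow\mv\tau t'$ in particular implies $t\mv\tau$, the $\tau$-clause in the definition of $\precsim_{\rm WIF}$ gives $\rho(u)\mv\tau$ for every closed $\rho$; choosing $\rho$ so that no $\rho(x)$ contains a $\tau$-prefix, this forces $u\mv\tau u_1$, and in particular some $u_1$ with $u\Rightarrow\mv\tau u_1$ exists. Suppose, for contradiction, that no $u'$ with $u\Rightarrow\mv\tau u'$ satisfies $\var(u')\subseteq X$; since $\var(u')\subseteq\var(u)$, every such $u'$ then has $\var(u')\cap Y\neq\emptyset$, and in particular $Y\neq\emptyset$.

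To refute this assumption I build the substitution explicitly. Fix $a\in A$ (possible since $A\neq\emptyset$) and an integer $k>\depth_w(t)$, and set $\rho(y)=a^k\nil$ for $y\in Y$ and $\rho(x)=\nil$ for every other variable. Because no $\rho(x)$ has a $\tau$-prefix, every $\tau$-transition of $\rho(u)$ lifts from a $\tau$-transition of $u$, so $\rho(u)\Rightarrow r'$ if and only if $r'=\rho(u'')$ for some $u''\in\{u\}\cup\{u':u\Rightarrow\mv\tau u'\}$. Take $B=\{w\in A^*:|w|>\depth_w(t)\}$. Since $\var(t')\subseteq X$, all variables of $t'$ are substituted by $\nil$, whence $\mc{WT}(\rho(t'))=\mc{WT}(t')$, a set of words of length at most $\depth_w(t')\leq\depth_w(t)$. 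Thus $\mc{WT}(\rho(t'))\cap B=\emptyset$, and since $\rho(t)\Rightarrow\rho(t')$, the pair $(\varepsilon,B)$ is a weak impossible future of $\rho(t)$.

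It now suffices to show $\mc{WT}(\rho(u''))\cap B\neq\emptyset$ for every relevant $u''$; this contradicts $\rho(t)\precsim_{\rm WIF}\rho(u)$ and proves the lemma. For $u''=u$ the required variable is supplied by $\emptyset\neq Y\subseteq\var(u)$; for $u\Rightarrow\mv\tau u''$ it is supplied by the standing assumption, yielding some $y\in\var(u'')\cap Y$. A straightforward induction on the syntactic path from the root of $u''$ down to an occurrence of $y$ shows $u''\mv{\alpha_1}\cdots\mv{\alpha_n}u'''$ with $y$ an initial summand of $u'''$; under $\rho$ this produces the weak trace $\beta_1\cdots\beta_m\cdot a^k$ of $\rho(u'')$, where $\beta_1\cdots\beta_m$ is obtained from $\alpha_1\cdots\alpha_n$ by deleting $\tau$'s. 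Its length is at least $k>\depth_w(t)$, so it lies in $B$.

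The main obstacle is the structural bookkeeping: justifying that $\rho(u)\Rightarrow r'$ entails $r'=\rho(u'')$ for some $u''$ with $u\Rightarrow u''$ (which uses that our $\rho$ introduces no fresh $\tau$-transitions), and verifying that an occurrence of $y$ possibly buried under several prefix operators in $u''$ still contributes a weak trace of $\rho(u'')$ long enough to hit $B$ (which is where $k>\depth_w(t)$ is essential). Both points are handled by elementary inductions on term structure.
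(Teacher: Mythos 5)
Your proof is correct and follows essentially the same route as the paper's: a closed substitution sending the variables outside $\var(t')$ to $a^k\nil$ with $k$ exceeding the weak depth, the observation that $\precsim_{\rm WIF}$ then forces $\rho(u)$ to reach a term without long weak traces (hence one whose variables all lie in $\var(t')$), and the $\tau$-clause of $\precsim_{\rm WIF}$ to handle the degenerate case where only the zero-step derivative $u$ itself qualifies. The only difference is cosmetic: you phrase it as a contradiction over all $\Rightarrow$-derivatives of $\rho(u)$, whereas the paper argues directly and treats the case $u'=u$ separately at the end.
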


\begin{proof}
Let $t\Rightarrow\mv{\tau} t'$. Pick an $a\in A$ and $m>\depth_w(t)$,
and consider the closed substitution $\rho$ defined by $\rho(x)=\nil$ if
$x\in\var(t')$ and $\rho(x)=a^m\nil$ if $x\not\in\var(t')$. Since
$\rho(t)\Rightarrow\rho(t')$ with $\depth_w(\rho(t'))=\depth_w(t')<m$, and
$\rho(t)\precsim_{\rm WIF}\rho(u)$, clearly $\rho(u)\Rightarrow q$
for some $q$ with $\depth_w(q)<m$. From the definition of $\rho$ it then
follows that $u\Rightarrow u'$ for some $u'$ with
$\var(u')\subseteq \var(t')$. In case $u \Rightarrow\mv{\tau} u'$
we are done, so assume $u'=u$. Let $\sigma$ be the closed substitution
with $\sigma(x)=\nil$ for all $x\in V$. Since $\sigma(t)
\mv{\tau}$ and $t \precsim_{\rm WIF} u$, we have $\sigma(u)\mv{\tau}$,
so $u \mv{\tau} u''$ for some $u''$.
And $\var(u'') \subseteq \var(u) = \var(u') \subseteq \var(t')$.
\end{proof}

\begin{lemma} \label{key2}
Suppose that for some terms $t,u$, closed substitution $\sigma$, action $a$ and $m>0$:

\begin{enumerate}
  \item \label{lem-prov1} $t \simeq_{\rm WIF} u$;

  \item \label{lem-prov2} $m > \depth_w(u)$;

  \item \label{lem-prov3} $\mc{WCT}(\sigma(u)) \subseteq \{a^{m}, a^{2m}\}$; and

  \item \label{lem-prov4} there is a closed term $p'$ such that
  $\sigma(t)\Rightarrow\mv{\tau} p'$ and $\mc{WCT}(p') = \{a^{2m}\}$.

\end{enumerate}
Then there is a closed term $q'$ such that $\sigma(u)
\Rightarrow\mv{\tau} q'$ and $\mc{WCT}(q') = \{a^{2m}\}$.
\end{lemma}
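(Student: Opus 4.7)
The strategy is to transfer the weak impossible future witnessed by $p'$ from $\sigma(t)$ to $\sigma(u)$ using the equivalence in condition~(1), and then refine the resulting witness to obtain a $q'$ with the required completion structure.

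First I observe that the derivation in~(4) uses at least one $\tau$-transition, so $\sigma(t)\mv{\tau}$; by the $\tau$-preservation clause in the definition of $\simeq_{\rm WIF}$ together with~(1), this gives $\sigma(u)\mv{\tau}$. Moreover, finiteness of the LTS combined with $\mc{WCT}(p')=\{a^{2m}\}$ forces $\mc{WT}(p')=\{a^i\mid 0\le i\le 2m\}$, so $(\varepsilon,\,A^*\setminus\mc{WT}(p'))$ is a weak impossible future of $\sigma(t)$ via $p'$. By~(1) it is also a WIF of $\sigma(u)$, supplying some $q_0$ with $\sigma(u)\Rightarrow q_0$ and $\mc{WT}(q_0)\subseteq\mc{WT}(p')$. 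To force a strict final $\tau$-step, if the $\Rightarrow$-derivation to $q_0$ already uses one, set $q':=q_0$; otherwise $q_0=\sigma(u)$, and I pick a $\tau$-successor $\tilde q$ of $\sigma(u)$ (it exists by the first step), noting that $\mc{WT}(\tilde q)\subseteq\mc{WT}(\sigma(u))=\mc{WT}(q_0)\subseteq\mc{WT}(p')$, and set $q':=\tilde q$. Either way, $\sigma(u)\Rightarrow\mv{\tau} q'$ and $\mc{WT}(q')\subseteq\{a^i\mid 0\le i\le 2m\}$.

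It remains to show $\mc{WCT}(q')=\{a^{2m}\}$. Condition~(3) yields $\mc{WCT}(q')\subseteq\{a^{m},a^{2m}\}$, and $\mc{WCT}(q')\neq\emptyset$ by finiteness of the transition system (together with $\varepsilon\notin\mc{WCT}(\sigma(u))$ since $m>0$). The main obstacle is ruling out $a^m\in\mc{WCT}(q')$: the WIF transfer constrains only the weak traces of $q'$, not its completions. To handle this I would invoke condition~(2), $m>\depth_w(u)$; since $\simeq_{\rm WIF}$ preserves weak traces of open terms (apply any closed substitution mapping variables to $\nil$), also $m>\depth_w(t)$. Any weak $a^m$-completed trace of $\sigma(u)$ must therefore pass through the substitute $\sigma(x)$ of some variable $x$, because $u$ on its own cannot sustain a trace of length $m$. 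The plan is to further refine $q'$ by choosing a $\tau$-descendant lying along the $a^{2m}$-witness path of $\sigma(u)$, obtained by transferring the WIF $(a^{2m},A^+)$ from $p'$ via $\simeq_{\rm WIF}$; this commits $q'$ to the $a^{2m}$-branch. Should some such choice still admit $a^m\in\mc{WCT}(q')$, transferring the induced WIF back through $\simeq_{\rm WIF}$ to $\sigma(t)$ would contradict the fact that $p'$ completes only on $a^{2m}$. This commitment-and-back-transfer step, supported by the structural bound from condition~(2), is the part of the argument I expect to require the most careful bookkeeping.
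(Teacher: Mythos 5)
Your reduction of the problem to ruling out $a^m\in\mc{WCT}(q')$ is correct, and you rightly identify that as the crux; but the mechanism you propose for that step cannot work, and it is exactly where the content of the lemma lies. Transferring weak impossible futures between $\sigma(t)$ and $\sigma(u)$ only ever yields \emph{upper} bounds on the weak traces of the matched derivative; since $a^m$ is a prefix of $a^{2m}$, no impossible future can separate ``has completed trace $a^m$'' from ``merely passes through length $m$ on the way to $a^{2m}$'', so no choice of transferred pair $(w,B)$ forces a derivative of $\sigma(u)$ to lack the completion $a^m$ while keeping $a^{2m}$. Likewise the ``back-transfer'' only produces \emph{some} derivative $p''$ of $\sigma(t)$ with $\mc{WT}(p'')\subseteq\mc{WT}(q')$; hypothesis~(4) asserts only that one particular derivative $p'$ completes exclusively on $a^{2m}$, so no contradiction arises if $p''$ completes on $a^m$ as well. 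More fundamentally, any argument that uses only the closed-instance relation $\sigma(t)\simeq_{\rm WIF}\sigma(u)$ together with (3) and (4) must fail: the paper's own key family $\tau a^{2m}\nil+\tau(a^m\nil+a^{2m}\nil)\simeq_{\rm WIF}\tau(a^m\nil+a^{2m}\nil)$ satisfies (1), (3) and (4) (with $t,u$ closed and $p'=a^{2m}\nil$) yet violates the conclusion, and is excluded only by the syntactic bound (2) on the open term $u$. You invoke (2) once, but it never actually enters your selection of $q'$.

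The proof has to work at the level of the open terms, via a case split on where the witnessing path $\sigma(t)\Rightarrow\mv{\tau}p'$ goes. If it stays inside $t$, i.e.\ $t\Rightarrow\mv{\tau}t'$ with $\sigma(t')=p'$, then Lem.~\ref{newvariable} (itself proved with a tailored substitution putting deep closed terms on the variables outside $\var(t')$) supplies $u\Rightarrow\mv{\tau}u'$ with $\var(u')\subseteq\var(t')$; since $\depth_w(u')<m$, a completed trace of $\sigma(u')$ of length $m$ would have to be completed inside some $\sigma(y)$ with $y\in\var(t')$, which $\mc{WCT}(p')=\{a^{2m}\}$ rules out, and together with (3) this pins $\mc{WCT}(\sigma(u'))=\{a^{2m}\}$. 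If instead the path enters a variable, i.e.\ $t\Rightarrow t'+x$ with $\sigma(x)\Rightarrow\mv{\tau}p'$, then the substitution $\rho(x)=a^m\nil$, $\rho(y)=\nil$ otherwise, combined with $\depth_w(u)<m$, forces $u\Rightarrow u'+x$, so the very same $p'$ serves as $q'$. Some such analysis of the syntactic origin of $p'$, exploiting that (1) holds for \emph{all} closed substitutions, is unavoidable; your proposal never performs it.
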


\begin{proof}
We note that if $p\precsim_{\rm WIF} q$, then $\mc{WCT}(p) \subseteq \mc{WCT}(q)$.
Namely, a process has a weak completed trace $a_1 \cdots a_k$ iff it has a
weak impossible future $(a_1 \cdots a_k,A)$.

According to proviso (\ref{lem-prov4}) of the lemma, two cases can be distinguished:
the trace $\sigma(t)\Rightarrow\mv{\tau} p'$ visits a variable in $t$ or not.

\begin{itemize}
  \item $t\Rightarrow\mv{\tau} t'$ for some $t'$ with $\sigma(t')=p'$.
  Since $\depth_w(t')\leq\depth_w(t)=\depth_w(u)<m$ and $\mc{WCT}(p')=\{a^{2m}\}$, for any \mbox{$y\mathbin\in \var(t')$}
  either $\sigma(y)=\nil$ or $\mc{WCT}(\sigma(y))=\{a^{2m}\}$.
  Since $t\simeq_{\rm WIF} u$, by Lem.~\ref{newvariable},
  $u\Rightarrow\mv{\tau} u'$ for some $u'$ with $\var(u')\subseteq \var(t')$. Hence,
  for any $y\mathbin\in \var(u')$ either $\sigma(y)=\nil$
  or $\mc{WCT}(\sigma(y))=\{a^{2m}\}$.  Since $\depth_w(u')\leq\depth_w(u)<m$,
  it follows that $a^m\mathbin{\notin} \mc{WCT}(\sigma(u'))$.
  Since $\mc{WCT}(\sigma(u))\subseteq\{a^{m}, a^{2m}\}$, we conclude that
  $\mc{WCT}(\sigma(u'))= \{a^{2m}\}$. And $u\Rightarrow\mv{\tau} u'$
  implies $\sigma(u)\Rightarrow\mv{\tau} \sigma(u')$.

  \medskip

  \item $t \Rightarrow t'+x$ for some $t'$ and $x$ with $\sigma(x)\Rightarrow\mv{\tau} p'$.
  Consider the closed substitution $\rho$ defined by $\rho(x)=a^m\nil$
  and $\rho(y)=\nil$ for any $y\neq x$. Then $a^m \in \mc{WCT}(\rho(t))
  = \mc{WCT}(\rho(u))$. Since $\depth_w(u)<m$, in view of the definition of $\rho$
  this implies that $u \Rightarrow u'+x$ for some $u'$. Hence $\sigma(u)\Rightarrow\mv{\tau} p'$.
\qedhere
\end{itemize}
\end{proof}

\begin{proposition} \label{key2cont}
Suppose that for a sound axiomatization $E$ for BCCS($A$) modulo $\simeq_{\rm WIF}$,
closed terms $p,q$, closed substitution $\sigma$, action $a$ and $m>0$:
\begin{enumerate}
  \item \label{prop-prov1} $E\vdash  p \approx q$ or $E\vdash q \approx p$;

  \item \label{prop-prov2} $m > \max \{\depth_w(u)\mid t \approx u \in  E\}$;

  \item \label{prop-prov3} $\mc{WCT}(q) \subseteq \{a^{m},a^{2m}\}$; and

  \item \label{prop-prov4} there is a closed term $p'$ such that $p\Rightarrow\mv{\tau} p'$ and
  $\mc{WCT}(p') = \{a^{2m}\}$.

\end{enumerate}
Then there is a closed term $q'$ such that $q \Rightarrow\mv{\tau}
q'$ and $\mc{WCT}(q') = \{a^{2m}\}$.
\end{proposition}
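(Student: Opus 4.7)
The plan is to proceed by induction on the depth of an equational derivation of $p\approx q$ (or $q\approx p$) from $E$; the disjunction in proviso~(\ref{prop-prov1}) renders the direction of the derivation immaterial, and the bound in proviso~(\ref{prop-prov2}) is global and therefore survives every subderivation. In the base case, $p$ and $q$ arise as a substitution instance $\sigma(t)$ and $\sigma(u)$ of some axiom $t\approx u\in E$ (used in either direction), with $\sigma$ closed; soundness of $E$ modulo $\simeq_{\rm WIF}$ gives $t\simeq_{\rm WIF} u$, and provisos (\ref{prop-prov2})--(\ref{prop-prov4}) match provisos (\ref{lem-prov2})--(\ref{lem-prov4}) of Lem.~\ref{key2}, which delivers the required $q'$.

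For the inductive step I use throughout that soundness of $E$ preserves weak completed traces, so any intermediate $r$ with $E\vdash p\approx r$ satisfies $\mc{WCT}(r)=\mc{WCT}(q)\subseteq\{a^m,a^{2m}\}$, carrying proviso~(\ref{prop-prov3}) into every recursive call. Reflexivity is immediate. Transitivity through $r$ applies the IH first to $p\approx r$ (producing $r\Rightarrow\mv{\tau}r'$ with $\mc{WCT}(r')=\{a^{2m}\}$) and then to $r\approx q$. For closure under $+$, with $p=p_1+p_2$ and $q=q_1+q_2$ related pairwise by shorter proofs, the first $\tau$-step of $p\Rightarrow\mv{\tau}p'$ commits to one summand, say $p_1$, so that $p_1\Rightarrow\mv{\tau}p'$; the IH applied to $p_1\approx q_1$ then supplies the desired $\tau$-derivative of $q_1$, and hence of $q$. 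For closure under prefix $\alpha$, the case $\alpha\neq\tau$ is vacuous since $\alpha p_0\not\Rightarrow\mv{\tau}$; for $\alpha=\tau$ with $p=\tau p_0$ and $q=\tau q_0$, either $p'=p_0$ (whence soundness of $p_0\approx q_0$ forces $\mc{WCT}(q_0)=\mc{WCT}(p_0)=\{a^{2m}\}$, so $q\mv{\tau}q_0$ already witnesses the conclusion) or $p_0\Rightarrow\mv{\tau}p'$, and the IH on $p_0\approx q_0$ supplies the required $q_0'$.

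The main subtlety lies in the $+$-case: $\mc{WCT}(q_1)$ is not a~priori contained in $\mc{WCT}(q_1+q_2)$, since the empty trace $\varepsilon$ might be a completed trace of $q_1$ alone that is absorbed by $q_2$ in the sum. This is settled by showing $\varepsilon\notin\mc{WCT}(p_1)$ (and hence, by soundness, $\varepsilon\notin\mc{WCT}(q_1)$): if $\mc{I}(p_1)=\emptyset$ then $p_1$ cannot move, contradicting $p_1\Rightarrow\mv{\tau}p'$; otherwise every $\tau$-path from $p_1$ to an inactive state lifts to a $\tau$-path from $p_1+p_2$ to the same inactive state, forcing $\varepsilon\in\mc{WCT}(p_1+p_2)\subseteq\{a^m,a^{2m}\}$, a contradiction. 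With $\varepsilon$ excluded, the nonempty weak completed traces of $p_1$ are all traces of $p_1+p_2$, so $\mc{WCT}(q_1)\subseteq\{a^m,a^{2m}\}$ and the induction closes.
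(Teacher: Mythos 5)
Your proof is correct and follows essentially the same route as the paper's: induction on the derivation, with the substitution-instance case discharged by Lem.~\ref{key2} and the same case analysis for reflexivity/symmetry, transitivity, sum and prefix. Your extra argument in the $+$-case, ruling out $\varepsilon\in\mc{WCT}(q_1)$, in fact justifies a point the paper's proof glosses over when it asserts $\mc{WCT}(q_i)\subseteq\mc{WCT}(q)$ without comment (an inclusion that can fail for the empty trace in general, though not under the hypotheses in force).
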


\begin{proof}

By induction on a derivation of $E\vdash p \approx q$ or $E\vdash q \approx p$.
We only consider $E\vdash p \approx q$; the proof for $E\vdash q \approx p$ is
symmetrical.

The cases where $E\vdash p \approx q$ is derived by reflexivity (i.e., $q=p$)
or symmetry (i.e., $E\vdash p \approx q$ because $E\vdash q \approx p$) are trivial.
We focus on the other cases.

\begin{itemize}

\item Suppose $E\vdash p\approx q$ because $\sigma(t)=p$ and
$\sigma(u)=q$ for some $t\approx u\in E$ or $u\approx t\in E$ and
closed substitution $\sigma$. The claim then follows by
Lem.~\ref{key2}.

\medskip

\item Suppose $E\vdash p\approx q$ because $E\vdash p\approx r$
and $E\vdash r\approx q$ for some $r$. Since $r\simeq_{\rm WIF}
q$, by proviso (\ref{prop-prov3}),
$\mc{WCT}(r)=\mc{WCT}(q)\subseteq \{a^{m},a^{2m}\}$. Since
$p\Rightarrow\mv{\tau} p'$ and $\mc{WCT}(p') = \{a^{2m}\}$,
by induction there is an $r'$ such that $r\Rightarrow\mv{\tau}
r'$ and $\mc{WCT}(r') = \{a^{2m}\}$. Hence, again by induction
there is a $q'$ such that $q\Rightarrow\mv{\tau} q'$ and
$\mc{WCT}(q') = \{a^{2m}\}$.

\medskip

\item Suppose $E\vdash p\approx q$ because $p=p_1+p_2$ and
$q=q_1+q_2$ with $E\vdash p_1\approx q_1$ and $E\vdash p_2\approx
q_2$. Since $p\Rightarrow\mv{\tau} p'$, we have $p_i\Rightarrow \mv{\tau} p'$ for an $i\in\{1,2\}$.
And $\mc{WCT}(q_i)\subseteq\mc{WCT}(q)\subseteq \{a^{m},a^{2m}\}$.
Hence, by induction there is a $q'$ such that $q_i\Rightarrow\mv{\tau} q'$ and $\mc{WCT}(q')
= \{a^{2m}\}$. And $q_i\Rightarrow\mv{\tau} q'$ implies $q\Rightarrow\mv{\tau} q'$.

\medskip

\item Suppose $E\vdash p \approx q$ because $p=\alpha p_1$ and $q=\alpha q_1$
with $E\vdash p_1\approx q_1$. By proviso (\ref{prop-prov4}), $\alpha=\tau$,
and either $\mc{WCT}(p_1)=\{a^{2m}\}$ or there is a $p'$ such
that $p_1\Rightarrow\mv{\tau} p'$ and $\mc{WCT}(p') = \{a^{2m}\}$.
In the first case, $\mc{WCT}(q_1)=\mc{WCT}(p_1)=\{a^{2m}\}$;
and $q \Rightarrow\mv{\tau} q_1$. In the second case, $p\Rightarrow\mv{\tau} p'$ and
$\mc{WCT}(q_1)=\mc{WCT}(q)\subseteq \{a^{m},a^{2m}\}$. So by induction
there is a $q'$ such that $q_1\Rightarrow\mv{\tau} q'$ and
$\mc{WCT}(q') = \{a^{2m}\}$. And $q_1\Rightarrow\mv{\tau} q'$ implies $q\Rightarrow\mv{\tau} q'$.
\qedhere
\end{itemize}
\end{proof}

\begin{thm} \label{thm:negative-WIF-equivalence}
There is no finite, sound, ground-complete axiomatization for BCCS($A$) modulo $\simeq_{\rm WIF}$.
\end{thm}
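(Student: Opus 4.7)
The plan is to invoke the proof-theoretic technique described in the introduction, using Proposition~\ref{key2cont} as the main lever. Concretely, I would assume toward a contradiction that a finite, sound, ground-complete axiomatization $E$ of BCCS($A$) modulo $\simeq_{\rm WIF}$ exists, pick any $a\in A$, and consider the infinite family already flagged at the start of Section~\ref{nonax}: for each $m\geq 0$, the closed equation
\[
p_m \;\approx\; q_m \qquad\text{where}\qquad p_m = \tau a^{2m}\nil + \tau(a^m\nil + a^{2m}\nil),\quad q_m = \tau(a^m\nil + a^{2m}\nil),
\]
which is sound modulo $\simeq_{\rm WIF}$. Ground-completeness of $E$ then forces $E\vdash p_m\approx q_m$ for every $m$.

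Next I would choose $m$ strictly larger than $\max\{\depth_w(u)\mid t\approx u\in E\}$, which is possible because $E$ is finite. This immediately secures proviso~(\ref{prop-prov2}) of Proposition~\ref{key2cont}. Proviso~(\ref{prop-prov1}) is the derivation $E\vdash p_m\approx q_m$ supplied above. For proviso~(\ref{prop-prov3}), I would note that $q_m\mv{\tau} a^m\nil+a^{2m}\nil$ is the only $\tau$-step from $q_m$, and the reachable terminating traces from this summand are exactly $a^m$ and $a^{2m}$, hence $\mathcal{WCT}(q_m)=\{a^m,a^{2m}\}$. Proviso~(\ref{prop-prov4}) is witnessed by $p' := a^{2m}\nil$: indeed $p_m\mv{\tau} a^{2m}\nil$ via the first summand, and $\mathcal{WCT}(a^{2m}\nil)=\{a^{2m}\}$.

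Applying Proposition~\ref{key2cont} to $p_m\approx q_m$ then yields a closed term $q'$ with $q_m\Rightarrow\mv{\tau} q'$ and $\mathcal{WCT}(q')=\{a^{2m}\}$. The contradiction comes from inspecting $q_m$ directly: since the single prefixed $\tau$ is the only source of $\tau$-steps, every $q'$ with $q_m\Rightarrow\mv{\tau} q'$ must equal $a^m\nil+a^{2m}\nil$, whose set of weak completed traces is $\{a^m,a^{2m}\}\neq\{a^{2m}\}$. This refutes the existence of~$E$.

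Because Proposition~\ref{key2cont} has already carried out the delicate induction over derivations, the only substantive obligations left in the theorem itself are (a) checking the soundness of $p_m\approx q_m$ modulo $\simeq_{\rm WIF}$, which amounts to observing that after any nonempty weak trace $a^k$ both $p_m$ and $q_m$ reach the same set of states $\{a^{m-k}\nil,a^{2m-k}\nil\}$ (or $\{a^{2m-k}\nil\}$ when $k>m$), that $\mathcal{WT}(p_m)=\mathcal{WT}(q_m)$, and that both terms enable~$\tau$; and (b) reading off $\mathcal{WCT}$ values as above. The only real obstacle is conceptual: packaging the hypotheses of Proposition~\ref{key2cont} correctly so that the $a^{2m}$-only witness from $p_m$ cannot be matched inside $q_m$—and this is exactly what the choice of $p_m$ and $q_m$ is engineered to achieve.
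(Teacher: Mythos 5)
Your proposal is correct and follows essentially the same route as the paper: the same family of equations $\tau a^{2m}\nil+\tau(a^m\nil+a^{2m}\nil)\approx\tau(a^m\nil+a^{2m}\nil)$, the same choice of $m$ exceeding the weak depths of the axioms in $E$, and the same application of Prop.~\ref{key2cont} to conclude underivability from the fact that the only term reachable from $q_m$ via $\Rightarrow\mv{\tau}$ is $a^m\nil+a^{2m}\nil$, whose weak completed traces are $\{a^m,a^{2m}\}$.
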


\begin{proof}
Let $E$ be a finite axiomatization over BCCS($A$) that is
sound modulo $\simeq_{\rm WIF}$. Pick an $m$ greater than the weak depth
of any term in $E$. Consider the closed equation $\tau a^{2m}\nil+\tau(a^m\nil+a^{2m}\nil) \approx
\tau(a^m\nil+a^{2m}\nil)$, which is sound modulo $\simeq_{\rm WIF}$.
We have $\mc{WCT}(\tau(a^m\nil+a^{2m}\nil))=\{a^m,a^{2m}\}$ and $\tau a^{2m}\nil+\tau(a^m\nil+a^{2m}\nil)\mv{\tau}a^{2m}\nil$.
However, clearly there is no closed term $r$ such that
$\tau(a^m\nil+a^{2m}\nil) \Rightarrow\mv{\tau} r$ and $\mc{WCT}(r)=\{a^{2m}\}$.
So according to Prop.~\ref{key2cont}, this closed equation cannot be derived from $E$.
\end{proof}

\begin{rem}
  To explain why there is no counterpart of Thm.~\ref{thm:negative-WIF-equivalence} for
  $\precsim_{\rm WIF}$, we note that Prop.~\ref{key2cont} does not hold if its first requirement is changed into
  $E\vdash p\preccurlyeq q$. Namely, the proof regarding the congruence rule
  for $\tau.\_$ in Prop.~\ref{key2cont} fails for $\precsim_{\rm WIF}$.

  For example, for each $m\geq 0$,
  \[\tau a^{2m}\nil ~\precsim_{\rm WIF}~ \tau(a^{2m}\nil+a^m\nil)\]
  These relations satisfy the third and fourth requirement of Prop.~\ref{key2cont}.
  However, it is not the case that $\tau(a^{2m}\nil+a^m\nil)\Rightarrow\mv{\tau}
  q'$ with $\mc{WCT}(q') = \{a^{2m}\}$.

  We note that these relations can all be derived by means of IF1$'$:
    \[\begin{array}{r} \tau a^{2m}\nil = \tau(a^m(a^m+\nil)) \preccurlyeq \tau(a^{m-1}(a^{m+1}\nil+a\nil)) \vspace{2mm} \\
      \preccurlyeq \tau(a^{m-2}(a^{m+2}\nil+a^2\nil)) \preccurlyeq \cdots \preccurlyeq \tau(a^{2m}\nil+a^m\nil)
  \end{array}\]
\end{rem}

\subsection{Concrete impossible futures equivalence}

There also does not exist a finite, sound, ground-complete axiomatization for
BCCSP($A$) modulo $\simeq_{\rm IF}$. The cornerstone for
this negative result is the following infinite family of closed equations. For each $m\geq 0$,
\[ a^{2m+1}\nil+a(a^m\nil+a^{2m}\nil) ~\approx~ a(a^m\nil+a^{2m}\nil) \]
is sound modulo $\simeq_{\rm IF}$. The proof of the following theorem, which can be found in \cite{CF08},
is very similar to the proof of Thm.~\ref{thm:negative-WIF-equivalence}, and is therefore omitted here.

\begin{thm} \label{thm:negative-IF-equivalence}
There is no finite, sound, ground-complete axiomatization for BCCSP($A$) modulo $\simeq_{\rm IF}$.
\end{thm}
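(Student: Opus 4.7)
My plan is to mimic, step by step, the proof of Thm.~\ref{thm:negative-WIF-equivalence}, replacing throughout every weak notion by its concrete counterpart ($\Rightarrow\mv{\tau}$ by $\mv{a}$, $\mc{WCT}$ by $\mc{CT}$, $\depth_w$ by $\depth$, $\simeq_{\rm WIF}$ by $\simeq_{\rm IF}$). The cornerstone equations are already given: for each $m\geq 0$,
\[
a^{2m+1}\nil + a(a^m\nil + a^{2m}\nil) ~\approx~ a(a^m\nil + a^{2m}\nil),
\]
sound modulo $\simeq_{\rm IF}$ because the two $a$-successors that appear, $a^{2m}\nil$ and $a^m\nil+a^{2m}\nil$, share the trace set $\{a^i\mid 0\leq i\leq 2m\}$, making the impossible futures $(a,B)$ identical on both sides. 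The distinguishing property $P_m$ will be: \emph{some $a$-transition leads to a state whose set of completed traces is precisely $\{a^{2m}\}$}. The LHS enjoys $P_m$ via $a^{2m+1}\nil=a(a^{2m}\nil)\mv{a}a^{2m}\nil$, whereas the sole $a$-successor of the RHS is $a^m\nil+a^{2m}\nil$, whose completed traces are $\{a^m,a^{2m}\}$, so the RHS lacks $P_m$.

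The technical heart will be the concrete counterpart of Prop.~\ref{key2cont}: for any finite axiomatization $E$ sound modulo $\simeq_{\rm IF}$, any $m>\max\{\depth(u)\mid t\approx u\in E\}$, and any closed $p,q$ with $E\vdash p\approx q$ (or $E\vdash q\approx p$), $\mc{CT}(q)\subseteq\{a^{m+1},a^{2m+1}\}$, and $p\mv{a}p'$ with $\mc{CT}(p')=\{a^{2m}\}$, one has $q\mv{a}q'$ for some $q'$ with $\mc{CT}(q')=\{a^{2m}\}$. I would prove this by induction on the derivation. Transitivity and the $+$-congruence are routine; the prefix congruence is actually easier than in the weak case, because $p=\alpha p_1\mv{a}p'$ forces $\alpha=a$ and $p_1=p'$, so $\mc{CT}(p_1)=\{a^{2m}\}$, and since $\simeq_{\rm IF}$ preserves completed traces (a completed trace $a_1\cdots a_k$ being precisely the impossible future $(a_1\cdots a_k,A)$), also $\mc{CT}(q_1)=\{a^{2m}\}$, whence $q=aq_1\mv{a}q_1$ witnesses $P_m$. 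With this in hand, picking $m$ larger than the depth of any term in a hypothetical finite ground-complete $E$ and applying the conclusion to the cornerstone equation $p_m\approx q_m$ (which is sound, hence supposedly derivable from $E$) yields the desired contradiction: the induction would force $q_m$ to satisfy $P_m$, which it does not.

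The substitution-instance base case of the induction is where all the work lies. Mirroring Lem.~\ref{key2}, I would split on whether $\sigma(t)\mv{a}p'$ arises from $t$'s own $a$-summand or from a variable summand $x$ of $t$ (so that $\sigma(x)\mv{a}p'$). The latter case is handled by substituting $\rho(x)=a^{2m+1}\nil$ and $\rho(y)=\nil$ for $y\neq x$: then $a^{2m+1}\in\mc{CT}(\rho(t))=\mc{CT}(\rho(u))$, and the bound $\depth(u)<m$ rules out any pure-structure contribution as well as any contribution via a variable at positive depth, forcing $x$ to appear at the top level of $u$ and giving $\sigma(u)\mv{a}p'$. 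The own-summand case requires a concrete analogue of Lem.~\ref{newvariable}: \emph{if $t\precsim_{\rm IF} u$ and $t\mv{a}t'$, then $u\mv{a}u'$ for some $u'$ with $\var(u')\subseteq\var(t')$}. This is the main obstacle, since the weak-case argument rests on $\tau$-steps not arising from substituted closed terms, a feature that breaks when $\tau$ is replaced by $a$. I would repair it by assuming $|A|\geq 2$ and using a distinct action $b\neq a$: set $\rho(x)=\nil$ for $x\in\var(t')$ and $\rho(x)=b^{m'}\nil$ otherwise, with $m'>\depth(t)$. Then $\rho(t)\mv{a}\rho(t')$ with $b^{m'}\notin\mc{T}(\rho(t'))$, so $(a,\{b^{m'}\})$ is an impossible future of $\rho(t)$ and hence of $\rho(u)$; since $\rho(x)=b^{m'}\nil$ has no $a$-transition, the matching move $\rho(u)\mv{a}q$ must come from $u$'s own $a$-summand, so $u\mv{a}u'$ with $\rho(u')=q$; and $b^{m'}\notin\mc{T}(\rho(u'))$, combined at the level of Lem.~\ref{key2} with $\mc{CT}(\sigma(u'))\subseteq\{a^m,a^{2m}\}$ coming from proviso (3), prunes every variable in $u'$ outside $\var(t')$, just as in the weak treatment. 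The degenerate case $|A|=1$ needs separate verification but is of limited interest.
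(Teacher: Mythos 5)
Your overall plan---mirroring the proof of Thm.~\ref{thm:negative-WIF-equivalence} with concrete notions---is exactly what the paper intends (it omits the proof, referring to \cite{CF08}, precisely because it is ``very similar'' to the weak case). The cornerstone equations, the distinguishing property, the induction over derivations (including the observation that the prefix case becomes easier), and the variable-summand case of the substitution lemma are all sound. The gap is in your proof of the concrete analogue of Lem.~\ref{newvariable}, and it is a real one. The impossible future $(a,\{b^{m'}\})$ forces $\rho(u)\mv{a}\rho(u')$ with $b^{m'}\notin\mc{T}(\rho(u'))$, but this only excludes variables $z\notin\var(t')$ occurring in $u'$ at positions reachable via a trace consisting solely of $b$'s---in particular, initial occurrences. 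A variable $z\notin\var(t')$ sitting under a prefix of $u'$, say in a summand $a z$ of $u'$, contributes to $\rho(u')$ only traces of the form $a\,b^{j}$, none of which equals $b^{m'}$, so it survives your test. Such a deep occurrence is precisely what breaks the next step: if $\sigma(z)$ has completed trace $a^{m-1}$, then $a^{m}\in\mc{CT}(\sigma(u'))$, which is perfectly consistent with proviso (3) ($\mc{CT}(\sigma(u'))\subseteq\{a^{m},a^{2m}\}$ rules nothing out here), and the conclusion $\mc{CT}(\sigma(u'))=\{a^{2m}\}$ fails. So the claimed ``pruning'' by proviso (3) does not take place; you really do need $\var(u')\subseteq\var(t')$, exactly as in the weak treatment.

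The fix is to imitate Lem.~\ref{newvariable} more faithfully: it detects foreign variables not by a single designated trace but by a \emph{length} bound. Take $\rho(z)=\nil$ for $z\in\var(t')$ and $\rho(z)=a^{N}\nil$ for $z\notin\var(t')$ with $N>\depth(t)$, and use the impossible future $(a,B)$ with $B=\{w\in A^{*}\mid |w|>\depth(t')\}$. Then $\mc{T}(\rho(t'))\cap B=\emptyset$, since every variable of $t'$ is mapped to $\nil$. Any witness of $(a,B)$ for $\rho(u)$ coming from an initial variable summand $z$ of $u$ is the term $a^{N-1}\nil$, which has a trace of length $N-1>\depth(t')$ and is thus excluded; and any witness $\rho(u')$ with some $z\notin\var(t')$ occurring in $u'$ at \emph{any} depth has a trace $v\,a^{N}$ of length greater than $\depth(t')$ and is likewise excluded. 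Hence $u\mv{a}u'$ for some $u'$ with $\var(u')\subseteq\var(t')$, after which your argument goes through. Note that this choice of $\rho$ and $B$ uses only the single action $a$, so the case $|A|=1$, which you set aside, is covered for free and no case split on the alphabet is needed.
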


We note that this negative result for $\simeq_{\rm IF}$ is not a direct consequence of
the corresponding negative result for $\simeq_{\rm WIF}$ (Thm.~\ref{thm:negative-WIF-equivalence})
together with Thm.~\ref{thm:correct2}.  Firstly, Thm.~\ref{thm:correct2} disregards soundness,
meaning that a finite, sound, ground-complete axiomatization $E$ for BCCSP($A$) modulo $\simeq_{\rm IF}$
could in theory generate an unsound axiomatization $\mc{A}(E)$ for BCCS($A$) modulo $\simeq_{\rm WIF}$.
Secondly, Thm.~\ref{thm:correct2} only applies to safe axiomatizations $E$.

Interestingly, Thm.~\ref{thm:negative-IF-equivalence} would be a direct consequence of
Thm.~\ref{thm:negative-WIF-equivalence} together with the alternative method described
in Remark \ref{rem:alternative}. Namely, that method does imply soundness of the generated axiomatization,
and is not restricted to safe axiomatizations.

The infinite families of equations that are used to prove Thm.~\ref{thm:negative-WIF-equivalence}
and \ref{thm:negative-IF-equivalence} are also sound modulo weak and concrete 2-nested simulation equivalence, respectively.
Therefore these negative results apply to all BCCS- and BCCSP-congruences that are at least
as fine as impossible futures equivalence and at least as coarse as 2-nested simulation equivalence.
This infers some results from \cite{AFGI04}, where among others concrete 2-nested simulation
and possible futures equivalence were considered.


\section{\texorpdfstring{$\omega$}{Omega}-Completeness for the Impossible Futures Preorder} \label{sec:omega}

This section deals with the existence of $\omega$-complete axiomatizations
in the context of the impossible futures preorder.

\subsection{Inverted substitutions}

Groote \cite{Gro90} introduced the technique of inverted
substitutions to prove that an \emph{equational} axiomatization is
$\omega$-complete. Here we adapt his technique to make it suitable
for \emph{inequational} axiomatizations.
Let $\mathrm{T}(\Sigma)$ and $\mathbb{T}(\Sigma)$ denote the sets of closed and open terms,
respectively, over some signature $\Sigma$.

\begin{thm} \label{thm:inverted-substitutions}
Consider an inequational axiomatization $E$ over $\Sigma$. Suppose
that for each inequation $t\preccurlyeq u$ of which all closed
instances can be derived from $E$, there are a mapping
$R:\mathrm{T}(\Sigma)\rightarrow \mathbb{T}(\Sigma)$ and a closed
substitution $\rho$ such that:
\begin{enumerate}
\item \label{inverted1} $E\vdash t\preccurlyeq R(\rho(t))$ and $E\vdash
R(\rho(u))\preccurlyeq u$; \vspace{2mm}

\item \label{inverted2} $E\vdash R(\sigma(v))\preccurlyeq R(\sigma(w))$ for
each $v\preccurlyeq w\in E$ and closed substitution $\sigma$; and
\vspace{2mm}

\item \label{inverted3} for each function symbol $f$ (with arity $n$) in the
signature, and all closed terms $p_1,\ldots,p_n,q_1,\ldots,q_n$:
\[
\begin{array}{l}
E~\cup~\{p_i\preccurlyeq q_i,~R(p_i)\preccurlyeq R(q_i)\mid i=1,\ldots,n\}~\vdash \vspace{1mm} \\
\hspace*{6cm} R(f(p_1,\ldots,p_n))\preccurlyeq
R(f(q_1,\ldots,q_n))
\end{array}
\]
\end{enumerate}
Then $E$ is $\omega$-complete.
\end{thm}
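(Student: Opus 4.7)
The plan is to follow the structure of Groote's $\omega$-completeness argument for equational logic, with the rule for symmetry simply omitted throughout. Fix an inequation $t \preccurlyeq u$ all of whose closed instances are derivable from $E$, together with the mapping $R$ and closed substitution $\rho$ furnished by the hypothesis. Since $\rho(t) \preccurlyeq \rho(u)$ is itself a closed instance, $E \vdash \rho(t) \preccurlyeq \rho(u)$. My goal is to assemble the chain
\[
 t ~\preccurlyeq~ R(\rho(t)) ~\preccurlyeq~ R(\rho(u)) ~\preccurlyeq~ u
\]
in $E$; the outer links are granted directly by condition~(\ref{inverted1}), so only the middle link needs work.

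The middle link reduces to the following auxiliary claim, which I would prove separately: \emph{for all closed $p, q$, if $E \vdash p \preccurlyeq q$, then $E \vdash R(p) \preccurlyeq R(q)$}. I would prove this by induction on the structure of a derivation of $p \preccurlyeq q$ from $E$. Reflexivity is immediate. For transitivity, chain two applications of the induction hypothesis. For a substitution instance of an axiom $v \preccurlyeq w \in E$, meaning $p = \sigma(v)$ and $q = \sigma(w)$ for some closed $\sigma$, condition~(\ref{inverted2}) applies directly. For the congruence rule, where $p = f(p_1, \ldots, p_n)$ and $q = f(q_1, \ldots, q_n)$ with each $E \vdash p_i \preccurlyeq q_i$, the induction hypothesis supplies $E \vdash R(p_i) \preccurlyeq R(q_i)$ for each $i$; feeding both families of derivations into condition~(\ref{inverted3}) yields $E \vdash R(f(p_1, \ldots, p_n)) \preccurlyeq R(f(q_1, \ldots, q_n))$. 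Symmetry never enters, which is exactly why the inequational adaptation works.

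Applying the claim to $E \vdash \rho(t) \preccurlyeq \rho(u)$ gives $E \vdash R(\rho(t)) \preccurlyeq R(\rho(u))$, and transitivity closes the three-step chain to yield $E \vdash t \preccurlyeq u$, as required. The main conceptual subtlety is cosmetic rather than deep: in the absence of symmetry, condition~(\ref{inverted1}) has to furnish each outer link in the correct direction independently---$t \preccurlyeq R(\rho(t))$ on the left and $R(\rho(u)) \preccurlyeq u$ on the right---so that transitivity alone can stitch the chain together. Likewise, condition~(\ref{inverted3}) must provide a one-directional congruence statement for $R \circ f$ rather than a two-directional one. With those directional refinements in place, the rest of the argument is a routine rewriting of Groote's induction in which every $\approx$ becomes $\preccurlyeq$ and the symmetry case is simply dropped.
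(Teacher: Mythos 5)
Your proposal is correct and follows essentially the same route as the paper: the same auxiliary claim that $E\vdash p\preccurlyeq q$ implies $E\vdash R(p)\preccurlyeq R(q)$ for closed $p,q$, proved by the same induction on derivations (reflexivity, axiom instance via condition~(2), transitivity, congruence via condition~(3)), followed by the same three-step chain $t\preccurlyeq R(\rho(t))\preccurlyeq R(\rho(u))\preccurlyeq u$ using condition~(1).
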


The underlying idea of Thm.~\ref{thm:inverted-substitutions} is that each derivation of a closed inequation $p\preccurlyeq q$ from $E$
induces, by requirements (\ref{inverted2}) and (\ref{inverted3}), a derivation of the open
inequation $R(p)\preccurlyeq R(q)$ from $E$. Since $E\vdash\rho(t)\preccurlyeq\rho(u)$,
it follows using requirement (\ref{inverted1}) that $E\vdash t\preccurlyeq R(\rho(t))\preccurlyeq R(\rho(u))\preccurlyeq u$.
We now proceed to prove this theorem.

\begin{proof}
Let $t, u$ be terms such that for each closed substitution
$\sigma$,
  \[  \sigma(t) ~\preccurlyeq~ \sigma(u)\]
By assumption, there are a mapping $R:\mathrm{T}(\Sigma)\rightarrow \mathbb{T}(\Sigma)$
and a closed substitution $\rho$ such that the three requirements of
Thm.~\ref{thm:inverted-substitutions} are satisfied. We have to prove that
$E\vdash t\preccurlyeq u$. This is an immediate corollary of the
following claim, for all closed terms $p,q$:
\[ E~\vdash~ p~\preccurlyeq~ q ~~\implies~~ E~\vdash~ R(p)~\preccurlyeq~ R(q)\]
Namely, by assumption, $E\vdash \rho(t) \preccurlyeq \rho(u)$, and
then the claim above implies that $E\vdash R(\rho(t))\preccurlyeq
R(\rho(u))$. So by requirement (\ref{inverted1}) of Thm.~\ref{thm:inverted-substitutions}, $E\vdash t \preccurlyeq u$.

We prove the claim by induction on a derivation of $E\vdash p\preccurlyeq q$. We have to check the four kinds of inference rules.
\begin{itemize}
    \item $p = q$. Then $R(p)=R(q)$.

    \medskip

    \item $p \preccurlyeq q$ is an instance of some $v\preccurlyeq w\in E$ and a closed substitution
    $\sigma$. By requirement (\ref{inverted2}) of Thm.~\ref{thm:inverted-substitutions}, $E\vdash R(p)\preccurlyeq R(q)$.

    \medskip

    \item $E\vdash p \preccurlyeq q$ has been proved by $E\vdash p \preccurlyeq r$ and $E\vdash r \preccurlyeq
    q$, for some $r$. By induction, $E\vdash R(p) \preccurlyeq R(r)$ and
    $E\vdash R(r) \preccurlyeq R(q)$. So $E\vdash R(p) \preccurlyeq R(q)$.

    \medskip

    \item $p=f(p_1,\ldots,p_n)$ and $q=f(q_1,\ldots,q_n)$,
    and $E\vdash p\preccurlyeq q$ has been proved by $E\vdash p_i\preccurlyeq
    q_i$ for $i=1,\ldots,n$. By induction,
    $E\vdash R(p_i)\preccurlyeq R(q_i)$ for $i=1,\ldots,n$. So by requirement (\ref{inverted3}) of Thm.~\ref{thm:inverted-substitutions},
    $E \vdash R(f(p_1,\ldots,p_n))\preccurlyeq R(f(q_1,\ldots,q_n))$.
    \qedhere
  \end{itemize}
\end{proof}

\subsection{Infinite alphabet}

We show that the axiomatization consisting of
A1-4+IF1-2 is $\omega$-complete, provided the alphabet is
infinite. The proof is based on inverted substitutions.

\begin{thm} \label{yes}
If $|A|=\infty$, then \emph{A1-4+IF1-2} is $\omega$-complete for BCCSP($A$).
\end{thm}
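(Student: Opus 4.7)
The plan is to invoke Thm.~\ref{thm:inverted-substitutions}. Let $E$ be the axiomatization A1-4+IF1-2, and fix an inequation $t\preccurlyeq u$ all of whose closed instances are derivable from $E$. Because $|A|=\infty$, I can pick for each variable $x\in\var(t)\cup\var(u)$ a distinct \emph{marker} action $a_x\in A$ that does not occur in $t$ or $u$. Define the closed substitution $\rho$ by $\rho(x)=a_x\nil$ when $x\in\var(t)\cup\var(u)$ and $\rho(x)=\nil$ otherwise, and define the inverted substitution $R:\mathrm{T}(\Sigma)\to\mathbb{T}(\Sigma)$ inductively by $R(\nil)=\nil$, $R(p+q)=R(p)+R(q)$, $R(a_x p)=x$ whenever $a_x$ is one of the designated markers, and $R(\alpha p)=\alpha R(p)$ in all other cases.

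Requirement~(\ref{inverted1}) of Thm.~\ref{thm:inverted-substitutions} then reduces to the identities $R(\rho(t))=t$ and $R(\rho(u))=u$, which hold exactly (not merely up to derivability) by a routine induction on term structure: since no marker $a_x$ occurs as a prefix in $t$ or $u$, the only use of the third clause of $R$ is to cancel the $\rho$-substitutions at variable positions. Requirement~(\ref{inverted3}) is similarly routine: for $+$ and $\nil$ the homomorphism property of $R$ yields what is needed; for a prefix of the form $\alpha p$, either $\alpha$ is a marker $a_x$, in which case $R(\alpha p)=x=R(\alpha q)$ and the claim is trivial, or $\alpha$ commutes with $R$ and the congruence rule for prefixing promotes $R(p)\preccurlyeq R(q)$ to $\alpha R(p)\preccurlyeq\alpha R(q)$.

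The substantive step is requirement~(\ref{inverted2}): for each axiom $v\preccurlyeq w\in E$ and closed substitution $\sigma$, one must derive $E\vdash R(\sigma(v))\preccurlyeq R(\sigma(w))$. For A1-4 this is immediate from the homomorphism property of $R$. For an instance of IF1 or IF2 with outer action $a$, the argument splits on whether $a$ coincides with one of the markers $a_{x_i}$. If $a$ is not a marker, then $R$ commutes with every occurrence of the prefix $a$ appearing in $v$ and $w$, so $R(\sigma(v))\preccurlyeq R(\sigma(w))$ is an instance of the \emph{same} axiom applied to the open terms $R(\sigma(x))$, $R(\sigma(y))$, $R(\sigma(z))$. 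If instead $a=a_{x_i}$, every $a$-prefixed summand of $\sigma(v)$ and $\sigma(w)$ is collapsed by $R$ to the single variable $x_i$, so both sides become sums of copies of $x_i$ which coincide modulo A3. The IF2 subcase with $a=a_{x_i}$, where three marker-prefixed summands on the right and two on the left must all be shown to collapse to a single $x_i$, is where the bookkeeping is least trivial and is what I expect to be the main obstacle; it is, however, a finite calculation. Once these requirements have been verified, Thm.~\ref{thm:inverted-substitutions} delivers $E\vdash t\preccurlyeq u$, yielding $\omega$-completeness.
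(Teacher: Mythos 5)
Your proposal is correct and follows essentially the same route as the paper: the same choice of $\rho$ (fresh marker actions $a_x$ for variables), the same inverted substitution $R$ collapsing marker-prefixed subterms back to variables, and the same case split on whether the outer action of an IF1/IF2 instance is a marker (collapse via A3) or not (apply the same axiom to the $R$-images). The only cosmetic difference is that the paper assigns a marker to every variable in $V$ rather than just those in $\var(t)\cup\var(u)$, which changes nothing.
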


\proof
We define the closed substitution $\rho$ by $\rho(y)=a_y\nil$, where $a_y$ is a
distinct action in $A$ for each $y\in V$ that occurs in neither $t$ nor $u$.
Such actions exist because $A$ is infinite. We define the mapping
$R$ from closed to open BCCSP($A$) terms as follows:
\[\left\{
\begin{tabular}{lll}
    $R(\nil)$   & $=\nil$\\
    $R(at)$   & $=y$ & if $a=a_y$ for some $y\in V$\\
    $R(at)$   & $=aR(t)$ & if $a\neq a_y$ for all $y\in V$\\
    $R(t+u)$   & $=R(t)+R(u)$
\end{tabular}
\right.
\]

Consider a pair of BCCSP($A$) terms $t,u$ such that all closed
instances of $t\preccurlyeq u$ can be derived from A1-4+IF1-2.
We check the three requirements of Thm.~\ref{thm:inverted-substitutions}.
\begin{enumerate}

\item[(\ref{inverted1})] Since $t$ and $u$ do not contain actions of the form
$a_y$\hspace{-1pt}, clearly $R(\rho(t))\mathbin=t$ and $R(\rho(u))\mathbin=u$.

\medskip

\item[(\ref{inverted2})] For A1-4, the proof is trivial, because for each of these four axioms $v\approx w$,
$R(\sigma(v))\approx R(\sigma(w))$ is always a substitution instance of the axiom itself. We check the remaining
cases IF1 and IF2. Let $\sigma$ be a closed substitution. With regard to IF1 we distinguish two cases.

\begin{itemize}[label=$-$]

\item $a = a_y$ for some $y\in V$. Then \[R(a_y(\sigma(x_1)+
\sigma(x_2))) = y = y+y=
R(a_y(\sigma(x_1))+a_y(\sigma(x_2))).\]

\medskip

\item $a\neq a_y$ for all $y\in V$. Then using IF1 we derive
\[\eqalign{
  R(a(\sigma(x_1)+\sigma(x_2))) &= a(R(\sigma(x_1))+R(\sigma(x_2)))\cr
&\preccurlyeq aR(\sigma(x_1))+aR(\sigma(x_2)) \cr
&= R(a\sigma(x_1)+a\sigma(x_2)).}
\]

\end{itemize}

\medskip

\noindent
With regard to IF2 we distinguish two cases as well.

\begin{itemize}[label=$-$]

\item $a=a_y$ for some $y\in V$. Then 
\[\eqalign{R(a_y\sigma(x_1) +a_y(\sigma(x_2)+\sigma(x_3))) 
&= y+y \cr
&= y+y+y \cr
&= R(a_y(\sigma(x_1)+
\sigma(x_2)) + a_y\sigma(x_1) +a_y(\sigma(x_2)+ \sigma(x_3))).}
\]

\medskip

\item $a\neq a_y$ for all $y\in V$. Then using IF2 we derive
\[\eqalign{R(a\sigma(x_1)+a(\sigma(x_2)+\sigma(x_3))) 
&= aR(\sigma(x_1))+a(R(\sigma(x_2))+R(\sigma(x_3)))\cr
&\approx a(R(\sigma(x_1))+R(\sigma(x_2)))\cr
&\phantom{\approx{}}+aR(\sigma(x_1))+a(R(\sigma(x_2))+R(\sigma(x_3))) \cr
&=R(a(\sigma(x_1)+\sigma(x_2))+a\sigma(x_1)+a(\sigma(x_2)+\sigma(x_3))).}
\]
\end{itemize}

\medskip

\item[(\ref{inverted3})] Consider the operator $\_+\_$. From $R(p_1)\preccurlyeq
R(q_1)$ and $R(p_2)\preccurlyeq R(q_2)$ we derive
\[R(p_1+p_2)=R(p_1)+R(p_2)\preccurlyeq R(q_1)+R(q_2)=R(q_1+q_2).\]

\medskip

\noindent
Consider the prefix operator $a\_$. We distinguish two cases.
\begin{itemize}[label=$-$]
\item $a=a_y$ for some $y\in V$. Then $R(a_yp_1)=y=R(a_yq_1)$.
\item $a\neq a_y$ for all $y\in V$. Then from $R(p_1)\preccurlyeq
R(q_1)$ we derive \[R(ap_1)=aR(p_1)\preccurlyeq aR(q_1)=R(aq_1).\eqno{\qEd}\]
\end{itemize}
\end{enumerate}

\begin{cor}
If $|A|=\infty$, then \emph{A1-4+WIF1+WIF2$'$+W1} is $\omega$-complete for BCCS($A$).
\end{cor}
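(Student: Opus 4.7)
The plan is to invoke Theorem~\ref{thm:correct} with the concrete axiomatization $E = \mbox{A1-4+IF1-2}$ and the preorder $\sqsubseteq_c\,=\,\precsim_{\rm IF}$, whose corresponding weak preorder is $\precsim_{\rm WIF}$. By Theorem~\ref{prop16}, $E$ is sound and ground-complete for BCCSP($A$) modulo $\precsim_{\rm IF}$, and by Theorem~\ref{yes}, $E$ is $\omega$-complete for BCCSP($A$) when $|A|=\infty$. Hence all the hypotheses of Theorem~\ref{thm:correct} are in place provided we also verify the \emph{safety} of $E$: it suffices to observe that in each axiom of A1-4+IF1-2 no variable occurs both initially and inside the scope of a prefix, which is immediate by inspection.

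Applying Theorem~\ref{thm:correct} therefore yields that the axiomatization $\mc{A}(E)$ obtained from Def.~\ref{def:algorithm-weak-preorder} is $\omega$-complete for BCCS($A$) modulo $\precsim_{\rm WIF}$. By construction, $\mc{A}(E)$ consists of A1-4, ${\it init\mbox{-}\tau}(\mbox{A1-4+IF1-2})$, WIF1-2, and W1 (note that since $\tau\nil\not\precsim_{\rm WIF}\nil$, the clause for W2 does not apply). In particular $\mc{A}(E)$ contains IF1$'$, IF2$'$, WIF1-2 and W1.

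It remains to show that A1-4+WIF1+WIF2$'$+W1 is deductively equivalent to $\mc{A}(E)$, so that $\omega$-completeness transfers. This is exactly what was carried out in Sect.~\ref{sec:wifp}: every axiom of $\mc{A}(E)$ is derivable from A1-4+WIF1+WIF2$'$+W1, namely IF1$'$ and IF2$'$ are derived there as consequences of WIF1, WIF2$'$ and W1, and the two directions of WIF2 are derived from WIF2$'$ and W1. Conversely, WIF2$'$ and W1 are sound modulo $\precsim_{\rm WIF}$, and in fact W1 already lies in $\mc{A}(E)$ and WIF2$'$ is derivable from WIF2 together with A1-4. Thus A1-4+WIF1+WIF2$'$+W1 and $\mc{A}(E)$ prove the same inequations, and the $\omega$-completeness established for the latter immediately transfers to the former.

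The only non-routine point is the bookkeeping of the derivations in Sect.~\ref{sec:wifp}, showing mutual derivability between $\mc{A}(E)$ and the simplified axiomatization A1-4+WIF1+WIF2$'$+W1; however this has already been worked out explicitly in the paper and requires no further calculation here.
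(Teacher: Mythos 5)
Your proposal follows exactly the paper's route: the paper obtains this corollary from Thm.~\ref{thm:correct} applied to $E=\mbox{A1-4+IF1-2}$ (using Thm.~\ref{prop16} for ground-completeness and Thm.~\ref{yes} for $\omega$-completeness), followed by the simplification argument of Sect.~\ref{sec:wifp}, and your write-up fills in the same steps, including the safety check on $E$ and the observation that clause~(\ref{alg-req5}) of Def.~\ref{def:algorithm-weak-preorder} never fires (which holds because $p\precsim_{\rm WIF}q$ and $p\mv{\tau}$ always imply $q\mv{\tau}$ by definition, not merely because of the single instance $\tau\nil\not\precsim_{\rm WIF}\nil$).

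One sub-claim is wrong, although the proof survives. You assert that WIF2$'$ is derivable from WIF2 together with A1-4. It is not: A1-4 and WIF2 are all \emph{equations}, so the set of inequations derivable from them alone is closed under symmetry; derivability of $\tau(x+y)\preccurlyeq\tau x+y$ would therefore entail derivability of $\tau(x+y)\approx\tau x+y$, which is unsound modulo $\precsim_{\rm WIF}$ (for $x=a\nil$, $y=b\nil$, the pair $(\varepsilon,\{b\})$ is a weak impossible future of $\tau a\nil+b\nil$ but not of $\tau(a\nil+b\nil)$). The repair is immediate: WIF2$'$ \emph{is} derivable from $\mc{A}(E)$, via $\tau(x+y)\preccurlyeq\tau(\tau x+y)\approx\tau x+y$, using W1 under a $\tau$-prefix and D1 (Lem.~\ref{lemma5}), so the deductive equivalence you invoke does hold. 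Alternatively, that direction is dispensable altogether: since A1-4+WIF1+WIF2$'$+W1 is sound and derives every axiom of $\mc{A}(E)$, any inequation all of whose closed instances it proves is sound, hence (by ground-completeness of $\mc{A}(E)$) has all closed instances provable in $\mc{A}(E)$, hence (by $\omega$-completeness of $\mc{A}(E)$) is provable in $\mc{A}(E)$, hence in the simplified system.
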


This result was already obtained in \cite{VM01} (they do not refer to
$\omega$-completeness explicitly, but their completeness proof works for
open terms, see \cite[Thm.~5]{VM01}).
Here it is a direct corollary of the link established in Sect.~\ref{sec:link},
by the same reasoning as in Sect.~\ref{sec:wifp}.

\newpage
\subsection{Finite alphabet}
\label{sec:finite-alphabet}

\subsubsection{$1<|A|<\infty$.}
We prove that the inequational theory of BCCS($A$) modulo
$\precsim_{\rm WIF}$ does not have a finite basis in case of
a finite alphabet $A$ with at least two elements.
Pick an $a\in A$ and an $x\in V$. The cornerstone for this negative result
is the following infinite family of inequations, for each $m\geq 0$:
\[\tau(a^mx) +  \Phi_m ~\preccurlyeq~ \Phi_m \]
with
\[\Phi_m ~=~ \tau(a^mx+x) + \sum_{b\in A} \tau(a^mx+a^mb\nil)\]
We argue that these inequations are sound modulo $\precsim_{\rm WIF}$.
For any closed substitution $\rho$ we have $\mc{WT}(\rho(\tau(a^mx))) \subseteq \mc{WT}(\rho(\Phi_m))$
and $\rho(\Phi_m)\mv{\tau}$. To argue that $\rho(\tau(a^mx) +
\Phi_m)$ and $\rho(\Phi_m)$ have the same weak impossible futures,
it suffices to consider the transition $\rho(\tau(a^mx)+\Phi_m)\mv{\tau}a^m\rho(x)$.
If $\rho(x)=\nil$, then $\rho(\Phi_m)\mv{\tau}a^m\nil + \nil$ generates the same
weak impossible futures $(\varepsilon,B)$.  If, on the other hand,
$b\mathbin\in\mc{I}(\rho(x))$ for some $b\mathbin\in A$, then
$\mc{WT}(a^m\rho(x)+a^mb\nil)=\mc{WT}(a^m\rho(x))$, so
$\rho(\Phi_m)\mv{\tau} a^m\rho(x)+a^mb\nil$ generates the same
weak impossible futures $(\varepsilon,B)$.

We extend the notions of weak traces from closed to open terms, allowing weak traces
of the form $a_1 \cdots a_k x \in A^\ast V$.  This is done by treating each variable occurrence $x$
in a term as if it were a subterm $x\nil$ with $x$ a concrete action.
For example, under this convention the weak completed traces of $\Phi_m$ are $a^mx$, $x$ and $a^mb$ for all $b\in A$.
We write $\mc{WT}_V(t)$ for the set of weak traces of $t$ that end with a variable.
So $\mc{WT}_V(\Phi_m) =\{a^{m}x, x\}$.
Note that $\mc{WT}_V(t) \mathbin\subseteq \mc{WT}_V(u)$ implies $\mc{WT}_V(\sigma(t))
\mathbin\subseteq \mc{WT}_V(\sigma(u))$ for any terms $u,v$ and substitution~$\sigma$.

\begin{rem}\label{traces-substitution}
Let $m>\depth_w(t)$ and $\rho$ a closed substitution.
Then $a_1\cdots a_m \in \mc{WT}(\rho(t))$ iff there are a $k < m$ and $y\mathbin\in V$
such that $a_1\cdots a_k y \in \mc{WT}_V(t)$ and $a_{k+1} \cdots a_m \in \mc{WT}(\rho(y))$.
\end{rem}

\begin{lemma}\label{trace-preservation}
Let $|A|>1$. If $t \precsim_{\rm WIF} u$, then $\mc{WT}(t)= \mc{WT}(u)$.
\end{lemma}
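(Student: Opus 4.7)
My plan is to unpack the hypothesis $t\precsim_{\rm WIF} u$ as ``$\mc{WT}(\rho(t)) = \mc{WT}(\rho(u))$ for every closed substitution $\rho$,'' which is immediate from the definition of $\precsim_{\rm WIF}$ on closed terms, and then to recover $\mc{WT}(t)=\mc{WT}(u)$ by probing with two carefully chosen substitutions. Since this extracted condition is symmetric in $t$ and $u$, the same argument yields both inclusions; I focus on $\mc{WT}(t)\subseteq\mc{WT}(u)$.

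Given $\sigma\in\mc{WT}(t)$, I would split on whether $\sigma$ lies in $A^*$ or in $A^*V$. For $\sigma\in A^*$, take $\rho_0$ mapping every variable to $\nil$. Since variables and $\nil$ both lack transitions, the transitions of $\rho_0(v)$ coincide with the non-variable transitions of $v$ for $v\in\{t,u\}$, giving $\mc{WT}(\rho_0(v)) = \mc{WT}(v)\cap A^*$. The hypothesis then transfers $\sigma$ directly to $\mc{WT}(u)$.

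For $\sigma = \sigma' y$ with $\sigma'\in A^*$ and $y\in V$, the plan is to encode ``the trace went through $y$'' with a distinctive suffix. Using $|A|>1$, pick two distinct letters $a,b\in A$ and fix $N>\max\{\depth_w(t),\depth_w(u)\}$. Define $\rho_y(y)=a^Nb\nil$ and $\rho_y(x)=\nil$ for $x\neq y$. From $\sigma' y\in\mc{WT}_V(t)$ together with $a^Nb\in\mc{WT}(\rho_y(y))$ one obtains $\sigma'a^Nb\in\mc{WT}(\rho_y(t))=\mc{WT}(\rho_y(u))$. Because $|\sigma'a^Nb|>\depth_w(u)$, Remark~\ref{traces-substitution} applied to $u$ yields a decomposition $\sigma'a^Nb=\sigma_1\sigma_2$ with $\sigma_1 x\in\mc{WT}_V(u)$ for some $x\in V$ and $\sigma_2$ a nonempty element of $\mc{WT}(\rho_y(x))$. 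The choice $x\neq y$ is immediately ruled out, since then $\mc{WT}(\rho_y(x))=\{\varepsilon\}$; so $x=y$ and $\sigma_2\in\{a,a^2,\dots,a^N,a^Nb\}$. Because $\sigma_1\sigma_2$ ends in $b$ while among those candidates only $a^Nb$ does, one concludes $\sigma_2=a^Nb$, $\sigma_1=\sigma'$, and hence $\sigma' y\in\mc{WT}_V(u)\subseteq\mc{WT}(u)$.

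The main obstacle I anticipate is this last suffix analysis: it is precisely the trailing $b$ that certifies the trace $\sigma'a^Nb$ in $\rho_y(u)$ was produced by executing $\rho_y(y)$ at its very end, ruling out decompositions via other variables or via a long chain of $a$-prefixes internal to $u$. This disambiguation collapses when $|A|=1$, which is exactly why the lemma imposes $|A|>1$ and matches known single-letter counterexamples.
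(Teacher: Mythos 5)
Your proof is correct and follows essentially the same route as the paper's: both handle action-ending traces via the all-$\nil$ substitution, and detect variable-ending traces by substituting terms of the form $a^N b\nil$ with $N$ exceeding the weak depths and a second letter $b$ serving as an end-marker, then decomposing the resulting long trace as in Remark~\ref{traces-substitution}. The only (inessential) difference is that you use one substitution per variable, sending all other variables to $\nil$, whereas the paper uses a single substitution tagging each variable $z$ with a distinct exponent $\iota(z)\cdot m$ via an injection $\iota$.
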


\begin{proof}
It is not hard to see (by substituting $\nil$ for all variables in $t$ and $u$)
that $t \precsim_{\rm WIF} u$ implies that $t$ and $u$ must have the same weak traces ending with an action.

Pick distinct actions $a,b\mathbin\in A$
and an injection $\iota: V \rightarrow \mathbb{Z}_{>0}$. Let
$m=\depth_w(u)+1$. Define the closed substitution $\rho$ by
$\rho(z)=a^{\iota(z){\cdot}m}b\nil$ for all $z\in V$.
Since $t\precsim_{\rm WIF} u$, we have $\mc{WT}(\rho(t))= \mc{WT}(\rho(u))$.
As $m>\depth_w(t)=\depth_w(u)$, $\iota$ is an injection, and $a$ and $b$ are distinct,
it is not hard to see, by Remark~\ref{traces-substitution}, that this implies $\mc{WT}_V(t) \mathbin= \mc{WT}_V(u)$.
\end{proof}

\begin{lemma} \label{newvariable3}
Let $|A|>1$. If $t\precsim_{\rm WIF} u$ and $t\Rightarrow\mv{\tau} t'$,
then there is a term $u'$ with $u\Rightarrow\mv{\tau} u'$ and $\mc{WT}_V(u')\subseteq\mc{WT}_V(t')$.
\end{lemma}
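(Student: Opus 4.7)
The plan is to mimic the proof of Lemma~\ref{newvariable} with a finer closed substitution that fingerprints each variable. Because $|A|>1$ we may fix distinct actions $a,b\in A$ and an injection $\iota:V\to\mathbb{Z}_{>0}$. Let $m>\max(\depth_w(t),\depth_w(u))$ and define $\rho(y)=a^{\iota(y)\cdot m}b\nil$. Each $\rho(y)$ is $\tau$-free, so every $\tau$-transition of $\rho(t)$ or $\rho(u)$ reflects a $\tau$-transition of $t$ or $u$; in particular, $\rho(u)\Rightarrow q$ forces $q=\rho(u')$ for some $u'$ with $u\Rightarrow u'$.

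First I would produce a candidate $u'$. The pair $(\varepsilon,A^*\setminus\mc{WT}(\rho(t')))$ is a weak impossible future of $\rho(t)$, witnessed by $\rho(t')$ via $\rho(t)\Rightarrow\rho(t')$. Since $\rho(t)\precsim_{\rm WIF}\rho(u)$, this pair is also a weak impossible future of $\rho(u)$, yielding some $q$ with $\rho(u)\Rightarrow q$ and $\mc{WT}(q)\subseteq\mc{WT}(\rho(t'))$; writing $q=\rho(u')$ as above gives $u\Rightarrow u'$ together with $\mc{WT}(\rho(u'))\subseteq\mc{WT}(\rho(t'))$.

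The heart of the argument is deducing $\mc{WT}_V(u')\subseteq\mc{WT}_V(t')$. Take any $a_1\cdots a_n y\in\mc{WT}_V(u')$; note $n\leq\depth_w(u')\leq\depth_w(u)<m$. The word $w=a_1\cdots a_n a^{\iota(y)\cdot m}b$ is then a weak trace of $\rho(u')$, hence of $\rho(t')$. Since $|w|\geq m+1>\depth_w(t')$, Remark~\ref{traces-substitution} applied to $t'$ decomposes $w$ as a prefix $a_1\cdots a_k y'\in\mc{WT}_V(t')$ followed by a weak trace of $\rho(y')$ of length $|w|-k$. This suffix contains the letter $b$, so it must equal $a^{\iota(y')\cdot m}b$, the unique $b$-bearing weak trace of $\rho(y')$. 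Matching lengths yields the identity $n-k=(\iota(y')-\iota(y))\cdot m$, but $0\leq n,k<m$ forces the right-hand factor to vanish, so $\iota(y')=\iota(y)$; injectivity of $\iota$ then gives $y'=y$ and $k=n$, so $a_1\cdots a_n y\in\mc{WT}_V(t')$ as required.

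Finally, if the $u'$ above already satisfies $u\Rightarrow\mv{\tau}u'$, we are done. Otherwise $u'=u$, and the hypothesis $t\Rightarrow\mv{\tau}t'$ entails $t\mv{\tau}$; the $\tau$-preservation clause of $\precsim_{\rm WIF}$ then produces $u\mv{\tau}u''$ for some $u''$, and since $\mc{WT}_V(u'')\subseteq\mc{WT}_V(u)=\mc{WT}_V(u')\subseteq\mc{WT}_V(t')$, the term $u''$ is the desired witness. The main obstacle is precisely the fingerprinting step: to strengthen Lemma~\ref{newvariable}'s variable-set inclusion to the labelled set $\mc{WT}_V$, the substitution must encode both the identity and the position of each variable into a single long weak trace, and the spaced, $b$-terminated words $a^{\iota(y)\cdot m}b\nil$ are tailored so this information can be decoded unambiguously from the terminal $b$ together with the multiples-of-$m$ gap in the length.
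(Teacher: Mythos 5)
Your proof is correct and follows essentially the same route as the paper: the same fingerprinting substitution $\rho(y)=a^{\iota(y)\cdot m}b\nil$, the same use of a weak impossible future of $\rho(t)$ witnessed by $\rho(t')$ to obtain $q=\rho(u')$ with $\mc{WT}(\rho(u'))\subseteq\mc{WT}(\rho(t'))$, and the same fallback to the $\tau$-preservation clause when $u'=u$. The only difference is that you spell out in full the decoding step (matching the terminal $b$ and the multiple-of-$m$ length gap) that the paper dismisses as ``not hard to see''.
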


\begin{proof}
Define the closed substitution $\rho$ as in the previous proof.
Since $t\precsim_{\rm WIF} u$ and $\rho(t)\Rightarrow\rho(t')$, there must
be a closed term $q$ with $\rho(u) \Rightarrow q$ and $\mc{WT}(q)\subseteq
\mc{WT}(\rho(t'))$. Since $\rho(z)$ is a BCCSP($A$) term for all $z\in V$,
it follows that there is a term $u'$ with $u \Rightarrow u'$ and $\rho(u')=q$.
Again, as $m>\depth_w(u)$, $\iota$ is an injection, and $a$ and $b$ are distinct,
it is not hard to see that $\mc{WT}(\rho(u'))\subseteq
\mc{WT}(\rho(t'))$ implies $\mc{WT}_V(u')\subseteq\mc{WT}_V(t')$.
In case $u \Rightarrow\mv{\tau} u'$ we are done, so assume $u'=u$.
Since $t \precsim_{\rm WIF} u$ and $t \mv{\tau}$, clearly $u \mv{\tau} u''$ for some $u''$.
And $\mc{WT}_V(u'') \subseteq \mc{WT}_V(u) = \mc{WT}_V(u') \subseteq \mc{WT}_V(t')$.
\end{proof}

\begin{lemma} \label{no-traces1}
  Let $1<|A|<\infty$. Suppose that for some terms $t,u$, substitution $\sigma$, action $a$ and $m>0$:
  \begin{enumerate}
  \item \label{traces-prov1} $t \precsim_{\rm WIF} u$;

  \item \label{traces-prov2} $m \geq \depth_w(u)$; and

  \item \label{traces-prov4} $\sigma(t)\Rightarrow\mv{\tau} \hat t$ for a term $\hat t$
  without weak traces $x$ and $a^mb$ for any $b\in A$.
 \end{enumerate}
  Then $\sigma(u)\Rightarrow\mv{\tau}\hat u$ for a term $\hat u$ without weak traces $x$ and $a^mb$ for any $b\in A$.
\end{lemma}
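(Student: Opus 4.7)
The plan is to mirror the case analysis used in the proof of Lemma~\ref{key2}, working with weak traces rather than weak completed traces. Any $\tau$-path $\sigma(t)\Rightarrow\mv{\tau}\hat t$ decomposes in one of two ways: either it stays entirely inside $t$'s own structure, so that $\hat t = \sigma(t')$ for some $t'$ with $t\Rightarrow\mv{\tau} t'$; or at some stage a $\tau$-step is taken from the image of a variable, in which case the first such step gives a decomposition $t\Rightarrow v'+y$ for some $v'$ and variable $y$, followed by $\sigma(y)\Rightarrow\mv{\tau}\hat t$.

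In the first case I would apply Lemma~\ref{newvariable3} to $t\Rightarrow\mv{\tau} t'$ to obtain $u'$ with $u\Rightarrow\mv{\tau} u'$ and $\mc{WT}_V(u')\subseteq\mc{WT}_V(t')$, and set $\hat u=\sigma(u')$. To show that $\hat u$ has neither the weak trace $x$ nor any weak trace $a^m b$, I would use the standard decomposition of weak traces under substitution: every weak trace of $\sigma(u')$ is either an action-trace in $\mc{WT}(u')\cap A^*$, or splits as $a_1\cdots a_k\cdot w$ with $a_1\cdots a_k z\in\mc{WT}_V(u')$ and $w\in\mc{WT}(\sigma(z))$. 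For the one-symbol trace $x$, only the second form is available (with $k=0$), and the inclusion $\mc{WT}_V(u')\subseteq\mc{WT}_V(t')$ transfers the decomposition to $\sigma(t')=\hat t$, contradicting the hypothesis. For $a^m b$ the purely-action branch is ruled out because $\depth_w(u')\leq\depth_w(u)\leq m$ (the $\tau$-path $u\Rightarrow\mv{\tau} u'$ cannot increase weak depth) while $a^m b$ has length $m+1$; the variable-path branch again lifts to $\hat t$ via the $\mc{WT}_V$-inclusion.

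In the second case, the transition $t\Rightarrow v'+y$ witnesses $y\in\mc{WT}_V(t)$, and Lemma~\ref{trace-preservation} (which needs $|A|>1$) gives $\mc{WT}(t)=\mc{WT}(u)$, hence $y\in\mc{WT}_V(u)$, so there exists $u''$ with $u\Rightarrow u''+y$. Then $\sigma(u)\Rightarrow\sigma(u'')+\sigma(y)$, and I would continue along $\sigma(y)\Rightarrow\mv{\tau}\hat t$: its very first $\tau$-step, taken from the $\sigma(y)$-summand, discards $\sigma(u'')$ entirely, yielding $\sigma(u)\Rightarrow\mv{\tau}\hat t$. The choice $\hat u=\hat t$ then satisfies the required conclusion by the hypothesis already placed on $\hat t$.

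The hard part will be the first case: I need to convince myself that the comparatively weak inclusion $\mc{WT}_V(u')\subseteq\mc{WT}_V(t')$ supplied by Lemma~\ref{newvariable3} is strong enough to preclude both forbidden traces in $\sigma(u')$. What makes this work is that the purely-action branch for $a^m b$ is killed solely by the depth bound $m\geq\depth_w(u)$, so that no stronger inclusion on the pure action-traces of $u'$ beyond what Lemma~\ref{newvariable3} delivers is needed, while the variable-path branch for both $x$ and $a^m b$ is handled uniformly by the $\mc{WT}_V$-inclusion.
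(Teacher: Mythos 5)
Your proposal is correct and follows essentially the same route as the paper's proof: the same two-case split on whether the $\tau$-path $\sigma(t)\Rightarrow\mv{\tau}\hat t$ stays within $t$ or passes through a variable, with Lemma~\ref{newvariable3} and the depth bound handling the first case and Lemma~\ref{trace-preservation} the second. The only cosmetic difference is that you re-derive the substitution-monotonicity of $\mc{WT}_V$ via an explicit trace decomposition where the paper simply invokes it.
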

\begin{proof}
Based on proviso (\ref{traces-prov4}) there are two cases to consider.
\begin{itemize}
\item $t \Rightarrow\mv{\tau} t'$ for some term $t'$ with $\sigma(t') = \hat t$.
  By Lem.~\ref{newvariable3} there is a term $u'$ with
  $u\Rightarrow\mv{\tau} u'$ and $\mc{WT}_V(u')\subseteq \mc{WT}_V(t')$.
  So $\sigma(u)\Rightarrow\mv{\tau} \sigma(u')$. By proviso (\ref{traces-prov4}) of the lemma,
  $x\notin \mc{WT}_V(\sigma(t')) \supseteq \mc{WT}_V(\sigma(u'))$. Assume, toward a contradiction,
  that $a^mb\in\mc{WT}(\sigma(u'))$ for some $b\in A$. Since $m \geq \depth_w(u)$,
  clearly there is a $k \leq m$ and a $y\mathbin\in V$ such that $a^k y \in \mc{WT}_V(u')$ and
  $a^{m-k}b \in \mc{WT}(\sigma(y))$. Since $a^k y\in\mc{WT}_V(u')\subseteq \mc{WT}_V(t')$,
  it follows that $a^mb \in \mc{WT}(\sigma(t'))$, which contradicts proviso (\ref{traces-prov4}) of the lemma.
  Hence we can take $\hat u=\sigma(u')$.
\item $y \in \mc{WT}_V(t)$ and $\sigma(y)\Rightarrow\mv{\tau} \hat t$ for some $y\mathbin\in V$.
  Lem.~\ref{trace-preservation} yields $y \mathbin\in \mc{WT}_V(u)$, so $\sigma(u)\Rightarrow\mv{\tau} \hat t$.
  Hence we can take $\hat u=\hat t$.\qedhere
\end{itemize}
\end{proof}

\begin{proposition} \label{no-traces2}
  Let $1<|A|<\infty$, and let the axiomatization $E$ be sound for BCCS($A$) modulo $\precsim_{\rm WIF}$.
  Suppose that for some terms $v,w$, action $a$ and $m>0$:
  \begin{enumerate}
  \item \label{notraces-prov1} $E\vdash  v \preccurlyeq w$;

  \item \label{notraces-prov2} $m \geq \max \{\depth_w(u) \mid t \preccurlyeq u \in  E\}$; and

  \item \label{notraces-prov4} $v\Rightarrow\mv{\tau} \hat v$ for a term $\hat v$
   without weak traces $x$ and $a^mb$ for any $b\in A$.
 \end{enumerate}
  Then $w\Rightarrow\mv{\tau}  \hat w$ for a term $\hat w$ without weak traces $x$ and $a^mb$ for any $b\in A$.
\end{proposition}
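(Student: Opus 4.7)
The plan is to prove Prop.~\ref{no-traces2} by induction on the derivation of $E\vdash v\preccurlyeq w$, closely following the blueprint of the proof of Prop.~\ref{key2cont} but adapted to the preorder setting: Lem.~\ref{no-traces1} replaces Lem.~\ref{key2}, and the side condition on weak completed traces is replaced by the weak-trace condition of proviso (\ref{notraces-prov4}). Five inference rules remain to be checked: reflexivity, axiom instance, transitivity, and closure under the operators $\_+\_$ and $\alpha\_$.

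Reflexivity is trivial, taking $\hat w=\hat v$. For an axiom instance we have $v=\sigma(t)$ and $w=\sigma(u)$ for some $t\preccurlyeq u\in E$ (or an equation of $E$, applied in either direction). Soundness of $E$ yields $t\precsim_{\rm WIF} u$, and proviso (\ref{notraces-prov2}) supplies $m\geq\depth_w(u)$, so feeding these into Lem.~\ref{no-traces1} produces the required $\hat w$. For transitivity $E\vdash v\preccurlyeq r$ and $E\vdash r\preccurlyeq w$, I first apply the induction hypothesis to the left sub-derivation, obtaining some $\hat r$ with $r\Rightarrow\mv{\tau}\hat r$ and no weak trace $x$ or $a^mb$; the pair $(r,\hat r)$ then satisfies proviso (\ref{notraces-prov4}), so the induction hypothesis applied to the right sub-derivation delivers the desired $\hat w$.

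The sum case $v=v_1+v_2$, $w=w_1+w_2$, with $E\vdash v_i\preccurlyeq w_i$, is handled by observing that any transition sequence $v\Rightarrow\mv{\tau}\hat v$ must be confined to a single summand: the very first $\tau$-step of $v$ selects one of $v_1,v_2$, after which the other summand is discarded. Hence $v_j\Rightarrow\mv{\tau}\hat v$ for some $j\in\{1,2\}$, and invoking the induction hypothesis on $E\vdash v_j\preccurlyeq w_j$ supplies $\hat w$ with $w_j\Rightarrow\mv{\tau}\hat w$, which is also a transition sequence of $w$.

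The main obstacle is the prefix case $v=\alpha v_1$, $w=\alpha w_1$, with $E\vdash v_1\preccurlyeq w_1$. Proviso (\ref{notraces-prov4}) forces $\alpha=\tau$, and $v\mv{\tau}v_1\Rightarrow\hat v$ splits into two sub-cases. If $v_1\Rightarrow\mv{\tau}\hat v$, the induction hypothesis on $v_1\preccurlyeq w_1$ directly yields $\hat w$ with $w_1\Rightarrow\mv{\tau}\hat w$, whence $w\mv{\tau}w_1\Rightarrow\mv{\tau}\hat w$. If instead $\hat v=v_1$, I must argue that $w_1$ itself already lacks the weak traces $x$ and $a^mb$, so that $\hat w:=w_1$ works; this is where Lem.~\ref{trace-preservation} is essential, applied to $v_1\precsim_{\rm WIF} w_1$ (soundness of $E$) to conclude $\mc{WT}(v_1)=\mc{WT}(w_1)$. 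This is also the step that exploits the hypothesis $|A|>1$, as that hypothesis is used in Lem.~\ref{trace-preservation}.
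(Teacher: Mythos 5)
Your proof is correct and follows essentially the same route as the paper's: induction on the derivation, with Lem.~\ref{no-traces1} handling axiom instances and Lem.~\ref{trace-preservation} (via soundness) handling the sub-case $\hat v=v_1$ of the prefix rule. The only difference is that you spell out in slightly more detail why a sequence $v_1+v_2\Rightarrow\mv{\tau}\hat v$ factors through a single summand, which the paper takes as immediate.
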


\begin{proof}
By induction on a derivation of $E\vdash v \preccurlyeq w$. Reflexivity is
trivial; we focus on the other cases.
\begin{itemize}

\item Suppose $E\vdash v\preccurlyeq w$ because $\sigma(t)=v$ and
$\sigma(u)=w$ for some $t\preccurlyeq u\in E$ and substitution
$\sigma$. The claim then follows by Lem.~\ref{no-traces1}.

\medskip

\item Suppose $E\vdash v\preccurlyeq w$ because $E\vdash
v\preccurlyeq u$ and $E\vdash u\preccurlyeq w$ for some $u$.
By induction, $u \Rightarrow\mv{\tau} \hat u$ for a term $\hat u$
without weak traces $x$ and $a^mb$ for any $b\in A$.. Hence, again by induction,
$w\Rightarrow\mv{\tau} \hat w$ for a term $\hat w$ without weak traces $x$ and $a^mb$ for any $b\in A$.

\medskip

\item Suppose $E\vdash v\preccurlyeq w$ because $v=v_1+v_2$ and
$w=w_1+w_2$ with $E\vdash v_1 \preccurlyeq w_1$ and $E\vdash
v_2\preccurlyeq w_2$. Since $v \Rightarrow\mv{\tau} \hat v$,
we have $v_1 \Rightarrow\mv{\tau} \hat v$ or $v_2
\Rightarrow\mv{\tau} \hat v$. Assume, without loss of generality,
that $v_1 \Rightarrow\mv{\tau} \hat v$.  By induction, $w_1 \Rightarrow\mv{\tau} \hat w$
for a term $\hat w$ without weak traces $x$ and $a^mb$ for any $b\in A$.
And $w \Rightarrow\mv{\tau} \hat w$.

\medskip

\item Suppose $E\vdash v \preccurlyeq w$ because $v=\alpha v_1$ and
$w=\alpha w_1$ with $E\vdash v_1\preccurlyeq w_1$.
By proviso (\ref{notraces-prov4}) of the proposition, $\alpha=\tau$,
and either $v_1=\hat v$ or $v_1 \Rightarrow\mv{\tau} \hat v$. In the first case,
by proviso (\ref{notraces-prov4}) of the proposition and Lem.~\ref{trace-preservation},
$w_1$ has no  weak traces $x$ and $a^mb$ for any $b\in A$. And $w \mv{\tau} w_1$.
In the second case, by induction, $w_1\Rightarrow\mv{\tau} \hat w$
for a term $\hat w$ without weak traces $x$ and $a^mb$ for any $b\in A$. And $w\Rightarrow\mv{\tau} \hat w$.
\qedhere
\end{itemize}
\end{proof}

\begin{thm} \label{thm:alphabetn}
If $1<|A|<\infty$, then the inequational theory of BCCS($A$) modulo $\precsim_{\rm WIF}$ does not have a finite basis.
\end{thm}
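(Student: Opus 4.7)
The plan is to apply the proof-theoretic technique using the infinite family of sound inequations $\tau(a^mx)+\Phi_m \preccurlyeq \Phi_m$ established just before the theorem statement, with Prop.~\ref{no-traces2} supplying the invariant that blocks derivation from any finite~$E$.

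Suppose, for a contradiction, that $E$ is a finite basis for BCCS($A$) modulo $\precsim_{\rm WIF}$. Choose $m \geq \max\{\depth_w(u) \mid t\preccurlyeq u \in E\}$. Since $\tau(a^mx)+\Phi_m \preccurlyeq \Phi_m$ is sound, each of its closed substitution instances is sound and hence derivable from $E$ by ground-completeness; $\omega$-completeness of $E$ then yields $E\vdash \tau(a^mx)+\Phi_m \preccurlyeq \Phi_m$. I then aim to contradict this by invoking Prop.~\ref{no-traces2} with $v = \tau(a^mx)+\Phi_m$ and $w = \Phi_m$.

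As the witness for proviso~(\ref{notraces-prov4}) of that proposition, I would take $\hat v = a^mx$, obtained via $v \mv{\tau} a^mx$. The weak traces of $a^mx$ are exactly $\varepsilon, a, a^2, \ldots, a^m, a^mx$, so $x$ is not a weak trace of $\hat v$ (the variable $x$ occurs only at depth $m$, not initially), and $a^mb$ is not a weak trace of $\hat v$ for any $b\in A$ (the sole length-$(m{+}1)$ weak trace ends in the variable symbol $x$, not in an action $b$). The proposition thus forces the existence of some $\hat w$ with $\Phi_m \Rightarrow\mv{\tau} \hat w$ that likewise has neither $x$ nor any $a^mb$ as a weak trace.

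It then suffices to enumerate the $\Rightarrow\mv{\tau}$-successors of $\Phi_m$ and rule out each one. The single-step $\tau$-transitions of $\Phi_m$ lead precisely to $a^mx+x$ and, for each $b\in A$, to $a^mx+a^mb\nil$; none of these targets has an initial $\tau$, so these exhaust the $\Rightarrow\mv{\tau}$-successors. But $a^mx+x$ has $x$ as a weak trace through its summand $x$, and $a^mx+a^mb\nil$ has $a^mb$ as a weak trace through its summand $a^mb\nil$. Every candidate fails the required condition, giving the contradiction. The serious work has already been done in Prop.~\ref{no-traces2} and its supporting Lems.~\ref{trace-preservation} and~\ref{newvariable3}, which propagate the ``missing trace'' invariant through the rules of inequational logic---the role of $|A|>1$ enters precisely through those lemmas, via the two distinct actions needed to encode variables. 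Given that machinery, the theorem reduces to the choice of $m$ and the short case analysis on the $\tau$-reachable states of $\Phi_m$ above.
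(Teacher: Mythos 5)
Your proposal is correct and follows exactly the paper's route: choose $m$ exceeding the weak depths of the axioms in a putative finite sound axiomatization $E$, and invoke Prop.~\ref{no-traces2} on the sound inequation $\tau(a^mx)+\Phi_m\preccurlyeq\Phi_m$ to show it is underivable. Your explicit verification that $\hat v=a^mx$ satisfies proviso~(3) and that every $\Rightarrow\mv{\tau}$-successor of $\Phi_m$ has $x$ or some $a^mb$ as a weak trace is precisely the detail the paper leaves implicit.
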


\begin{proof}
Let $E$ be a finite axiomatization over BCCS($A$) that is
sound modulo $\precsim_{\rm WIF}$. Let $m$ be greater than the
weak depth of any term in $E$.  According to Prop.~\ref{no-traces2}, the
inequation $\tau(a^mx) + \Phi_m \preccurlyeq \Phi_m$ cannot be
derived from $E$.  Yet it is sound modulo $\precsim_{\rm WIF}$.
\end{proof}

Likewise, the inequational theory of BCCSP($A$) modulo $\precsim_{\rm IF}$ also does not have
a finite basis in case of a non-singleton finite alphabet. If $A$ has
two distinct elements $a$ and $b$, then the cornerstone for this negative result
is the following infinite family of inequations, for each $m\geq 0$:
\[a^{m+1}x +  \Psi_m ~\preccurlyeq~ \Psi_m \]
with
\[\Psi_m ~=~ a(a^mx+x) + \sum_{b\in A} a(a^mx+a^mb\nil)\]
The proof of the following theorem, which can be found in \cite{CF08},
is very similar to the proof of Thm.~\ref{thm:alphabetn}, and is therefore omitted here.

\begin{thm}
If $1 < |A|<\infty$, then the inequational theory of BCCSP($A$) modulo $\precsim_{\rm IF}$ does not have a finite basis.
\end{thm}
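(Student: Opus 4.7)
The plan is to carry out the concrete-semantics counterpart of the proof of Theorem~\ref{thm:alphabetn}, with the initial action $a$ playing the role $\tau$ played in the weak setting, and $\mc{T}$, $\depth$ replacing $\mc{WT}$, $\depth_w$ throughout. Let $\mc{T}_V(t)$ denote the set of traces of $t$ that end in a variable, in the same spirit as in Section~\ref{sec:finite-alphabet}. Soundness of $a^{m+1}x + \Psi_m \preccurlyeq \Psi_m$ modulo $\precsim_{\rm IF}$ is routine: under a closed substitution $\rho$, the only transition requiring matching is $\rho(a^{m+1}x+\Psi_m) \mv{a} a^m\rho(x)$; if $\rho(x)=\nil$ the summand $a(a^mx+x)$ gives $\rho(\Psi_m)\mv{a} a^m\nil+\nil$ with trace set $\mc{T}(a^m\nil)$, otherwise $\rho(x)$ has some initial action $b\in A$ and $a(a^mx+a^mb\nil)$ yields $\rho(\Psi_m)\mv{a} a^m\rho(x)+a^mb\nil$ whose trace set coincides with $\mc{T}(a^m\rho(x))$ since $a^mb\in\mc{T}(a^m\rho(x))$.

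Using the probe substitution $\rho(z)=a^{\iota(z)\cdot m}b\nil$ for distinct $a,b\in A$, an injection $\iota\colon V\to \mathbb{Z}_{>0}$, and $m>\depth(u)$, I would then establish three lemmas: (i) $t\precsim_{\rm IF} u$ implies $\mc{T}_V(t)\subseteq \mc{T}_V(u)$; (ii) $t\precsim_{\rm IF} u$ and $t\mv{a} t'$ imply $u\mv{a} u'$ with $\mc{T}_V(u')\subseteq \mc{T}_V(t')$; and (iii) if additionally $m\geq\depth(u)$ and $\sigma(t)\mv{a} \hat t$ where $\hat t$ has no trace $x$ and no trace $a^mb$ for any $b\in A$, then $\sigma(u)\mv{a}\hat u$ with the same property. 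For (iii), case-split on whether $\sigma(t)\mv{a}\hat t$ arises from $t\mv{a} t'$ with $\sigma(t')=\hat t$ (apply (ii) together with the trace-decomposition analog of Remark~\ref{traces-substitution} to rule out $a^mb\in\mc{T}(\sigma(u'))$) or from an initial variable $y$ of $t$ with $\sigma(y)\mv{a}\hat t$ (apply (i), taking $\hat u=\hat t$).

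The inductive lift follows the template of Proposition~\ref{no-traces2}: by induction on a derivation $E\vdash v\preccurlyeq w$, with $m$ chosen to exceed the depth of every term in $E$, $v\mv{a}\hat v$ with $\hat v$ lacking both forbidden traces implies $w\mv{a}\hat w$ with the same property; the four rule cases go through unchanged, the prefix-congruence case applying only to $a$-prefixes. Applied to $v=a^{m+1}x+\Psi_m$ and $w=\Psi_m$: $v\mv{a} a^mx$ has neither trace $x$ nor any trace $a^mb$, while every $a$-derivative of $\Psi_m$ contains either trace $x$ (from the summand $a(a^mx+x)$) or some $a^mb$ (from $a(a^mx+a^mb\nil)$). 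Hence the inequation is not derivable from $E$, so no finite basis exists.

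The main obstacle is lemma (ii). In the weak case, $\rho(y)$ is a pure BCCSP term contributing no $\tau$-transitions to $\rho(u)$, so the matching weak transition must come from $u$'s own moves. In the concrete setting, $\rho(y)$ has its own $a$-transitions, so a priori the matching $\rho(u)\mv{a} q$ could arise with $q=a^{\iota(y)\cdot m-1}b\nil$. Ruling this out is a length-calibration argument: fitting $a^{\iota(y)\cdot m-1}b$ inside $\mc{T}(\rho(t'))$ would force some $a^jy'\in\mc{T}_V(t')$ with $j+1=(\iota(y)-\iota(y'))\cdot m$, a positive multiple of $m$; but $j\leq \depth(t')\leq \depth(u)-1\leq m-2$ gives $j+1<m$, a contradiction. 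Hence $q=\rho(u')$ for some $u\mv{a} u'$, and $\mc{T}_V(u')\subseteq\mc{T}_V(t')$ follows by an analogous length-matching on traces of $\rho(u')$ ending in $b$.
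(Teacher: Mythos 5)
Your proposal is correct and takes essentially the same route as the paper: the paper uses exactly the family $a^{m+1}x+\Psi_m\preccurlyeq\Psi_m$ and omits the proof with the remark that it is the concrete analogue of the proof of Thm.~\ref{thm:alphabetn} (referring to \cite{CF08}), which is precisely what you carry out, including the one genuinely new ingredient (the length-calibration ruling out a match of $\rho(u)\mv{a}q$ inside a substituted $\rho(y)$). One minor adjustment: in the prefix-congruence case of the inductive lift you need the \emph{reverse} inclusion $\mc{T}_V(w_1)\subseteq\mc{T}_V(v_1)$ as well, so your lemma (i) should be stated as an equality --- which is harmless, since $\precsim_{\rm IF}$ implies trace equivalence in both directions and the same probe substitution yields $\mc{T}_V(t)=\mc{T}_V(u)$.
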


\subsubsection{$|A|=1$.}
We prove that the inequational theory of
BCCS($A$) modulo $\precsim_{\rm WIF}$ does not have
a finite basis in case of a singleton alphabet. The cornerstone
for this negative result is the following infinite family of
inequations, for each $m\geq 0$:
\[ a^m x ~\preccurlyeq~  a^m x + x\]
If $|A|=1$, then these inequations are clearly sound modulo
$\precsim_{\rm WIF}$. In particular, for any closed substitution $\rho$,
$\mc{WT}(\rho(x))\subseteq \mc{WT}(\rho(a^m x))$.
\pagebreak[3]

For $m\geq 1$, the inequations above show that Lem.~\ref{trace-preservation}
fails if $|A|=1$. We now formulate a weaker variant of this lemma for $|A|=1$.

\begin{lemma}\label{trace-preservation2}
Let $A=\{a\}$. If $t \precsim_{\rm WIF} u$, then $\mc{WT}_V(t) \subseteq \mc{WT}_V(u)$.
\end{lemma}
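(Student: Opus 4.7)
The plan is to detect variable-ending weak traces of an open term by choosing a closed substitution whose behaviour under $\precsim_{\rm WIF}$ encodes them as specific weak impossible futures, adapting the two-action technique of Lemma~\ref{trace-preservation} to the single-action setting. Fix an injection $\iota:V\to\mathbb{Z}_{>0}$, let $m=\max(\depth_w(t),\depth_w(u))+1$, and define the closed substitution $\rho(z)=a^{\iota(z)\cdot m}\nil$ for every $z\in V$. For any candidate $a^k x\in\mc{WT}_V(t)$, set $j=k+\iota(x)\cdot m$ and consider the weak impossible future $(a^j,A^+)$, where $A^+=\{a^i\mid i\geq 1\}$.

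The central claim to prove is that for every term $v$ with $\depth_w(v)<m$,
\[
(a^j,A^+) \text{ is a weak impossible future of } \rho(v) \iff a^k x\in\mc{WT}_V(v).
\]
Since $\mc{WT}(s)\cap A^+=\emptyset$ is equivalent to $\mc{WT}(s)=\{\varepsilon\}$, the impossible future asks for a state reached via $a^j$ without any outgoing weak $a$-transition. As $j>\depth_w(v)$, no such path can live entirely inside $v$'s structure; it must, after some $k'\leq\depth_w(v)$ $a$-steps in $v$ ending at a state $v'$ with a variable $z$ as summand, enter the substituted subterm $\rho(z)=a^{\iota(z)\cdot m}\nil$ and traverse only its internal $a$-transitions. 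For the terminal state to have weak-trace set $\{\varepsilon\}$ the subterm must be fully exhausted, forcing $j-k'=\iota(z)\cdot m$. Combined with $j=k+\iota(x)\cdot m$ and the bounds $k,k'<m$, this gives $k-k'=(\iota(z)-\iota(x))\cdot m$, which forces both sides to vanish; injectivity of $\iota$ then yields $z=x$ and $k'=k$, i.e.\ $a^k x\in\mc{WT}_V(v)$. The converse direction is straightforward: if $a^k x\in\mc{WT}_V(v)$, a suitable path is obtained by going from $v$ to a state with $x$ as summand via $k$ $a$-steps and then traversing $\rho(x)$ completely down to $\nil$.

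Once the claim holds, the lemma is immediate: from $t\precsim_{\rm WIF} u$ we have $\rho(t)\precsim_{\rm WIF}\rho(u)$, so the weak impossible futures of $\rho(t)$ are included in those of $\rho(u)$, giving $a^k x\in\mc{WT}_V(t)\Rightarrow (a^j,A^+)$ is a WIF of $\rho(t)\Rightarrow (a^j,A^+)$ is a WIF of $\rho(u)\Rightarrow a^k x\in\mc{WT}_V(u)$. The main obstacle is the ``only if'' direction of the central claim: a careful case analysis on paths in $\rho(v)$, verifying that entering a $\rho(z)$-subterm is irreversible (it contains only $a$-transitions between shrinking copies of $a^i\nil$) and that $\nil$ is the unique state reachable inside $\rho(z)$ whose weak-trace set collapses to $\{\varepsilon\}$. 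The uniqueness of the decomposition $j=k'+\iota(z)\cdot m$ under $k,k'<m$ together with injectivity of $\iota$ is what replaces the role of the second action $b$ used in the proof of Lemma~\ref{trace-preservation}.
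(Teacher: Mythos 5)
Your proof is correct, but it takes a genuinely different route from the paper's. The paper exploits only the weak-trace component of $\precsim_{\rm WIF}$: with the same substitution $\rho(z)=a^{\iota(z)\cdot m}\nil$ it notes that $a^ky\in\mc{WT}_V(t)$ yields $a^{k+\iota(y)\cdot m}\in\mc{WT}(\rho(t))=\mc{WT}(\rho(u))$, and then asserts that decomposing this long trace of $\rho(u)$ forces $a^ky\in\mc{WT}_V(u)$. You instead exploit the impossible-futures component, testing for the pair $(a^{k+\iota(x)\cdot m},A^+)$ with $A^+$ the set of all nonempty traces. This buys something concrete: refusing $A^+$ pins the witnessing state down to $\nil$, so the path must traverse the substituted subterm $\rho(z)$ \emph{completely}, turning the decomposition into the exact equation $j-k'=\iota(z)\cdot m$; injectivity of $\iota$ together with $k,k'<m$ then gives $z=x$ and $k'=k$. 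A pure trace decomposition only yields the inequality $j-k'\leq\iota(z)\cdot m$, hence only $\iota(z)\geq\iota(x)$ (and, even when $z=x$, only $k'\geq k$), since for $|A|=1$ the set $\mc{WT}(\rho(u))$ is just an initial segment of $a^*$ determined by the weak depth and cannot by itself separate the contributions of different variables; so your argument in effect supplies the detail hidden behind the paper's ``it is not hard to see''. The only bookkeeping worth making explicit is that $\tau$-steps of $\rho(v)$ taken before entering a substituted subterm stay within the structure of $v$ (each $\rho(z)$ is $\tau$-free), which you use implicitly when splitting the path into its structural prefix and its descent through $\rho(z)$.
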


\begin{proof}
Select an injection $\iota: V \rightarrow \mathbb{Z}_{>0}$.
Let $m=\depth_w(u)+1$. Define the closed substitution $\rho$ by
$\rho(z)=a^{\iota(z){\cdot}m}\nil$ for all $z\in V$. Since $t \precsim_{\rm WIF} u$,
we have $\mc{WT}(\rho(t))=\mc{WT}(\rho(u))$. Consider any
$a^ky \in \mc{WT}_V(t)$. Then $a^{k+\iota(y){\cdot}m} \in \mc{WT}(\rho(t))=\mc{WT}(\rho(u))$.
As $m>\depth_w(u)\geq k$ and $\iota$ is an injection,
it is not hard to see that this implies $a^ky \in \mc{WT}_V(u)$.
\end{proof}
 
\begin{lemma} \label{1}
 Let $A=\{a\}$. Suppose that for some terms $t,u$, substitution $\sigma$, variable $x$ and $m>0$:
  \begin{enumerate}
  \item $t \precsim_{\rm WIF} u$;
  \item $m>\depth_w(u)$; and
  \item $x\in\mc{WT}_V(\sigma(u))$ and $a^kx\notin\mc{WT}_V(\sigma(u))$ for all $1\leq k < m$.
 \end{enumerate}
Then $x\in\mc{WT}_V(\sigma(t))$ and $a^kx\notin\mc{WT}_V(\sigma(t))$ for all $1\leq k < m$.
\end{lemma}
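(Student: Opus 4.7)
The plan is to follow the template of Lem.~\ref{no-traces1}, reducing the statement to an inclusion of $V$-traces provided by Lem.~\ref{trace-preservation2} and a closed-substitution argument that exploits the equality $\mc{WT}(\rho(\sigma(t)))=\mc{WT}(\rho(\sigma(u)))$ delivered by $\precsim_{\rm WIF}$.

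Since $t\precsim_{\rm WIF}u$, we have $\rho(\sigma(t))\precsim_{\rm WIF}\rho(\sigma(u))$ for every closed substitution $\rho$, hence $\mc{WT}(\rho(\sigma(t)))=\mc{WT}(\rho(\sigma(u)))$, and by Lem.~\ref{trace-preservation2} applied to $\sigma(t)\precsim_{\rm WIF}\sigma(u)$ we also have $\mc{WT}_V(\sigma(t))\subseteq\mc{WT}_V(\sigma(u))$. The second conclusion of the lemma---that $a^kx\notin\mc{WT}_V(\sigma(t))$ for all $1\leq k<m$---is then immediate: any such $V$-trace of $\sigma(t)$ would lift to a $V$-trace of $\sigma(u)$ forbidden by hypothesis~(3).

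The first conclusion, $x\in\mc{WT}_V(\sigma(t))$, is the substantive part. I would pick a closed substitution $\rho$ in the spirit of the proof of Lem.~\ref{trace-preservation2}: set $\rho(z)=a^{\iota(z)\cdot m}\nil$ for an injection $\iota:V\to\mathbb{Z}_{>0}$, so that $x\in\mc{WT}_V(\sigma(u))$ witnesses the ``barcoded'' trace $a^{\iota(x)\cdot m}\in\mc{WT}(\rho(\sigma(u)))$, hence $a^{\iota(x)\cdot m}\in\mc{WT}(\rho(\sigma(t)))$. Decomposing this trace inside $\rho(\sigma(t))$ forces a $V$-trace $a^jy\in\mc{WT}_V(\sigma(t))$ with $a^{\iota(x)\cdot m-j}\in\mc{WT}(\rho(y))$. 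Transplanting $a^jy$ to $\mc{WT}_V(\sigma(u))$ via the inclusion, using the injectivity of $\iota$ to force $y=x$ (distinct barcodes produce distinct residuals modulo $m$), and capping $j$ by $m>\depth_w(u)$ together with hypothesis~(3), one should pin down $(j,y)=(0,x)$, which gives the claim.

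The main obstacle will be excluding the residual decompositions: in particular, a \emph{variable-free} trace $a^{\iota(x)\cdot m}\in\mc{WT}(\sigma(t))$ and $V$-traces $a^jy\in\mc{WT}_V(\sigma(t))$ with $(j,y)\neq(0,x)$. The $V$-trace alternatives fall to hypothesis~(3) combined with the length residuals dictated by $\rho$; the pure $a^{\iota(x)\cdot m}$-trace of $\sigma(t)$ is the delicate case and likely requires an auxiliary substitution---or a judicious choice of $\iota$ making $\iota(x)\cdot m$ incompatible with any ``skeleton'' $a$-length that $\sigma(t)$ can achieve without descending into a substituted variable---to derive a contradiction from the trace equality. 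Threading these constraints together is the technical crux; the overall strategy itself is essentially the same as in Lem.~\ref{no-traces1}, with Lem.~\ref{trace-preservation2} standing in for the stronger Lem.~\ref{trace-preservation} available when $|A|>1$.
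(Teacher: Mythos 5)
Your treatment of the second conclusion is exactly the paper's: $\sigma(t)\precsim_{\rm WIF}\sigma(u)$ and Lem.~\ref{trace-preservation2} give $\mc{WT}_V(\sigma(t))\subseteq\mc{WT}_V(\sigma(u))$, and hypothesis~(3) does the rest. The gap is in the first conclusion, and it is genuine: you apply the barcode substitution to $\sigma(t)$ and $\sigma(u)$, but the only depth bound available is $m>\depth_w(u)=\depth_w(t)$; nothing bounds $\depth_w(\sigma(t))$ or $\depth_w(\sigma(u))$. Consequently, when you decompose $a^{\iota(x)\cdot m}\in\mc{WT}(\rho(\sigma(t)))$ as $a^jy\in\mc{WT}_V(\sigma(t))$ with $a^{\iota(x)\cdot m-j}\in\mc{WT}(\rho(y))$, the entry depth $j$ is capped only by $\depth_w(\sigma(t))$, not by $m$. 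Since $\rho(y)=a^{\iota(y)\cdot m}\nil$ has \emph{every} $a^i$ with $i\leq\iota(y)\cdot m$ as a weak trace, the residual condition yields only $j\geq(\iota(x)-\iota(y))\cdot m$, which is vacuous whenever $\iota(y)\geq\iota(x)$; and even if you make $\iota(x)$ maximal, it merely forces $j\geq m$ for $y\neq x$ --- and a $V$-trace $a^jy$ of $\sigma(t)$ with $j\geq m$, or $a^jx$ with $j\geq m$, is not excluded by hypothesis~(3). So no choice of $\iota$ pins down $(j,y)=(0,x)$, and the ``delicate case'' you flag (a variable-free trace of length $\iota(x)\cdot m$ in $\sigma(t)$) is only one of several unexcluded decompositions.

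The paper avoids all of this by working one level up. From $x\in\mc{WT}_V(\sigma(u))$ it first extracts a variable $y$ with $y\in\mc{WT}_V(u)$ and $x\in\mc{WT}_V(\sigma(y))$, and then uses the \emph{targeted} closed substitution $\rho(y)=a^m\nil$, $\rho(z)=\nil$ for $z\neq y$, applied to $t$ and $u$ themselves. Now $a^m\in\mc{WT}(\rho(u))=\mc{WT}(\rho(t))$ together with $m>\depth_w(t)$ forces the trace to enter a variable of $t$ at some depth $\ell<m$ with a \emph{nonempty} residual; since $y$ is the only variable with $\rho(y)\neq\nil$, this gives $a^\ell y\in\mc{WT}_V(t)$, hence $a^\ell x\in\mc{WT}_V(\sigma(t))$, and the already-established second conclusion forces $\ell=0$. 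The two ideas your proposal is missing are (i) factoring the occurrence of $x$ in $\sigma(u)$ through a single variable of $u$, so that the depth bound $m$ actually applies to the decomposition, and (ii) using a substitution that sends all other variables to $\nil$, so that a nonempty residual identifies the responsible variable uniquely instead of relying on barcode arithmetic.
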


\begin{proof}
Since $x\in \mc{WT}_V(\sigma(u))$, clearly there is a $y\in
\mc{WT}_V(u)$ with $x \in \mc{WT}_V(\sigma(y))$. Consider the closed
substitution $\rho$ defined by $\rho(y)=a^m\nil$ and $\rho(z)=\nil$ for all $z\neq y$.
Since $y\in \mc{WT}_V(u)$, we have $a^m \in \mc{WT}(\rho(u))$.
So $t \precsim_{\rm WIF} u$ implies $a^m\in\mc{WT}(\rho(t))$.
Since $m>\depth_w(u)=\depth_w(t)$, clearly
there are some $\ell < m$ and $z\in V$ such that $a^\ell z \in \mc{WT}_V(t)$
and $a^{m-\ell} \in \mc{WT}(\rho(z))$. As $\ell<m$, it follows from the definition of $\rho$
that $z=y$. Since $a^\ell y\in \mc{WT}_V(t)$ and $x\in \mc{WT}_V(\sigma(y))$,
it follows that $a^\ell x \in \mc{WT}_V(\sigma(t))$. By Lem.~\ref{trace-preservation2}, $a^k x
\not\in \mc{WT}_V(\sigma(t))$ for all $1\leq k < m$. Hence we obtain $\ell=0$.
\end{proof}

\begin{proposition} \label{lemmaalphabet1}
Let $A=\{a\}$, and let the axiomatization $E$ be sound for BCCS($A$) modulo $\precsim_{\rm WIF}$.
Suppose that for some terms $v,w$, variable $x$ and $m>0$:
 \begin{enumerate}
  \item \label{alphabet-prov1} $E\vdash  v \preccurlyeq w$;

  \item \label{alphabet-prov2} $m > \max \{\depth_w(u) \mid t \preccurlyeq u \in  E\}$; and

  \item \label{alphabet-prov3} $x \in \mc{WT}_V(w)$ and
  $a^k x \not\in \mc{WT}_V(w)$ for all $1\leq k < m$.
 \end{enumerate} Then $x \in \mc{WT}_V(v)$ and $a^k x \not\in \mc{WT}_V(v)$ for all $1\leq k < m$.
\end{proposition}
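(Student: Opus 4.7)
The plan is to perform induction on the derivation of $E \vdash v \preccurlyeq w$, in close analogy with the proof of Prop.~\ref{no-traces2}. Reflexivity is immediate. For the axiom case, where $v = \sigma(t)$ and $w = \sigma(u)$ for some $t \preccurlyeq u \in E$ and some substitution $\sigma$, soundness of $E$ gives $t \precsim_{\rm WIF} u$, and proviso (\ref{alphabet-prov2}) delivers $m > \depth_w(u)$, so Lem.~\ref{1} applies directly. Transitivity is handled by chaining: given $E\vdash v \preccurlyeq r$ and $E\vdash r \preccurlyeq w$, first apply the inductive hypothesis to $r \preccurlyeq w$ to transfer the property to $r$, then to $v \preccurlyeq r$ to transfer it to $v$.

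For the congruence rule for $+$, with $v = v_1 + v_2$, $w = w_1 + w_2$ and $E \vdash v_i \preccurlyeq w_i$, I would exploit $\mc{WT}_V(w) = \mc{WT}_V(w_1) \cup \mc{WT}_V(w_2)$. Since $x \in \mc{WT}_V(w)$, without loss of generality $x \in \mc{WT}_V(w_1)$; the proviso is then inherited by $w_1$, and the inductive hypothesis applied to $v_1 \preccurlyeq w_1$ yields $x \in \mc{WT}_V(v_1) \subseteq \mc{WT}_V(v)$ and $a^k x \notin \mc{WT}_V(v_1)$ for all $1 \leq k < m$. To rule out $a^k x \in \mc{WT}_V(v_2)$, invoke soundness of $E$ to get $v_2 \precsim_{\rm WIF} w_2$ and apply Lem.~\ref{trace-preservation2} to deduce $\mc{WT}_V(v_2) \subseteq \mc{WT}_V(w_2)$, so that the absence of $a^k x$ in $w_2$ transfers to $v_2$. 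Taking the union over the two summands then yields the desired conclusion for $v$.

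The prefix case, $v = \alpha v_1$ and $w = \alpha w_1$ with $E \vdash v_1 \preccurlyeq w_1$, might at first look delicate but resolves cleanly. When $\alpha = a$, every element of $\mc{WT}_V(aw_1)$ begins with the action $a$, while the one-symbol trace $x$ begins with a variable symbol, so $x \notin \mc{WT}_V(w)$, contradicting proviso (\ref{alphabet-prov3}); hence this sub-case cannot occur. When $\alpha = \tau$, weak traces ignore the leading $\tau$, so $\mc{WT}_V(\tau w_1) = \mc{WT}_V(w_1)$ and analogously for $v$, whence the inductive hypothesis applied to $v_1 \preccurlyeq w_1$ transports directly to $v$. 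The step I expect to be the real work is the sum case: without the appeal to Lem.~\ref{trace-preservation2}, the ``other'' summand $v_2$ could in principle harbour a forbidden trace $a^k x$, and it is precisely at this point that the singleton-alphabet hypothesis $A = \{a\}$ enters the argument.
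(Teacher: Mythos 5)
Your proposal is correct and follows essentially the same route as the paper: induction on the derivation, with Lem.~\ref{1} handling the axiom case and Lem.~\ref{trace-preservation2} supplying the exclusion of $a^kx$ in the sum case. The only (immaterial) difference is that in the sum case you apply Lem.~\ref{trace-preservation2} to $v_2 \precsim_{\rm WIF} w_2$ while the paper applies it to $v \precsim_{\rm WIF} w$ as a whole; both are valid since $E$ is sound.
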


\begin{proof}
By induction on a derivation of $E\vdash v \preccurlyeq w$. Reflexivity
is trivial; we focus on the other cases.
\begin{itemize}

\item Suppose $E\vdash v\preccurlyeq w$ because $\sigma(t)=v$ and
$\sigma(u)=w$ for some $t\preccurlyeq u\in E$ and substitution
$\sigma$. The claim then follows by Lem.~\ref{1}.

\medskip

\item Suppose $E\vdash v\preccurlyeq w$ because $E\vdash
v\preccurlyeq u$ and $E\vdash u\preccurlyeq w$ for some $u$. By
induction, $x \in \mc{WT}_V(u)$ and $a^k x \not\in \mc{WT}_V(u)$ for
all $1\leq k < m$. Hence, again by induction, $x \in \mc{WT}_V(v)$ and
$a^k x \not\in \mc{WT}_V(v)$ for all $1\leq k < m$.

\medskip

\item Suppose $E\vdash v\preccurlyeq w$ because $v=v_1+v_2$ and
$w=w_1+w_2$ with $E\vdash v_1 \preccurlyeq w_1$ and $E\vdash
v_2\preccurlyeq w_2$. Since $x \in \mc{WT}_V(w)$, we have $x \in
\mc{WT}_V(w_1)$ or $x \in \mc{WT}_V(w_2)$.  Assume, without loss of
generality, that $x \in \mc{WT}_V(w_1)$.  Since $a^k x \not\in
\mc{WT}_V(w)$ for all $1\leq k < m$, surely $a^k x \not\in
\mc{WT}_V(w_1)$ for all $1\leq k < m$.  By induction, $x \in
\mc{WT}_V(v_1)$, and hence $x \in \mc{WT}_V(v)$.  For all $1\leq k < m$
we have $a^k x \not\in \mc{WT}_V(w)$, and hence $a^k x \not\in
\mc{WT}_V(v)$, by Lem.~\ref{trace-preservation2}.

\medskip

\item Suppose $E\vdash v \preccurlyeq w$ because $v=\alpha v_1$ and
$w=\alpha w_1$ with $E\vdash v_1\preccurlyeq w_1$.
By proviso (\ref{alphabet-prov3}) of the proposition, $\alpha=\tau$,
$x \in \mc{WT}_V(w_1)$ and $a^k x \not\in \mc{WT}_V(w_1)$ for all $1\leq k < m$.
By induction, $x \in \mc{WT}_V(v_1)$ and $a^k x \not\in \mc{WT}_V(v_1)$ for all
$1\leq k < m$. Hence $x \in \mc{WT}_V(v)$ and $a^k x \not\in \mc{WT}_V(v)$ for
all $1\leq k < m$.
\qedhere
\end{itemize}

\end{proof}

\begin{thm}\label{thm:alphabet1}
If $|A|=1$, then the inequational theory of BCCS($A$) modulo $\precsim_{\rm WIF}$ does not have a finite basis.
\end{thm}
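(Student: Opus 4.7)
The plan is to imitate the proof of Theorem~\ref{thm:alphabetn}, using Proposition~\ref{lemmaalphabet1} in place of Proposition~\ref{no-traces2} and the cornerstone family $a^m x \preccurlyeq a^m x + x$ in place of the family used there. Given any finite axiomatization $E$ over BCCS($A$) that is sound modulo $\precsim_{\rm WIF}$, I will pick an $m$ strictly greater than the weak depth of every term occurring in an axiom of $E$, and then argue that the sound inequation $a^m x \preccurlyeq a^m x + x$ is not derivable from $E$.

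To apply Proposition~\ref{lemmaalphabet1} with $v = a^m x$ and $w = a^m x + x$, I will first verify its third proviso for the right-hand side: since $\mc{WT}_V(a^m x + x) = \{x, a^m x\}$, we have $x \in \mc{WT}_V(w)$ and $a^k x \notin \mc{WT}_V(w)$ for all $1 \leq k < m$. Supposing, towards a contradiction, that $E \vdash a^m x \preccurlyeq a^m x + x$, Proposition~\ref{lemmaalphabet1} would force $x \in \mc{WT}_V(a^m x)$. But $\mc{WT}_V(a^m x) = \{a^m x\}$, which does not contain $x$, yielding the desired contradiction. Hence no finite sound $E$ can be ground-complete for the whole cornerstone family, and a finite basis for $\precsim_{\rm WIF}$ cannot exist.

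Since the technical work has already been packaged into Proposition~\ref{lemmaalphabet1}, and the soundness of the cornerstone family was verified at the beginning of the subsection (using that under a singleton alphabet $\rho(x)$ has the shape $a^k\nil$, so $\mc{WT}(\rho(x)) \subseteq \mc{WT}(\rho(a^m x))$ and equality of weak trace sets follows), there is no substantive obstacle remaining. The proof is a short application fully parallel to that of Theorem~\ref{thm:alphabetn}; the only noteworthy point is that the degeneracy of the singleton alphabet forces a different cornerstone family than in the non-singleton case, precisely because Lemma~\ref{trace-preservation} fails here and must be replaced by its one-sided analogue Lemma~\ref{trace-preservation2}.
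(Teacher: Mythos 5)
Your proposal is correct and follows exactly the paper's own argument: pick $m$ larger than the weak depth of every term in the finite sound axiomatization $E$, verify that $w = a^mx + x$ satisfies the third proviso of Prop.~\ref{lemmaalphabet1} while $v = a^mx$ violates its conclusion (since $x \notin \mc{WT}_V(a^mx) = \{a^mx\}$), and conclude that the sound inequation $a^mx \preccurlyeq a^mx + x$ is underivable. The explicit verification of the proviso and the remark on soundness are slightly more detailed than the paper's two-line proof, but the route is identical.
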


\begin{proof}
Let $E$ be a finite axiomatization over BCCS($A$) that is
sound modulo $\precsim_{\rm WIF}$. Let $m$ be greater than the
weak depth of any term in $E$.  According to Prop.~\ref{lemmaalphabet1},
the inequation $a^mx \preccurlyeq a^m x + x$ cannot be derived
from $E$.  Yet, since $|A|=1$, it is sound modulo $\precsim_{\rm WIF}$.
\end{proof}

Likewise, the inequational theory of BCCSP($A$) modulo $\precsim_{\rm IF}$
does not have a finite basis in case of a singleton alphabet.
This negative result is based on the same infinite family of
inequations as for the weak case: $a^m x \preccurlyeq a^m x + x$ for each $m\geq 0$.
The proof of the following theorem is more or less identical to the
proof of Thm.~\ref{thm:alphabet1}, and is therefore omitted here.

\begin{thm}
If $|A|=1$, then the inequational theory of BCCSP($A$) modulo $\precsim_{\rm IF}$ does not have a finite basis.
\end{thm}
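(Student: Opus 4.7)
The plan is to parallel the proof of Thm.~\ref{thm:alphabet1} step by step, using the same infinite family of inequations $a^mx\preccurlyeq a^mx+x$ ($m\ge 0$), and replacing every weak notion by its concrete BCCSP counterpart: write $\mc{T}_V(t)$ for the set of traces of $t$ ending in a variable (the natural concrete analog of $\mc{WT}_V$), and replace $\mc{WT}$, $\depth_w$ and $\precsim_{\rm WIF}$ by $\mc{T}$, $\depth$ and $\precsim_{\rm IF}$. Soundness of $a^mx\preccurlyeq a^mx+x$ modulo $\precsim_{\rm IF}$ when $|A|=1$ is immediate: for any closed $\rho$ both $a^m\rho(x)$ and $a^m\rho(x)+\rho(x)$ have trace set $\{a^k:0\le k\le m+\depth(\rho(x))\}$, and every state reachable from the left via a prefix $a^k$ is also reachable from the right, so impossible futures propagate.

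The property to propagate through derivations is, as in Prop.~\ref{lemmaalphabet1}: if $x\in\mc{T}_V(w)$ and $a^kx\notin\mc{T}_V(w)$ for all $1\le k<m$, then the same holds of $v$. The right-hand side $a^mx+x$ satisfies the hypothesis, since $\mc{T}_V(a^mx+x)=\{a^mx,x\}$, while $\mc{T}_V(a^mx)=\{a^mx\}$ lacks $x$. So once preservation of the property under derivations is established, choosing $m$ larger than all term-depths in a given finite, sound axiomatization $E$ yields that $a^mx\preccurlyeq a^mx+x$ is not derivable from $E$, and the theorem follows. The key technical enabler is the concrete analog of Lemma~\ref{trace-preservation2}, namely that $t\precsim_{\rm IF}u$ implies $\mc{T}_V(t)\subseteq\mc{T}_V(u)$. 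This rests on an observation specific to impossible futures in BCCSP: on closed BCCSP($A$) terms, $p\precsim_{\rm IF}q$ already forces $\mc{T}(p)=\mc{T}(q)$. Indeed, $\mc{T}(p)\subseteq\mc{T}(q)$ follows from the impossible futures $(w,\emptyset)$ for $w\in\mc{T}(p)$; conversely, if some $w\in\mc{T}(q)\setminus\mc{T}(p)$ existed, $(\varepsilon,\{w\})$ would be an impossible future of $p$ but not of $q$, contradicting $p\precsim_{\rm IF}q$. This replaces the clause $\mc{WT}(s_1)=\mc{WT}(s_2)$ built into the definition of $\precsim_{\rm WIF}$.

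Given the observation, the concrete trace-preservation lemma is proved in full analogy with Lemma~\ref{trace-preservation2}: pick an injection $\iota:V\rightarrow\mathbb{Z}_{>0}$, let $m=\depth(u)+1$, take $\rho(z)=a^{\iota(z)\cdot m}\nil$, and argue that for any $a^ky\in\mc{T}_V(t)$ the trace $a^{k+\iota(y)\cdot m}$ lies in $\mc{T}(\rho(t))=\mc{T}(\rho(u))$, whence $a^ky\in\mc{T}_V(u)$. The concrete analog of Lemma~\ref{1} then proceeds as in the weak case: pick $y\in\mc{T}_V(u)$ with $x\in\mc{T}_V(\sigma(y))$, use the substitution $\rho(y)=a^m\nil$ and $\rho(z)=\nil$ for $z\neq y$ so that the only variable decomposition of $a^m\in\mc{T}(\rho(t))=\mc{T}(\rho(u))$ at some depth $\ell<m$ is forced to be through $y$; this produces an $a^\ell y\in\mc{T}_V(t)$, hence an $a^\ell x\in\mc{T}_V(\sigma(t))$; the trace-preservation lemma applied to $\sigma(t)\precsim_{\rm IF}\sigma(u)$ together with the hypothesis $a^kx\notin\mc{T}_V(\sigma(u))$ for $1\le k<m$ then pins down $\ell=0$. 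The subsequent induction on derivations (the concrete counterpart of Prop.~\ref{lemmaalphabet1}) and the final non-derivability argument are literal translations of the weak case.

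The main obstacle is the first step, namely establishing that $\precsim_{\rm IF}$ forces closed-term trace equality on BCCSP($A$), because this replaces the built-in trace-equality clause of $\precsim_{\rm WIF}$. Once it is available, the remaining substitution-based arguments transfer to the concrete setting with essentially no changes.
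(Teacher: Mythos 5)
Your proposal is correct and follows exactly the route the paper intends: the paper omits this proof, stating it is ``more or less identical'' to that of Thm.~\ref{thm:alphabet1}, and your translation (replacing $\mc{WT}_V$, $\depth_w$, $\precsim_{\rm WIF}$ by their concrete counterparts) is precisely that adaptation. You also correctly identify and discharge the one point where the translation is not purely mechanical, namely that the clause $\mc{WT}(s_1)=\mc{WT}(s_2)$ built into $\precsim_{\rm WIF}$ must instead be derived for $\precsim_{\rm IF}$ from the impossible futures $(w,\emptyset)$ and $(\varepsilon,\{w\})$.
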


\section{Conclusion} \label{sec:con}

We have introduced a method to transform an axiomatization for a concrete semantics in the context
of the process algebra BCCSP to an axiomatization for a corresponding weak semantics with regard to BCCS,
such that the properties ground-completeness and $\omega$-completeness are preserved.
Trace, (a form of) completed trace, failures and impossible futures semantics are within the realm of this transformation method.

Exploiting this approach, we obtained axiomatizations for the weak trace, completed trace and failures
preorders and equivalences.
Moreover, we performed a comprehensive and systematic study on the axiomatizability of concrete
and weak impossible futures semantics over BCCSP and BCCS\@.
Table~\ref{tab:summary} presents an overview, where $+$ indicates that a finite
axiomatization exists, while $-$ indicates that a finite axiomatization does
not exist.  The table expands in two dimensions:
ground-completeness vs.\ $\omega$-completeness and preorder vs.\ equivalence.
When necessary, we distinguish two categories, according to the cardinality of the alphabet $A$:
finite or infinite.

\begin{table}[htb]
\begin{center}
\begin{tabular}{|c|c|c|c|c|}
\cline{1-5}
 &  \multicolumn{2}{|c}{ground-completeness} & \multicolumn{2}{|c|}{$\omega$-completeness} \\
\cline{2-5} 
 &   \multicolumn{2}{|c|}{$1\leq |A|\leq \infty$}   &  $|A|\mathord=\infty$  &  $1\mathord\leq |A|\mathord<\infty$ \\
\hline
(concrete or weak) imp.\ futures preorder   &    \multicolumn{2}{|c|}{\bf +}   &   {\bf +} &  {\bf --}\\
\hline
\!(concrete or weak) imp.\ futures equivalence\!   &   \multicolumn{2}{|c|}{\bf --}   &   {\bf --} &  {\bf --}\\
\hline
\end{tabular}\caption{\label{tab:summary}Summary of results for impossible futures semantics}
\end{center}
\end{table}

Impossible futures semantics is the first example that, in the context of BCCSP/BCCS,
affords a ground-complete axiomatization modulo the preorder,
while missing a ground-complete axiomatization modulo the equivalence.

\subsection*{Acknowledgement} This research was initiated by questions from Jos Baeten,
on axiomatizing impossible futures semantics, and from David de Frutos-Escrig,
on relating the axiomatizability of concrete and weak impossible futures semantics.

\bibliographystyle{plain}

\end{document}